\numberwithin{equation}{section}
\newtheorem{theorem}{Theorem}[section]
\newtheorem{lemma}[theorem]{Lemma}
\newtheorem{proposition}[theorem]{Proposition}
\newtheorem{corollary}[theorem]{Corollary}
\newtheorem{definition}[theorem]{Definition}
\renewcommand{\Pr}{ \mathrm P}
\newcommand{\F}{{\cal F}}
\newcommand{\N}{\mathbb N}
\newcommand{\Z}{\mathbb Z}
\begin{document}

\title{On giant components and treewidth in the layers model}
\author{Uriel Feige
\thanks{
Department of Computer Science and Applied Mathematics, Weizmann Institute of Science, Rehovot 76100,
Israel. E-mail: {\tt uriel.feige@weizmann.ac.il}. Work supported in part by The Israel Science Foundation (grant No. 621/12) and by the I-CORE Program of the Planning and Budgeting Committee and The Israel Science Foundation (grant No. 4/11).}
\and Jonathan Hermon
\thanks{
Department of Statistics, UC Berkeley. E-mail: {\tt jonathan.hermon@stat.berkeley.edu}.}
\and
Daniel Reichman
\thanks{
Department of Computer Science and Applied Mathematics, Weizmann Institute of Science, Rehovot 76100,
Israel. E-mail: {\tt daniel.reichman@gmail.com}. Work supported in part by The Israel Science Foundation (grant No. 621/12) and by the I-CORE Program of the Planning and Budgeting Committee and The Israel Science Foundation (grant No. 4/11).}
}
\maketitle

\begin{abstract}
Given an undirected $n$-vertex graph $G(V,E)$ and an integer $k$, let $T_k(G)$ denote the random vertex induced subgraph of $G$ generated by ordering $V$ according to a random permutation $\pi$ and including in $T_k(G)$ those vertices with at most $k-1$ of their neighbors preceding them in this order. The distribution of subgraphs sampled in this manner is called the \emph{layers model with parameter} $k$. The layers model has found applications in studying $\ell$-degenerate subgraphs, the design of algorithms for the maximum independent set problem, and in bootstrap percolation.

In the current work we expand the study of structural properties of the layers model.
 We prove that there are $3$-regular graphs $G$ for which with high probability $T_3(G)$ has a connected component of size $\Omega(n)$. Moreover, this connected component has treewidth $\Omega(n)$. This lower bound on the treewidth extends to many other random graph models. In contrast, $T_2(G)$ is known to be a forest (hence of treewidth~1), and we establish that if $G$ is of bounded degree then with high probability the largest connected component in $T_2(G)$ is of size $O(\log n)$. We also consider the infinite two-dimensional grid, for which we prove that the first four layers contain a unique infinite connected component with probability $1$.

\end{abstract}

{\small
}
\newpage

\section{Introduction}
Given a finite graph $G(V,E)$, a permutation $\pi$ over its vertices and an integer $k \ge 1$, let $L_k(G,\pi)$ denote the $k$th layer of $G$ according to $\pi$, defined as the set of those vertices of $G$ that have exactly $k-1$ of their neighbors preceding them in $\pi$.  The union of the first $k$ layers is denoted by $T_k(G,\pi): = \bigcup_{i=1}^k L_i(G,\pi)$. By a slight abuse of notation we refer to the subgraph induced on $T_k$ also by $T_k$, and omit $G,\pi$ when clear from the context. We shall be interested in the case that $G$ is given and $\pi$ is chosen uniformly at random over all permutations, in which case $L_k(G)$  ($T_k(G)$, respectively) refer to the random variable corresponding to the set of vertices in the $k$-th level (subgraph of $G$ induced on first $k$ levels, respectively). The random permutation model is equivalent to the following local sampling model: every vertex $v$ of $G$ selects independently at random an ``age" $X_v$ from the uniform distribution $U[0,1]$, and then $L_k(G)$ is the random variable specifying the set of those vertices that have exactly $k-1$ younger neighbors. When dealing with infinite graphs we shall only use the local sampling model.

The above procedure for sampling vertices from a graph has several useful properties. For every graph $G$ and every permutation $\pi$ the graph $T_k(G,\pi)$ is $k$-degenerate \cite{Reichman}. Namely, every subgraph of $T_k(\pi)$ has a vertex of degree at most $k-1$. In particular, $T_1(\pi)$ is an independent set, and $T_2(\pi)$ is a forest.
Moreover, the expected number of vertices in $T_k$ is exactly $\mathbb{E}[|T_k|] = \sum_{v\in V} \min[1,\frac{k}{d_v + 1}]$, where $d_v$ denotes the degree in $G$ of $v$.
A well known consequence of the properties listed above is that every graph $G(V,E)$ has an independent set of size at least $\sum_{v\in V} \frac{1}{d_v + 1}$~\cite{Wei}. For additional properties and applications of the random permutation model see Section~\ref{sec:relatedwork}.

In this work we study connectivity properties of $T_k$ for small values of $k$. One aspect that we consider is the likely size of the largest connected component in $T_k$. Another aspect considered is the typical {\em treewidth} of $T_k$. (We briefly remind the reader of the definition of treewidth. A {\em tree decomposition} of a graph $G(V,E)$ is a tree $T$ whose nodes are labeled by subsets of vertices from $V$ (called {\em bags}) with the following two properties: every edge $(u,v)\in E$ is in some bag, and for every vertex $v \in V$ the bags containing $v$ form a connected subtree of $T$. The width of the tree decomposition is one less than the cardinality of the largest bag, and the treewidth of $G$, denoted by $tw(G)$ is the smallest width for which $G$ has a tree decomposition. It is well known that forests have treewidth~1.) Our main results refer to infinite sequences of graphs, in which $n$ denotes the number of vertices in the underlying graph. The term $o(1)$ denotes a term that tends to~0 as $n$ tends to infinity.

As we have remarked above, $T_2(G)$ is necessarily a forest and thus has treewidth at most~1. It turns out that when the degree of $G$ is bounded by some absolute constant $d$ the largest component in $T_2(G)$ is logarithmic in size:

\begin{theorem}
\label{thm:T2}
There is a constant $b$ such that for every $d$ and $n$, and every $n$-vertex graph of maximum degree $d$, with high probability the size of the largest connected component in $L_2(G)$ does not exceed $2^{bd}\log n$.
\end{theorem}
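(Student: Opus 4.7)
The plan is to prove the theorem by a first-moment (union bound) argument. The key structural observation, which reduces the question to counting rooted trees in $G$, is as follows. Since $L_2(G) \subseteq T_2(G)$ and $G[T_2(G)]$ is a forest (a fact already cited via the $2$-degeneracy of $T_2$), $G[L_2(G)]$ is itself a forest, so every connected component of $L_2(G)$ is a tree. Given such a component $C$, orient each edge of $G[C]$ from its older to its younger endpoint. Every $v \in L_2(G)$ has exactly one younger $G$-neighbor, so $v$ has out-degree at most $1$ in this orientation; and since $G[C]$ is a tree on $|C|$ vertices with $|C|-1$ edges, exactly one vertex $r \in C$ has out-degree $0$, namely the youngest vertex of $C$. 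For every other $v \in C$, the unique younger $G$-neighbor $p(v)$ must itself lie in $C$.

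Hence, if some component of $L_2(G)$ has size at least $s$, then there is a rooted subtree $(\tau, r)$ of $G$ on $s$ vertices such that (a) the ages form a heap with respect to the rooting, $X_{p(v)} < X_v$ for every $v \in V(\tau)\setminus\{r\}$, and (b) for every $v \in V(\tau)\setminus\{r\}$ and every $G$-neighbor $u \neq p(v)$ of $v$, $X_u > X_v$. For a fixed $(\tau,r)$ I would bound the probability of $(a)\wedge(b)$ by integration against the i.i.d.\ uniform ages. Condition (a) restricts the ages on $V(\tau)$ to the heap region. Given these ages, each external $G$-neighbor $u$ of a non-root vertex contributes a factor at most $1-\max_{v \in V(\tau)\setminus\{r\},\, v \sim u} X_v$. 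After matching each external constraint to a single vertex of $V(\tau)\setminus\{r\}$ (a routine combinatorial step absorbing overlaps at a constant-factor cost per vertex), the integrand acquires the product $\prod_{v \neq r}(1-X_v)^{d-\deg_\tau(v)}$. A leaves-to-root recursion using $\int_x^1 (1-y)^k\,dy = (1-x)^{k+1}/(k+1)$ then yields a bound of the form $C^s/(d^{s-1}\,s!)$ per rooted tree for an absolute constant $C$.

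Combining with the standard bound that a graph of maximum degree $d$ contains at most $n(ed)^{s-1}$ connected subsets of size $s$, and summing over the $s$ possible roots, the expected number of witnessing configurations is at most $n \cdot s \cdot (ed)^{s-1} \cdot C^s/(d^{s-1} s!) = O\!\left(n\cdot (eC)^s/(s-1)!\right)$. Choosing $s = 2^{bd}\log n$ with $b$ a sufficiently large universal constant drives this to $o(1)$, and Markov's inequality then gives the theorem. The slack $2^{bd}$ in the statement is generous enough to absorb whatever polynomial-in-$d$ constants the integration produces.

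The main obstacle will be making the per-tree integration uniform across all shapes of $\tau$ (paths, stars, caterpillars, and hybrids), since the local degrees $\deg_\tau(v)$ that appear in the exponents vary widely and the external neighborhoods of different vertices of $\tau$ can overlap in $G$. The leaves-to-root recursion described above is designed to be shape-agnostic, but the bookkeeping for overlapping externals (via a matching assigning each external constraint a private witness in $V(\tau)\setminus\{r\}$) and for vertices of small $G$-degree (where the external factor $(1-X_v)^{d-\deg_\tau(v)}$ is weaker, but is offset by the smaller value of $\Pr[v\in L_2]=1/(d_v+1)$) is where I expect the bulk of the technical work to sit.
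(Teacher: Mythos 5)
Your structural reduction is sound: $G[L_2]$ is a forest, each component is a tree rooted at its youngest vertex $r$, with the heap ordering and the external‑neighbor constraints exactly as you state, and a first‑moment bound over rooted subtrees of $G$ is a legitimate framework. The gap is the claim that the leaves‑to‑root recursion yields a shape‑agnostic per‑tree bound of order $C^s/(d^{s-1}s!)$. It does not. Take a caterpillar in a $d$‑regular, high‑girth $G$: a spine $v_1,\ldots,v_m$ (rooted at $v_1$) with $d-2$ pendant leaves at each spine vertex, so $s=m(d-1)$. Carrying out your own integral --- each leaf attached to $v_i$ contributes $\int_{x_{v_i}}^1(1-y)^{d-1}\,dy=(1-x_{v_i})^d/d$, and then integrating along the spine --- gives
\[
\Pr[\tau\ \text{witnesses}] \ \asymp\ \frac{1}{d^{m(d-2)}\prod_{j=1}^m\bigl(j(d-1)^2+1\bigr)}\ \asymp\ \frac{1}{d^{\,s+m}\,m!},\qquad m=\tfrac{s}{d-1},
\]
which exceeds $C^s/(d^{s-1}s!)$ by a factor of order $s!/(m!\,d^{m}C^{s})\gtrsim m^{m(d-2)}$, i.e.\ super‑exponentially in $s$. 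The reason is that the heap factor $\prod_v 1/|\tau_v|$ collapses when most subtrees are singletons, so bushy shallow trees have vastly larger heap probability than paths; the recursion is genuinely shape‑dependent. Making the union bound close requires matching the per‑shape probability against the number of subtrees of that shape (caterpillars and brooms are rare, paths are numerous), and also handling the matching of shared external neighbors carefully (the integrand $\prod_{v\neq r}(1-X_v)^{d-\deg_\tau(v)}$ is only valid when external neighborhoods are disjoint; charging each external to a single private witness weakens some exponents). None of this is routine.

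There is also an external consistency check that flags the error. Your claimed per‑tree bound combined with the subtree count $n(ed)^{s-1}$ gives $O(n(eC)^s/(s-1)!)=o(1)$ already for $s=\Theta(\log n/\log\log n)$, i.e.\ it would prove a bound of $o(\log n)$ on component sizes. But Proposition~\ref{pro:binarytree} (via Proposition~\ref{pro:wholetree}) exhibits bounded‑degree graphs in which $T_2$ --- and therefore also $L_2$, since a $T_2$‑component contains at most one $L_1$ vertex (necessarily its root), whose removal shrinks the component by at most a degree factor --- has a component of size $\Omega(\log n)$ w.h.p. So the truth is $\Theta(\log n)$ and any argument producing $o(\log n)$ must be wrong. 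The paper avoids all of this shape bookkeeping by a different decomposition: it dominates the component of the oldest vertex by its \emph{monotone component} $C_{mon}(v)$ (Proposition~\ref{pro:domination}), couples $C_{mon}(v)$ in any graph of maximum degree $d$ with $C_{mon}$ of the root of an infinite $d$‑ary tree (Corollary~\ref{cor:vardi}), and then analyzes the tree case directly by slicing $[0,1]$ into $3d$ age classes and using a subcritical branching‑process tail bound (Lemma~\ref{lem:vardi}). Finally, note that the theorem statement writes $L_2(G)$ but the abstract and the proof in the paper concern $T_2(G)$; your decomposition extends to $T_2$ (at most one $L_1$ vertex per component, the root), but that does not repair the main issue above.
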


There are graphs, the complete binary tree being one such example, for which $T_2$ is likely to have a connected component of size $\Omega(\log n)$. See Section~\ref{sec:bin} for more details.
For irregular graphs, it is unavoidable that the bounds in Theorem~\ref{thm:T2} depend on $d$. This can be seen by taking $G$ to be a collection of $\sqrt{n}$ disjoint stars, where each star has a central vertex of degree $\sqrt{n}$. With probability bounded away from $0$, at least one of the centers of the stars survives in $T_2(G)$, and then $T_2(G)$ has a connected component of size $\sqrt{n}$.

The next theorem shows that the properties of having small connected components and small treewidth, held by the first two layers, do not carry over to the first three layers.

\begin{theorem}
\label{thm:T3}
There is an infinite sequence of 3-regular graphs such that for some $\delta > 0$, with probability $1 - o(1)$ (over the choice of $\pi$) $T_3$ has a connected component of size at least $\delta n$. Moreover, with probability $1 - o(1)$ (over the choice of $\pi$) $T_3$ also has treewidth at least $\Omega (n)$.
\end{theorem}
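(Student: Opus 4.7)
I would take $G=G_n$ to be a random $3$-regular graph on $n$ vertices (equivalently, any explicit family of $3$-regular high-girth expanders such as a Lubotzky--Phillips--Sarnak Ramanujan graph). Such $G$ satisfies, with probability $1-o(1)$, constant vertex expansion for all small sets and girth $(1-o(1))\log_2 n$, so any ball of radius $\Theta(\log n)$ around a fixed vertex is a tree. The basic local fact is that in any $3$-regular graph, $v\in T_3$ iff $v$ is not the oldest vertex of its closed neighborhood $\{v\}\cup N(v)$; hence $\Pr(v\in T_3)=3/4$, the removed set $R:=V\setminus T_3$ is always an independent set of $G$, and a bounded-differences argument gives $|R|=(1+o(1))n/4$ with high probability.

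\noindent\textbf{Giant component.} Fix a vertex $v$ and explore the $T_3$-component of $v$ by BFS in $G$. Because the ball around $v$ is a tree up to radius $\Theta(\log n)$, the exploration can be coupled, with $o(1)$ total-variation error, with a branching process on the infinite $3$-regular tree. There a live vertex $u$ at depth $\ge 1$ has two unexplored neighbors, and each of them lies in $T_3$ with probability close to $3/4$ even after conditioning on the age of $u$; so the offspring distribution is essentially $\mathrm{Bin}(2,3/4)$ with mean $3/2>1$, and the branching process is supercritical with survival probability $q>0$. Hence for any $v$, with probability $\ge q$ the $T_3$-component of $v$ has size $\Omega(\log n)$, and the union $W$ of $T_3$-components of size $\ge c\log n$ satisfies $|W|=\Omega(n)$ whp. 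To merge these large pieces into a single giant component I would use the expansion of $G$: assuming for contradiction that no $T_3$-component has size $\ge\delta n$, greedily union small components to form $S\subseteq W$ with $|S|\in[n/3,2n/3]$; every $G$-edge from $S$ to $V\setminus S$ must then end in $R$, so the vertex expansion of $G_n$ on sets of this size (which for a suitably chosen $G_n$ exceeds $3/4$) would force $|R|=\Omega(n)$ at a rate incompatible with $|R|=(1+o(1))n/4$.

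\noindent\textbf{Treewidth.} The same cut-and-boundary argument, applied now to balanced vertex separators of the giant component $\C$, shows that any vertex set of size $o(n)$ whose removal splits $\C$ into parts of size $\Theta(n)$ would create a linear-sized $S\subseteq\C$ with sublinearly many $G$-neighbors outside $S\cup R$, contradicting the expansion of $G$ once more. Since $tw(H)$ is at least the minimum size of a balanced vertex separator of $H$, this gives $tw(T_3)\ge tw(\C)=\Omega(n)$, handling both conclusions of the theorem simultaneously.

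\noindent\textbf{Main obstacle.} The events $\{v\in T_3\}$ are not independent: they are determined by the ages of $v$ and its three neighbors, and the correlations are nontrivial --- for adjacent $u,v$ one has $\Pr(u,v\in T_3)=1/2<(3/4)^2$, i.e.\ the events are negatively correlated, so the usual comparison with Bernoulli site percolation by FKG is not automatic. Making the branching-process coupling work for $\Theta(\log n)$ generations with $o(1)$ total-variation error, while tracking the conditioning on the parent's age and keeping the mean offspring strictly above $1$, is in my view the main technical step. A secondary difficulty is that the expansion-based aggregation and treewidth arguments demand that the vertex expansion of $G_n$ on linear-sized sets strictly exceeds the expected density $1/4$ of $R$; a generic $3$-regular expander does not suffice, so the proof must either exploit a carefully chosen $G_n$ or be complemented by a sprinkling / two-round exposure of the permutation that merges the large components using a reserve of edges left unexposed in the first round.
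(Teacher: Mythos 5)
Your proposal takes a genuinely different route from the paper's. The paper chooses $G_n$ to be an $n$-cycle $C$ plus a uniformly random perfect matching $M$ (a model contiguous with the random $3$-regular graph, so your choice of graph is essentially equivalent), and exploits the decomposition: it first fixes $\pi$ and analyzes the component structure of $T_2(G(V,C))$ on the cycle alone, then treats the still-unexposed matching $M$ as the pairing in a configuration model on an auxiliary multigraph $H$ whose vertices are the $T_2$-components of the cycle (degree = component size) together with the deleted cycle vertices (degree $1$). The Molloy--Reed criterion, verified by a short computation of the degree distribution of $H$, then hands over a giant component in $H$ and hence in $T_3(G)$; linear treewidth follows by a general reduction (Lemma~\ref{lem:tw_main} and Theorem~\ref{thm:random_tw}) that turns ``giant component with exponentially small failure probability under site percolation'' into ``linear treewidth.'' This completely sidesteps both the branching-process coupling and any expansion estimate.

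The gap in your proposal is the one you half-concede in the ``Main obstacle'' paragraph, but it is more serious than a missing technicality: the expansion-based merging and separator steps require that \emph{every} set $S$ with $|S|\in[n/3,2n/3]$ have vertex boundary $|\partial_V(S)|>n/4$, i.e.\ vertex expansion strictly exceeding $3/4$ at $|S|=n/3$. No $3$-regular graph family is known to satisfy this. For $d=3$ the spectral (expander mixing) bound is vacuous at linear scales: even for Ramanujan graphs with $\lambda_2=2\sqrt{2}$, the mixing-lemma lower bound on $e(S,V\setminus S)$ at $|S|=n/3$ is negative, and the Cheeger bound gives only $h_E\gtrsim(3-\lambda_2)/2\approx 0.08$, hence $|\partial_V(S)|\gtrsim 0.03\,|S|$, nowhere near $3/4$. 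The same shortfall dooms the separator argument for treewidth, since removing a separator $Y$ with $|Y|=o(n)$ only yields $\partial_V^G(S)\subseteq Y\cup R$ with $|Y\cup R|=n/4+o(n)$, and again you would need $|\partial_V^G(S)|>n/4+o(n)$. Your suggested fix --- sprinkling via two-round exposure of $\pi$ --- is not routine in the layers model because $\pi$ is a single global permutation, not a product of independent coins; the paper gets the effect of ``sprinkling'' instead by exposing $\pi$ and $M$ separately, which is a cleaner decoupling of two genuinely independent sources of randomness. If you want to keep your local-exploration route, the natural repair would be to replace the expansion step with a standard supercritical BFS/exploration argument pushed to linear size (as in Krivelevich--Sudakov style proofs of the giant component), rather than merging logarithmic pieces by an isoperimetric inequality; and for the treewidth, to invoke a lemma of the form of Lemma~\ref{lem:tw_main} rather than an explicit separator bound.
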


Observe that for a graph of maximum degree~2 (composed of paths and cycles), for every $\pi$ we have that $T_3(\pi) = G$. Hence for these graphs showing the existence of a large connected component is easy (it suffices that $G$ itself has a large connected component). However, such graphs have treewidth bounded by~2, hence they cannot serve as examples showing that for some graphs $T_3$ is likely to have large treewidth.

In proving that the treewidth is large we shall use the following lemma (for a proof see Section \ref{sec:treewidth}) which connects between the treewidth of a random graph and the probability with which it has a giant component. Given a parameter $p \in (0,1)$ and a graph $G$ we refer to $G_p$ as the graph obtained by deleting independently every vertex with probability $1-p$ and keeping it otherwise with probability $p$.

\begin{lemma}
\label{lem:tw_main}
Let $G(V,E)$ be an $n$-vertex undirected graph and suppose there is $p \in (0,1)$ such that there is a connected component of size $\zeta n$ in $G_{p}$ with probability at least $1-c^n$ where $\zeta,c \in (0,1)$ are constants that may depend on $p$ but not on $n$. Then for every $q \in (p,1],$ $G_q$ has treewidth $\Omega(n)$ with probability $1-\exp(-\Omega(n))$.
\end{lemma}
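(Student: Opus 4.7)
The plan is a two-step argument: first establish a robustness sublemma (no small vertex deletion destroys a linear-size connected piece of $G_q$), then convert this into a treewidth lower bound via iterated balanced separators.

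\textbf{Sublemma.} I claim that for some sufficiently small $\alpha > 0$ depending only on $c$ and $p$, with probability $1 - \exp(-\Omega(n))$ every $S \subseteq V$ with $|S| \leq \alpha n$ leaves a connected component of size at least $\zeta n$ in $G_q \setminus S$. Couple $G_p$ and $G_q$ in the natural monotone way: each vertex $v$ draws an independent $U_v \sim U[0,1]$ and lies in $V(G_\rho)$ iff $U_v \leq \rho$, so that $V(G_p) \subseteq V(G_q)$ always. Fix $S \subseteq V$ with $|S| \leq \alpha n$. Conditional on the event $B = \{V(G_p) \cap S = \emptyset\}$, which has probability exactly $(1-p)^{|S|}$, the random graph $G_p$ has the distribution of the $p$-percolation $G[V\setminus S]_p$ of the induced subgraph $G[V \setminus S]$. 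Combining the hypothesis with the elementary bound $\Pr[E \mid B] \leq \Pr[E]/\Pr[B]$ yields
\begin{equation*}
\Pr\!\bigl[\,G[V\setminus S]_p \text{ has no component of size } \zeta n\,\bigr]
\;\leq\; \frac{c^{\,n}}{(1-p)^{|S|}} \;\leq\; \Bigl(\frac{c}{(1-p)^\alpha}\Bigr)^{\!n} \;=:\; c_1^{\,n},
\end{equation*}
with $c_1 < 1$ provided $\alpha < \log(1/c)/\log(1/(1-p))$. Since $V(G_p) \subseteq V(G_q)$ in the coupling, any component of $G[V \setminus S]_p$ is itself a connected subset of $G_q \setminus S$, so the same exponential bound holds for $G_q \setminus S$. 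A union bound over the $\binom{n}{\leq \alpha n} \leq 2^{H(\alpha) n + o(n)}$ choices of $S$ (where $H$ denotes the binary entropy) yields total failure probability $\exp(-\Omega(n))$, provided $\alpha$ is also chosen small enough that $H(\alpha) + \log_2 c_1 < 0$. Such $\alpha$ exists because $H(\alpha) \to 0$ and $\log_2(1/c_1) \to \log_2(1/c) > 0$ as $\alpha \to 0^+$.

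\textbf{From the sublemma to treewidth.} Assume the sublemma event holds, and suppose for contradiction that $tw(G_q) \leq k$. The standard balanced-separator property (any graph of treewidth at most $k$ admits a separator of size at most $k+1$ whose removal leaves all components of at most half the size, and this applies hereditarily to induced subgraphs) can be applied iteratively: starting from $G_q$, repeatedly split any component of size $> \zeta n / 2$ via such a separator. This produces a set $S^{\ast} \subseteq V(G_q)$ with $|S^{\ast}| \leq O(k/\zeta)$ such that every component of $G_q \setminus S^{\ast}$ has size at most $\zeta n / 2$. If $k < \alpha \zeta n / C$ for a suitable absolute constant $C$, then $|S^{\ast}| < \alpha n$, and the sublemma would supply a component of size at least $\zeta n > \zeta n / 2$ in $G_q \setminus S^{\ast}$, a contradiction. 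Hence $tw(G_q) = \Omega(n)$.

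\textbf{Main obstacle.} The sublemma is the delicate step, and the exponential strength of the hypothesis is essential there: the conditioning loss factor $(1-p)^{-|S|}$ grows exponentially in $|S|$, so only an $\exp(-\Omega(n))$ starting bound on the bad event can both absorb this loss and still accommodate the exponential union bound over the subsets $S$ of size up to $\alpha n$. A merely $1 - o(1)$ or polynomially small bound in the hypothesis would not close this calculation. The constant hidden in $\Omega(n)$ depends on $p$, $c$, $\zeta$ but not on $q$, and the assumption $q > p$ enters only through the monotone coupling $V(G_p) \subseteq V(G_q)$.
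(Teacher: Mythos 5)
Your proof is correct and rests on the same two pillars as the paper's: (i) iterated balanced separators turn a treewidth upper bound into a small set $S^\ast$ whose removal leaves $G_q$ with no linear-size component, and (ii) conditioning the $p$-percolation on a small set $S$ being absent costs only a factor of $(1-p)^{-|S|}$, which the hypothesized $c^n$ bound on the failure probability of $G_p$'s giant component can absorb. The presentational difference is that you isolate (ii) as a stand-alone robustness sublemma proved via a union bound over all $\binom{n}{\le \alpha n}$ candidate sets $S$, whereas the paper argues by contradiction: it two-stage exposes $G_q$ and then $G_p$, works with the single \emph{random} separator-set $\bigcup_i S_i$ determined by $G_q$, and shows that if small separators existed with probability at least ${c'}^n$ then $G_p$ would lack a giant component with probability strictly greater than $c^n$, contradicting the hypothesis. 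Your route pays a slightly worse constant (the entropy $H(\alpha)$ must also be beaten in the union bound) but cleanly separates the probabilistic step from the deterministic combinatorics, which is arguably easier to read; the paper's route avoids the union bound over $S$ entirely. One step you leave implicit and should state: the equality $\Pr[G_q\setminus S \text{ has no } \zeta n\text{-component}]=\Pr[G_q\setminus S \text{ has no } \zeta n\text{-component}\mid B]$ holds because $G_q\setminus S$ is a function of $\{U_v : v\notin S\}$ while $B$ is a function of $\{U_v : v\in S\}$, so the two are independent; without this observation the passage from the conditional bound to the unconditional one is not justified. Also, the claimed $|S^\ast|=O(k/\zeta)$ may carry an extra $\log(1/\zeta)$ factor depending on how the iterated-separator recursion is counted, but since $\zeta$ is a constant this does not affect the conclusion.
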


Lemma~\ref{lem:tw_main} can be used in order to establish linear treewidth in a wide range of models for random graphs, and not only for the family of graphs referred to in Theorem~\ref{thm:T3}. See for example Theorem~\ref{thm:random_tw} in Section~\ref{sec:treewidth}.

We also initiate a study of $T_k$ on the two-dimensional infinite grid ${\Z}^2$.
\begin{theorem}
\label{thm:grid}
The first four layers of ${\Z}^2$ will have a unique infinite connected component with probability~1.
\end{theorem}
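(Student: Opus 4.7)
The plan is to recognize $T_4(\mathbb{Z}^2)$ as a supercritical finite-range dependent site percolation, and then apply standard percolation technology for existence and uniqueness of an infinite component. A vertex $v \in \mathbb{Z}^2$ lies in $T_4$ iff at least one of its four neighbors is older, equivalently iff $v$ is not a local maximum of the age field $\{X_u\}_{u \in \mathbb{Z}^2}$. Consequently, $\mathbf{1}\{v \in T_4\}$ is a deterministic function of the five ages at $v$ and its four neighbors, so the random field $\{\mathbf{1}\{v \in T_4\}\}_{v \in \mathbb{Z}^2}$ is translation-invariant, ergodic (a local function of an i.i.d.\ field), finite-range dependent (indicators at pairs of vertices at $L^1$-distance $\ge 3$ are independent), and has marginal density $\Pr(v \in T_4) = 4/5$.

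For the existence of an infinite component I would run a block renormalization argument. For $L$ large, tile $\mathbb{Z}^2$ into $L \times L$ boxes and declare a box \emph{good} if its slight enlargement contains a connected subset of $T_4$ that crosses it in both coordinate directions and meets the corresponding crossing clusters of any adjacent good box (so that two adjacent good boxes automatically merge their crossings in $T_4$). Two applications of the Liggett-Schonmann-Stacey stochastic domination theorem are then natural: first applied to the original $T_4$-field to obtain a Bernoulli lower bound strong enough to produce box-crossings of probability approaching $1$ via RSW-type arguments, and second applied to the (still finite-range dependent) good-box field, whose marginal tends to $1$ as $L \to \infty$. This yields a supercritical Bernoulli site percolation on the coarse lattice, whose infinite cluster of good boxes forces an infinite connected component in $T_4$.

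For uniqueness I would invoke the Burton-Keane uniqueness theorem. Translation invariance and ergodicity are immediate from the i.i.d.\ age field, and the finite-energy property holds because, conditionally on the ages outside any finite set $F$, the ages on $F$ remain i.i.d.\ uniform, so any prescribed pattern of $T_4$-membership on $F$ occurs with positive conditional probability. The main obstacle is the renormalization step: one must quantitatively establish that the good-box probability tends to $1$, which reduces to verifying that direct Liggett-Schonmann-Stacey domination of our $2$-dependent field with marginal $4/5$ yields a Bernoulli density exceeding the site percolation threshold $p_c(\mathbb{Z}^2)$; if the direct bound falls short, one bootstraps through intermediate block scales until the coarse field is manifestly supercritical.
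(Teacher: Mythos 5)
Your proposal takes a genuinely different route from the paper, and it has a real gap in the existence part.

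\textbf{Where your approach differs.} The paper does not renormalize $T_4$ directly. Instead it passes to the complement $L_5(\Z^2) = \Z^2 \setminus T_4(\Z^2)$, the set of local maxima of the age field, and exploits the fact that $L_5$ is an independent set in $\Z^2$. By Lemma~\ref{lem:square}, $(0,0)$ fails to lie in an infinite $T_4$-cluster iff a $*$-cycle of $L_5$-vertices surrounds $(0,0)$. Because $L_5$ is independent, any such $*$-cycle must lie entirely in $V_{even}$ or entirely in $V_{odd}$, and each of $\Z^2_{even}$, $\Z^2_{odd}$ is isomorphic to $\Z^2$ (Lemma~\ref{lem: ZevenisomorphictoZ2}). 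The paper then shows (Lemma~\ref{lem:order}) that $L_5 \cap V_{even}$ is stochastically dominated by i.i.d.\ Bernoulli$(1/2)$, which is strictly subcritical for site percolation on $\Z^2$ (Higuchi, van den Berg--Ermakov), so the surrounding-cycle probability tends to zero. Uniqueness is then obtained from quantitative annulus-crossing estimates plus Borel--Cantelli (Theorem~\ref{thm: uniquenessgrid}), not from Burton--Keane. Your uniqueness-via-Burton--Keane plan is a perfectly reasonable alternative once existence is in hand; ergodicity and insertion tolerance are indeed available (though note that arbitrary $T_4$-patterns on a finite set are \emph{not} all achievable, since $L_5$ must be independent --- insertion tolerance for opening a site is what you actually need and have).

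\textbf{The gap.} The existence step as you wrote it does not close. Liggett--Schonmann--Stacey domination converts marginal density and dependence range into an i.i.d.\ lower bound, but its explicit guarantees are far from tight: for a $2$-dependent field on $\Z^2$ with marginal only $4/5$, the LSS-guaranteed Bernoulli density is well below $p_c^{site}(\Z^2) \approx 0.593$, so the first application of LSS does not by itself produce supercritical box crossings. The fallback you suggest --- ``bootstrap through intermediate block scales until the coarse field is manifestly supercritical'' --- is circular: the good-box probability tends to $1$ \emph{only if} $T_4$ already percolates, which is precisely what you are trying to establish; LSS cannot manufacture supercriticality from a marginal density alone, and RSW at a possibly subcritical Bernoulli parameter does not yield crossing probabilities tending to $1$. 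The paper sidesteps all of this by a structural observation that your generic treatment discards: the complement $L_5$ is an independent set, which forces the parity decomposition and brings the needed density ($1/2$) safely below the site-percolation threshold. Without some analogous structural input, your renormalization cannot get started.
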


\subsection{Related work}
\label{sec:relatedwork}

The properties of the layers model were used in~\cite{FR} in designing algorithms for finding large independent sets in graphs. Let $\alpha(G)$ denote the size of the maximum independent set in $G$. Given $G$ and an integer $k$, one generates at random $T_k$ as in the random permutation model, and then applies an approximation algorithm to find a large independent set in $T_k$.
As $k$ grows, $\alpha(T_k)$ becomes a better approximation for $\alpha(G)$.
Observe that $\alpha(T_2)$ can be computed exactly in polynomial time, because $T_2$ is a forest. In~\cite{FR} an algorithm was presented for approximating $\alpha(T_3)$ within a ratio of~$\frac{7}{9}$. It is based on the observation that the expected number of edges in $T_k$ is at most $\frac{k-1}{2}\mathbb{E}[|T_k|]$, and hence the average degree in $T_3$ is not expected to exceed~2. A question that was left open in~\cite{FR} is whether $\alpha(T_3)$ can be approximated within ratios better than $\frac{7}{9}$, perhaps even arbitrarily close to~1. This would indeed hold if one could show that $T_3$ is likely to have small treewidth (sublinear in the number of vertices of $T_3$). Unfortunately, our Theorem~\ref{thm:T3} establishes that there are graphs for which $T_3$ is likely to have linear treewidth. We remark here that in the context of the work of~\cite{FR}, it is important that Theorem~\ref{thm:T3} addresses classes of graphs of minimum degree at least~3, because a preprocessing stage of the algorithms of~\cite{FR} eliminates all vertices of degree at most~2 from $G$ before employing the random permutation.

The random permutation approach has applications beyond those of finding independent sets. In~\cite{ABW,Reichman} it was observed that for every graph $G$ and permutation $\pi$,
$T_k(G,\pi)$ is a {\em contagious} set for $G$ with respect to {\em bootstrap percolation} with parameter $k$. Namely, if every vertex of $T_k(G,\pi)$ is initially activated, and thereafter in an iterative manner every vertex that has at least $k$ active neighbors becomes active as well, then all vertices of $G$ eventually become active. Using this approach, one can obtain upper bounds on the size of the smallest contagious set in $G$, in terms of the degree sequence of $G$.

The first two layers are related to several works that upper bound the number of queries in property testing, local computation and online algorithms~\cite{vardi,NO}. The general framework is as follows. Given an $n$-vertex graph $G(V,E)$ with maximal degree $d$ (which is a constant independent of $|V|$), every vertex $v$ is assigned independently a label distributed as a uniform $[0,1]$ random variable $l(v)$. Given $v$, let $C_{mon}(v)$ be the set of vertices reachable from $v$ by a monotone decreasing path of vertices. The complexity of the algorithms in~\cite{vardi} depends on the size of $C_{mon}(v)$. Extending ideas from~\cite{NO}, \cite{vardi} prove that with high probability \emph{for every vertex} $|C_{mon}(v)|=O(c(d)\log n)$, where $c(d)$ is a constant depending only on $d$. Theorem~\ref{thm:T2} can be deduced from this result of~\cite{vardi}, though we provide a self contained proof (based on techniques from~\cite{vardi}) of this theorem.
It is also proved in~\cite{vardi} that there are $n$-vertex graphs for which with probability greater than $\frac{1}{n}$ there exists a vertex $v$ such that $|C_{mon}(v)|>\frac{\log n}{\log \log n}$. Our Proposition~\ref{pro:binarytree} can be shown to imply an improved lower bound of $|C_{mon}(v)| \ge \Omega(\log n)$.

A connected component that contains a linear fraction of the vertices of a graph is often referred to as a {\em giant component}. Much of our work concerns the likelihood of having a giant component in $T_k$ for small values of $k$. There has been extensive work on the formation of giant components in random graph models (see for example~\cite{Fount,KSud,Janson,Molloy}), and we mention here two theorems that are most relevant to our work.

One theorem concerns the random configuration model in graphs which are allowed to have parallel edges and self loops. Let $\bar{d}$ be a sequence of $n$ nonnegative integers and let $d_i$ be the $i$th element of $\bar{d}$ (we assume $\sum d_i$ is even). In the configuration model $G^*(n,\bar{d})$ each vertex has $d_i$ half-edges and we combine the half edges of the pairs by choosing uniformly at random a matching of all half-edges. Given a multigraph sampled according to the configuration model, Molloy and Reed~\cite{Molloy} provide a criterion for the existence of a giant component. The exact statement of their results involves some technical conditions and parameters that are omitted here.

\begin{theorem}
\label{thm:MolloyReed}
Given a degree sequence $\bar{d}$ for an $n$ vertex graph, let $\lambda_i$ denote the fraction of vertices of degree $i$, and let $Q(\bar{d}) = \sum_{i \ge 1} \lambda_i i(i - 2)$. Let $G$ be a graph with degree sequence $\bar{d}$ selected randomly according to the configuration model. If $Q(\bar{d}) > 0$, then $G$ is likely to have a giant component, and moreover, the probability of not having a giant component is exponentially small in $n$ (with the base of the exponent depending on $Q(\bar{d})$). If $Q(\bar{d}) < 0$ then $G$ is unlikely to have a giant component.
\end{theorem}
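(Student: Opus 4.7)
The plan is to prove Theorem~\ref{thm:MolloyReed} via an exploration process that reveals the uniformly random pairing of half-edges only when needed, and to recognize this exploration as a close approximation of a Galton--Watson branching process whose mean offspring number is governed exactly by the sign of $Q(\bar{d})$. Starting from an arbitrary vertex $v$, one marks its $d_v$ half-edges as \emph{active} and iterates the following step: pick any active half-edge, pair it uniformly with an unmatched half-edge in the whole multigraph, and update the active set accordingly. If the partner is attached to a fresh vertex of degree $d$, we add $d-1$ active half-edges; if it returns to a previously seen vertex, the active count decreases by two. The exploration terminates when no active half-edges remain, at which point we have revealed exactly the component of $v$.

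The key computation is that, as long as only $o(n)$ half-edges have been paired, the probability that a newly matched half-edge attaches to a fresh vertex of degree $d$ is $d\lambda_d/\bar{d} + o(1)$, where $\bar{d} = \sum_d d \lambda_d$. Hence the expected increment in the number $Y_t$ of active half-edges per step is
\[
\sum_d (d-1)\,\frac{d\lambda_d}{\bar{d}} \;-\; 1 \;=\; \frac{1}{\bar{d}}\sum_d d(d-2)\lambda_d \;=\; \frac{Q(\bar{d})}{\bar{d}},
\]
so the drift of the exploration has the same sign as $Q(\bar{d})$. In the supercritical regime $Q(\bar{d})>0$ one sandwiches the exploration between two Galton--Watson trees whose mean offspring slightly under- and over-shoots $1+Q(\bar{d})/\bar{d}$ (using the size-biased degree distribution); in the subcritical regime a single dominating subcritical Galton--Watson tree suffices.

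From here the two directions diverge. For $Q(\bar{d})<0$, bounding the exploration by a subcritical branching process shows that for each starting vertex the probability of reaching $\omega(\log n)$ vertices is $n^{-\omega(1)}$, so a union bound rules out any linear-sized component. For $Q(\bar{d})>0$, the dominating supercritical tree has survival probability $\rho>0$; with probability at least $\rho - o(1)$ a given exploration grows to use a linear number of half-edges, and an Azuma--Hoeffding argument applied to the bounded-increment martingale $Y_t - t\,Q(\bar{d})/\bar{d}$ upgrades this to exponential concentration for the number of ``successful'' starting vertices. A standard sprinkling/merging argument then shows that any two linearly large explored sets are almost surely joined by an edge in the remaining uniform pairing, producing a unique giant component of size $\Omega(n)$ with failure probability $\exp(-\Omega(n))$ depending on $Q(\bar{d})$.

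The main obstacle, and the reason the cleanest statements in the literature carry auxiliary hypotheses (bounded maximum degree, or uniform integrability of $\sum_d d^2 \lambda_d$), is controlling the tails of high-degree vertices. Without such a condition the size-biased offspring distribution can have infinite variance, the martingale increments become unbounded, and the finite-$n$ depletion of half-edges ceases to be a lower-order correction to the $d\lambda_d/\bar{d}$ heuristic. The standard remedy is to truncate degrees at a threshold $K=K(n)$ that grows slowly with $n$, prove the theorem for the truncated model where Azuma--Hoeffding applies cleanly, and then couple with the full model while checking that the contribution of the ignored half-edges is negligible; executing this coupling carefully is what makes a complete proof technical despite the clean branching-process heuristic outlined above.
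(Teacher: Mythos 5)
The paper does not prove Theorem~\ref{thm:MolloyReed}; it is quoted as a black box from Molloy and Reed \cite{Molloy}, and the surrounding text explicitly says that the exact statement ``involves some technical conditions and parameters that are omitted here.'' So there is no paper-internal proof to compare against. Your sketch is a reasonable high-level account of the standard exploration / branching-process route that Molloy--Reed and its refinements (Fountoulakis \cite{Fount}, Janson \cite{Janson}, both cited in the paper) follow, and you correctly identify the drift $Q(\bar d)/\bar d$, the size-biased offspring distribution, and the need for degree truncation to control tails.

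Where the sketch is genuinely incomplete is the supercritical failure probability. Moving from ``a single exploration survives with probability $\rho - o(1)$'' to ``the probability of \emph{no} giant component is $\exp(-\Omega(n))$'' via Azuma does not go through as written: explorations started from different vertices are strongly dependent (they compete for the same pool of unmatched half-edges), and the martingale $Y_t - tQ(\bar d)/\bar d$ you describe tracks a \emph{single} exploration, not the count of vertices lying in large components. To get the exponential tail one has to set up a global filtration that reveals the pairing one half-edge at a time and bound the Lipschitz constant of, say, the number of vertices in components of size $\ge \omega(1)$ as a function of the revealed pairs; that Lipschitz constant is controlled by the maximum degree, which is exactly why the bounded-degree (or truncated-degree) hypothesis is needed and why the paper waves at ``technical conditions.'' Similarly, the subcritical union bound needs an explicit exponential tail $\Pr[\text{GW size}\ge k] \le e^{-ck}$, which again requires the offspring distribution (after truncation) to have an exponential moment. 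You flag all of these issues, which is good, but the proposal as stated does not actually carry out the global concentration step, and it is that step, not the branching heuristic, that earns the $\exp(-\Omega(n))$ claim in the statement.
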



Another theorem relevant to our work is an immediate consequence of results of Fountoulakis \cite{Fount} and Janson \cite{Janson}. For completeness, its proof is sketched in Section \ref{sec:treewidth}.

\begin{theorem}
\label{thm:randomregular}
Let $G(V,E)$ be a random $d$-regular graph on $n$ vertices where $d$ is a fixed constant. Let $G_p(V_p,E)$ be a random vertex induced subgraph of $G$ in which every vertex is selected into $V_p$ independently with probability $p$. For every $\epsilon > 0$ there is some $\delta > 0$ (that may depend on $d$ but not on $n$) such that the following holds. If $p \ge \frac{1 + \epsilon}{d-1}$ then there exists $c>0$ such that $G_p$ has a connected component of size at least $\delta n$ with probability at least $1 - e^{-c n}$, where probability is taken over the joint distribution of choice of $G$ and $G_p$.
\end{theorem}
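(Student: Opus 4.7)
The plan is to derive Theorem~\ref{thm:randomregular} from the Molloy--Reed criterion (Theorem~\ref{thm:MolloyReed}) applied to $G_p$. First I would pass to the configuration model $G^*(n,\bar d)$ with $\bar d=(d,d,\ldots,d)$: conditioning this multigraph on being simple is an event of probability bounded away from zero for fixed $d$, so any property that fails with only exponentially small probability in $G^*(n,\bar d)$ also fails with exponentially small probability under the uniform random $d$-regular distribution. A key observation I would use is that if we delete a set of vertices from a configuration-model multigraph, the induced multigraph on the surviving half-edges is itself distributed as a configuration-model multigraph on the surviving degree sequence. Hence, conditional on its degree sequence, $G_p$ is distributed as $G^*(n',\bar d')$, where $n'$ is the number of surviving vertices and $\bar d'$ records their degrees in $G_p$.

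Next I would show that the empirical degree distribution of $G_p$ concentrates. Conditional on $G$, every surviving vertex had $d$ neighbors in $G$, each retained independently with probability $p$, so its degree in $G_p$ has distribution $\mathrm{Bin}(d,p)$. Since changing a single vertex-survival indicator alters the count of vertices of a given degree by at most $d+1$, McDiarmid's inequality yields that with probability $1-e^{-\Omega(n)}$, for every $i\in\{0,1,\ldots,d\}$ the fraction $\lambda_i$ of surviving vertices of degree $i$ satisfies $|\lambda_i-\binom{d}{i}p^i(1-p)^{d-i}|=o(1)$, and simultaneously $n'\ge pn/2$ by a Chernoff bound for $n'\sim\mathrm{Bin}(n,p)$.

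The quantity in Theorem~\ref{thm:MolloyReed} evaluated on this degree sequence is, letting $Y\sim\mathrm{Bin}(d,p)$,
\begin{align*}
Q(\bar d')\;\approx\;\mathbb{E}[Y(Y-2)]\;=\;dp(1-p)+d^2p^2-2dp\;=\;dp\bigl(p(d-1)-1\bigr).
\end{align*}
When $p\ge(1+\epsilon)/(d-1)$ we have $p(d-1)-1\ge\epsilon$, so $Q(\bar d')\ge dp\epsilon>0$, bounded away from zero independently of $n$. Invoking Theorem~\ref{thm:MolloyReed} on $G^*(n',\bar d')$ then produces a connected component of size at least $\delta' n'\ge\delta n$ for some $\delta=\delta(d,\epsilon)>0$, with failure probability exponentially small in $n'$, hence in $n$.

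The main technical obstacle is that Theorem~\ref{thm:MolloyReed} is stated for a deterministic degree sequence, whereas $\bar d'$ is random. I would handle this by conditioning on $\bar d'$ lying in the ``typical'' set of sequences on which the concentration bounds above hold and $Q(\bar d')$ is at least $\tfrac12 dp\epsilon$ (an event of probability $1-e^{-\Omega(n)}$), and applying Theorem~\ref{thm:MolloyReed} conditionally; the base of its exponent depends only on a lower bound for $Q$, so it is uniform across all typical sequences and the two exponentially small error terms combine into a single $e^{-cn}$ bound on the joint distribution.
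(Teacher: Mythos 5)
Your proof is correct, but it follows a genuinely different route from the paper's proof of this particular theorem. You invoke the Janson--Fountoulakis observation that the vertex-induced subgraph $G_p$ of a configuration-model multigraph is, conditional on its degree sequence, again distributed as a configuration model; you then concentrate the empirical degree sequence around $\mathrm{Bin}(d,p)$ and verify that the Molloy--Reed quantity is $\mathbb{E}[Y(Y-2)] = dp\bigl(p(d-1)-1\bigr) > 0$ exactly when $p>1/(d-1)$. The paper instead sidesteps the random degree sequence altogether by constructing an auxiliary multigraph $G'_p$ in which each deleted vertex of $V\setminus V_p$ is split into $d$ dangling degree-one dummy vertices while the surviving vertices keep degree $d$; given $|V_p|$, the degree sequence of $G'_p$ is then deterministic (just $|V_p|$ copies of $d$ and $d(n-|V_p|)$ copies of $1$), $G'_p$ restricted to the main vertices has the same distribution as $G_p$, and any connected component $C$ of $G'_p$ yields a connected component of $G_p$ of size at least $|C|/d$ after removing the auxiliary leaves. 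The paper's Molloy--Reed calculation $pd(d-2)-d(1-p)>0 \iff p>1/(d-1)$ reproduces your threshold. In fact your route is precisely the one the paper uses for the more general Theorem~\ref{thm:random_tw}; the auxiliary-vertex trick used here is somewhat more self-contained because only $|V_p|$ must concentrate, not the entire empirical degree distribution of $G_p$. (One minor imprecision: your claim that, conditional on $G$, a surviving vertex's degree in $G_p$ is exactly $\mathrm{Bin}(d,p)$ presupposes $d$ distinct neighbors, which fails for the $O(1)$ vertices of a configuration-model multigraph carrying self-loops or multi-edges; this does not affect the concentration argument.)
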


It is interesting to compare our Theorem~\ref{thm:T2} with Theorem~\ref{thm:randomregular}. Given an $n$-vertex $d$-regular graph with $d \ge 4$, every vertex is likely to be in $T_2$ with probability $\frac{2}{d+1} > \frac{1}{d-1}$, but nevertheless, $T_2$ is unlikely to have a giant component (contrary to what Theorem~\ref{thm:randomregular} might suggest). This is a case were the local dependencies in the sampling procedure (a vertex is included in $T_2$ only if it is young relative to the random ages of its neighbors) affect the global properties of the resulting graph (existence of giant components).

There has been some interest in the treewidth of the supercritical Erd\H{o}s-R\'{e}nyi random graph $G(n,\frac{1+\epsilon}{n})$ in which every edge is included independently with probability $p=\frac{1+\epsilon}{n}$ \cite{Gao,Lee,Wang}. It is known that with high probability the giant component in $G(n,\frac{1+\epsilon}{n})$ has treewidth $\Omega(n)$ \cite{Lee}. This result was proved building on the work of Benjamini, Kozma and Wormald \cite{Kozma} showing that with high probability the giant component in $G(n,\frac{1+\epsilon}{n})$ contains a subgraph $G'$ of size $\Omega(n)$ which is an expander. Our techniques (i.e., Lemma \ref{lem:tw_main}) provide a different proof for the aforementioned result concerning $G(n,\frac{1+\epsilon}{n})$. Moreover, as already noted it implies that a multigraph sampled from the configuration model satisfying the Molloy-Reed criterion will have treewidth of $\Omega(n)$ with high probability. We are not aware of a previous proof of this fact.

The family of graphs $G_n$ considered in the proof of Theorem~\ref{thm:T3} is that of an $n$-cycle plus a random matching. Observe that as $G_n$ is $3$-regular, $T_3$ is likely to contain roughly $\frac{3n}{4}$ vertices, and we show that $T_3$ is likely to have linear treewidth. We note that for certain other choices of roughly $\frac{3n}{4}$ vertices from $G_n$, the resulting graph is a forest and hence has treewidth~1. See~\cite{Bau}.

Theorem~\ref{thm:grid} concerns the infinite grid. Site Percolation in which vertices of the grid are included independently with some fixed probability $p$ has been studied extensively. See for example~\cite{Grimmett}. In particular, it is known that if $p$ is smaller than roughly~$0.556$ then with probability 1, there will \emph{not} be an infinite connected component in the resulting subgraph ~\cite{Berg}. Our proof of~Theorem~\ref{thm:grid} is based in part on the existing machinery developed in percolation theory, in combination with structural properties of the first four layers of the infinite grid. It is currently an open question whether $T_3({\Z^2})$ contains an infinite component with probability $1$.

Additional results regarding the layers model
can be found in~\cite{Hermon}. These include extensions of our Theorems~\ref{thm:T2} and~\ref{thm:T3} to infinite graphs, and extension of our Theorem~\ref{thm:grid} to infinite grids of dimension $d > 2$. For finite graphs, the results of~\cite{Hermon} extend results of Theorem~\ref{thm:T3} to a wider class of random graphs.

\subsection{Preliminaries}

Given a graph $G(V,E)$, the connected component containing $v$ is denoted by $C(v)$. We shall sometime denote $|V|$ by $n$. The maximal degree in $G$ is denoted by $\Delta$. An independent set is a subset of vertices that does not span an edge. A forest is a graph with no cycle. Given $u,v \in V$ the distance between $u$ and $v$ denoted by $d(u,v)$ is the length (number of edges) of the shortest path connecting $u$ and $v$ (if $u$ and $v$ are not connected, the distance is defined to be $\infty$). For $A,B \subseteq V$, $d(A,B)$ is $\min(d(u,v)|u \in A,v \in B)$. For two vertices $u,v$, write $u \sim v$ if $d(u,v)=1$ and $N(u)$ is the set of all vertices adjacent to $u$. Consider a family of graphs $G_n$ over $n$ vertices and let $\widetilde{G_n}$ be a subgraph of $G_n$ that is created by some random process. Given a property $A$ of graphs, we say $\widetilde{G_n}$ has property $A$ with high probability ($w.h.p.$) if $\lim_{n \rightarrow \infty}\Pr[\widetilde{G_n} \in A]=1$. For a positive integer $\ell$, $[\ell]$ is the set $\{1,...,\ell\}$.

Let $T=(U,F)$ be a finite rooted tree with root $r$. For $u \in U$ we say that $w \in U$ is a \emph{descendent} of $u$ if the unique path connecting $r$ with $w$ in $v$ passes through $u$. The subtree rooted at $u$ (which we denote by $T_u$) consists of $u$ and all decedents of $u$.

Recall that $H$ is minor of $G$ if $H$ can be obtained from $G$ by deletion of vertices, deletion of edges, and contraction of edges. The following lemma regarding treewidth is well known (see~\cite{Kloks}):
\begin{lemma}
\label{lemma:minor}
If $H$ is a minor of $G$ then $tw(G) \geq tw(H)$.
\end{lemma}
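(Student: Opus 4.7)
The plan is to argue that treewidth is non-increasing under each of the three elementary minor operations — vertex deletion, edge deletion, and edge contraction — and then iterate. Formally, I would start from an optimal tree decomposition $(T,\{B_t\}_{t\in V(T)})$ of $G$ of width $tw(G)$, and exhibit, for each operation, a tree decomposition of the resulting graph of width at most $tw(G)$.

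The easy cases are vertex and edge deletion. If $H = G - v$, then defining $B'_t := B_t \setminus \{v\}$ on the same tree $T$ clearly gives a tree decomposition of $H$: every edge of $H$ is still covered, every vertex $u \ne v$ still has the same connected subtree of bags, and bag sizes only decrease. If $H = G - e$, the same decomposition $(T,\{B_t\})$ already works, since removing an edge only relaxes the "every edge appears in some bag" requirement. In both cases the width does not increase.

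The interesting case is edge contraction. Suppose $H$ is obtained from $G$ by contracting the edge $uv$ into a new vertex $w$. On the same tree $T$, define
\[
B'_t := \begin{cases} (B_t \setminus \{u,v\}) \cup \{w\} & \text{if } \{u,v\} \cap B_t \neq \emptyset, \\ B_t & \text{otherwise.} \end{cases}
\]
I would then verify the two axioms. For the edge axiom: an edge of $H$ either comes from an edge of $G$ disjoint from $\{u,v\}$ (covered by the same bag as before) or from an edge $xy$ of $G$ with $x\in\{u,v\}$ (covered by the bag that used to contain $\{x,y\}$, now containing $\{w,y\}$). For the subtree axiom applied to $w$: the set of bags containing $w$ is exactly the union of the set of bags originally containing $u$ with the set originally containing $v$; each is a connected subtree of $T$, and because $uv\in E(G)$ there must exist a bag containing both $u$ and $v$, so the two subtrees share a node and their union is connected. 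For every other vertex the set of bags containing it is unchanged. Finally, $|B'_t|\le |B_t|$ because we replace the (one or two) elements of $B_t\cap\{u,v\}$ by the single vertex $w$, so the width does not increase.

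I do not anticipate a real obstacle; this is essentially a bookkeeping argument. The only subtle point is the subtree axiom under contraction, where one must invoke the fact that $uv$ being an edge of $G$ forces the two subtrees (of bags containing $u$ and of bags containing $v$) to meet, so that their union remains connected. Once the three operations are handled, an induction on the number of minor operations used to obtain $H$ from $G$ yields $tw(H)\le tw(G)$.
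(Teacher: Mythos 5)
Your proof is correct. The paper does not actually prove this lemma---it simply cites it as a well-known fact with a pointer to Kloks's monograph---so there is no in-paper argument to compare against; your argument (showing treewidth is non-increasing under each of vertex deletion, edge deletion, and edge contraction, and inducting on the number of operations) is the standard textbook proof, and the one subtle step, that the subtrees of bags containing $u$ and containing $v$ meet because the edge $uv$ forces a common bag, is handled correctly.
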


We use standard concentration results regarding random variables. The following is referred to as Chernoff's inequality:
\begin{lemma}
  \label{lemma:Chernoff}
  Suppose that $X = \sum_{i=1}^mX_i$ where every $X_i$ is a $\{0,1\}$-random variable with $\Pr(X_i=1)=p$ and the $X_i$s are jointly independent. Then for arbitrary $\eta \in (0,1)$,
  \[
  \Pr(X<(1-\eta)pm) \le \exp(-pm\eta^2/2).
  \]
\end{lemma}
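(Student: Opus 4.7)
The plan is to prove Chernoff's inequality via the standard moment generating function technique: apply Markov's inequality to an exponential transform of $X$, exploit independence to factor the expectation, optimize the exponential parameter, and finally derive the claimed Gaussian-type bound using an elementary calculus inequality for $\log(1-\eta)$.

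First I would fix an arbitrary $t > 0$ and rewrite the lower-tail event in the form
\[
\Pr(X < (1-\eta)pm) = \Pr\bigl(e^{-tX} > e^{-t(1-\eta)pm}\bigr) \le e^{t(1-\eta)pm}\, \E[e^{-tX}],
\]
where the inequality is Markov's. Because the $X_i$ are independent Bernoulli($p$) random variables, the moment generating function factors as $\E[e^{-tX}] = \prod_{i=1}^m \E[e^{-tX_i}] = (1 - p + pe^{-t})^m$. Using the elementary bound $1 + x \le e^x$ with $x = p(e^{-t}-1)$ yields
\[
\E[e^{-tX}] \le \exp\bigl(pm(e^{-t}-1)\bigr).
\]

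Next I would optimize by choosing $t = \log\frac{1}{1-\eta}$ (so that $e^{-t} = 1-\eta$), which turns the combined bound into
\[
\Pr(X < (1-\eta)pm) \le \exp\Bigl(pm\bigl[-\eta - (1-\eta)\log(1-\eta)\bigr]\Bigr).
\]
Finally, to convert this Kullback--Leibler-style exponent into the promised Gaussian form $\exp(-pm\eta^2/2)$, I would invoke the standard Taylor expansion inequality $(1-\eta)\log(1-\eta) \ge -\eta + \eta^2/2$ for $\eta \in (0,1)$, which follows from writing $\log(1-\eta) = -\sum_{k \ge 1}\eta^k/k$ and collecting nonnegative terms. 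This gives $-\eta - (1-\eta)\log(1-\eta) \le -\eta^2/2$, completing the proof.

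I do not anticipate a real obstacle here, since the argument is entirely classical; the only mildly technical point is verifying the Taylor-series inequality, which is routine. No properties specific to the layers model are used.
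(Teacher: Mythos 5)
The paper does not actually prove Lemma~\ref{lemma:Chernoff}; it is stated without argument as a ``standard concentration result'' and invoked as a black box. Your proof is the classical Chernoff moment-generating-function derivation, and it is correct: the Markov step on $e^{-tX}$ is valid, the factorization uses independence as required, the bound $1+x \le e^x$ is applied correctly, the optimizing choice $e^{-t} = 1-\eta$ gives exactly the exponent $pm[-\eta - (1-\eta)\log(1-\eta)]$, and the final inequality $(1-\eta)\log(1-\eta) \ge -\eta + \eta^2/2$ can indeed be read off from the power series $-\log(1-\eta) = \sum_{k\ge 1}\eta^k/k$ (after rearranging, the excess is $\sum_{k\ge 3}\eta^k/(k(k-1)) \ge 0$), or by checking $f(\eta)=(1-\eta)\log(1-\eta)+\eta-\eta^2/2$ has $f(0)=f'(0)=0$ and $f''\ge 0$. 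So there is nothing to compare against in the paper; you have simply supplied the standard proof of a standard fact, and it is sound.
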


The following is referred to as Azuma's inequality:
\begin{lemma} \label{lem:Azuma}
Let $X_0,...,X_n$ be a martingale such that for every $1 \leq k <n$ it holds that $|X_k-X_{k-1}|\leq c_k$. Then for every nonnegative integer $t$ and real $B >0$
$$\Pr(|X_t-X_0|\geq B) \leq 2\exp\left(\frac{-B^2}{\sum_{i=1}^tc_i^2}\right).$$
\end{lemma}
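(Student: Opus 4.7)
The plan is to prove Azuma's inequality via the standard exponential moment (Chernoff) method, applied to the martingale difference sequence $D_k := X_k - X_{k-1}$ for $1\le k\le t$, so that $X_t - X_0 = \sum_{k=1}^{t} D_k$ with $|D_k|\le c_k$ and $\E[D_k\mid \F_{k-1}]=0$, where $\F_{k-1}$ is the natural filtration. The first step is to bound the conditional moment generating function of each increment. For a fixed $s>0$, the main technical input is a Hoeffding-type lemma: for any zero-mean random variable $Y$ with $|Y|\le c$, one has $\E[e^{sY}]\le e^{s^2 c^2/2}$ (or more precisely $\cosh(sc)$, which is bounded by this Gaussian expression). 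This is established by writing $Y = \lambda c + (1-\lambda)(-c)$ as a convex combination with $\lambda = (Y+c)/(2c)$, applying convexity of the exponential to get $e^{sY}\le \lambda e^{sc}+(1-\lambda)e^{-sc}$, taking expectations (using $\E[Y]=0$), and then bounding the resulting $\cosh$ via the Taylor expansion.

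Applying this conditionally to $D_k$ given $\F_{k-1}$, I obtain $\E[e^{sD_k}\mid \F_{k-1}] \le e^{s^2 c_k^2/2}$. Then I would iterate using the tower property:
\begin{equation*}
\E\!\left[e^{s(X_t-X_0)}\right] = \E\!\left[e^{s\sum_{k=1}^{t-1}D_k}\,\E[e^{sD_t}\mid \F_{t-1}]\right] \le e^{s^2 c_t^2/2}\,\E\!\left[e^{s\sum_{k=1}^{t-1}D_k}\right],
\end{equation*}
and induct to conclude $\E[e^{s(X_t-X_0)}]\le \exp\!\bigl(\tfrac{s^2}{2}\sum_{i=1}^{t}c_i^2\bigr)$. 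Markov's inequality then yields
\begin{equation*}
\Pr\!\bigl(X_t-X_0\ge B\bigr) \le e^{-sB}\exp\!\Bigl(\tfrac{s^2}{2}\textstyle\sum_{i=1}^{t}c_i^2\Bigr),
\end{equation*}
and optimizing in $s>0$ (choosing $s$ proportional to $B/\sum c_i^2$) gives a sub-Gaussian tail in $B$. Applying the same argument to the martingale $-X_k$ gives the matching lower-tail bound, and a union bound combines them with the factor of $2$ in front.

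The main obstacle is obtaining precisely the constant in the stated exponent: the convex-combination Hoeffding estimate above produces the exponent $s^2 c_k^2/2$, and the question is whether one should instead invoke the sharper symmetric bound tailored to the interval $[-c_k,c_k]$ (as opposed to an interval of width $2c_k$ with arbitrary endpoints), and then select $s$ accordingly; the calibration of this constant is where care is needed to match the inequality as displayed. All other steps --- constructing the Doob-type decomposition, verifying the conditional mean-zero property, and iterating the conditional MGF bound --- are mechanical given the martingale assumption and the bounded-increment hypothesis $|X_k-X_{k-1}|\le c_k$.
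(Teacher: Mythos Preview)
The paper does not prove this lemma; it is simply quoted in the Preliminaries as a standard fact (``The following is referred to as Azuma's inequality''), so there is no paper proof to compare against. Your argument is the textbook Chernoff/Hoeffding proof and is correct.

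On the constant you flag as an obstacle: your Hoeffding step $\E[e^{sD_k}\mid\F_{k-1}]\le e^{s^2c_k^2/2}$ is sharp for a mean-zero variable supported in $[-c_k,c_k]$ (this is the $(b-a)^2/8$ Hoeffding lemma with $b-a=2c_k$). Optimizing $s$ then yields $\Pr(X_t-X_0\ge B)\le\exp\!\bigl(-B^2/(2\sum_i c_i^2)\bigr)$, hence the two-sided bound $2\exp\!\bigl(-B^2/(2\sum_i c_i^2)\bigr)$. The version displayed in the paper, with $\sum_i c_i^2$ rather than $2\sum_i c_i^2$ in the denominator, is not what the standard argument produces and is almost certainly a typographical slip in the statement (the paper only ever uses the lemma qualitatively, for $e^{-\Omega(n)}$-type bounds, so the missing factor of $2$ is immaterial to its applications). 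Your proof is fine as written; the ``obstacle'' you identify is a discrepancy in the stated constant, not a gap in your method.
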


Throughout, when considering the configuration model with degree sequence $\bar{d}$, we shall refer to the inequality $Q(\bar{d}) > 0$ (see Theorem~\ref{thm:MolloyReed}) as the \emph{Molloy-Reed criterion}.

\section{The first two layers}

In this section prove Theorem~\ref{thm:T2}. In fact, we show that it follows from a known result of~\cite{vardi}. We now explain this connection.

Given ages $X_v$ to vertices of a graph (as in the local sampling view of the random permutation model), let $C_2(v)$ denote the connected component of $v$ in $T_2(G)$. We say that a path $P = (v_1, \ldots, v_t)$ is \emph{monotonically decreasing} if $X_{v_i} > X_{v_{i+1}}$ for all $1 \le i \le t-1$. Let $C_{mon}(v)$ denote the set of all vertices reachable from $v$ via monotonically decreasing paths. We call $C_{mon}(v)$ the {\em monotone component} of $v$.

\begin{proposition}
\label{pro:domination}
$\max_v[|C_2(v)|] \le \max_v[|C_{mon}(v)|]$.
\end{proposition}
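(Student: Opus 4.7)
The plan is to establish a pointwise containment $C_2(v) \subseteq C_{\mathrm{mon}}(u^\star)$ for a suitably chosen $u^\star \in C_2(v)$, which yields the proposition upon maximizing over $v$. The natural candidate for $u^\star$ is the oldest vertex of the component $C := C_2(v)$.

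First, I would record that $T_2(G)$ is a forest, so $C$ is a tree. A direct proof is available: if $C$ contained a cycle, then summing the number of younger cycle-neighbors over all cycle vertices would equal the cycle length (each cycle edge contributes once, via its older endpoint), yet the youngest cycle vertex contributes $0$ and every other cycle vertex contributes at most $1$ by the defining property of $T_2$, a contradiction. Next I would exploit a structural feature of the tree $C$: orienting each edge from its older endpoint to its younger endpoint, every vertex has out-degree at most $1$, since a vertex of $T_2$ has at most one younger neighbor in $G$. Summing out-degrees gives $|C|-1$, so exactly one vertex of $C$ has out-degree $0$; this must be the youngest vertex $y$ of $C$, and every other vertex of $C$ carries a unique outgoing edge to a younger tree-neighbor. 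Iterating the ``follow the out-edge'' map from any starting vertex thus traces a deterministic strictly descending walk that terminates at $y$.

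Finally, I would take $u^\star$ to be the oldest vertex of $C$ and aim to show that every $w \in C$ is reachable from $u^\star$ by a monotonically decreasing path in $G$. Vertices lying on the descending out-walk started at $u^\star$ are handled automatically. The main obstacle, which I expect to be the crux of the argument, is dealing with those $w \in C$ that sit on a different branch of the tree: the unique tree-path from $u^\star$ to $w$ may first descend and then re-ascend, so it is not itself monotone. Closing this gap will require a case analysis showing that a monotone decreasing path from $u^\star$ to $w$ nevertheless exists in $G$---possibly routed through vertices outside of $C$---using the fact that at any ``re-ascent'' vertex on the tree-path the local age constraints placed on vertices of $T_2$ force alternative decreasing routes to be available. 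This last step, rather than the forest/orientation setup, is where I expect the real work to lie.
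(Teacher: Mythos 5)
Your structural setup is correct and is the genuinely useful part of the argument: $T_2(G)$ is a forest, orienting each edge of a component $C$ from its older to its younger endpoint gives out-degree at most~$1$ (each vertex of $T_2$ has at most one younger neighbor in $G$), and so $C$ has a unique out-degree-$0$ sink $y$, which is the youngest vertex of $C$, and the ``follow the out-edge'' walk from any $w\in C$ is a strictly decreasing walk terminating at $y$. The gap you flag at the end, however, is not closable. From the oldest vertex $u^\star$ that same observation produces exactly one descending walk (the branch from $u^\star$ down to $y$), and for a vertex $w$ off that branch there need not exist any monotone decreasing path from $u^\star$ to $w$, in $C$ or in $G$. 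Concretely, take $G=K_{1,3}$ with center $c$ and leaves $a,b,d$ and ages $X_a<X_c<X_b<X_d$: every vertex has at most one younger neighbor, so $T_2(G)$ is the whole star (one component of size $4$), yet $C_{mon}(d)=\{d,c,a\}$ has size $3$ and is the largest monotone component. No local rerouting is available, because $C$ here is all of $G$.

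What your out-walk actually proves is the reverse containment: every $w\in C$ has a decreasing path to $y$, so $C\subseteq\{w: y\in C_{mon}(w)\}$, the ``reverse'' monotone component of the youngest vertex. The distinguished vertex is $y$, not $u^\star$, and the paths run \emph{to} it, not \emph{from} it; turning this into a bound in terms of $\max_v|C_{mon}(v)|$ uses the age-reversal symmetry $X_v\mapsto 1-X_v$ (which preserves distributions, hence suffices for the w.h.p.\ statement of Theorem~\ref{thm:T2}, but is not a pointwise inequality). You should also be aware that the paper's proof of this proposition asserts that each vertex of $T_2(G)$ has at most one neighbor of \emph{larger} age and concludes $C_2(v)\subset C_{mon}(v)$ for $v$ the oldest vertex; under the paper's own definition of $T_2$ (at most one \emph{younger} neighbor, as used e.g.\ in Proposition~\ref{pro:wholetree}), the parity you derived is the correct one, and it is exactly this sign flip that makes the paper's one-line conclusion fail for the star above. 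So the last step you are looking for is not a case analysis at re-ascent vertices; it is switching the roles of oldest/youngest and from/to, together with the symmetry $X_v\mapsto 1-X_v$.
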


\begin{proof}
Every connected component of $T_2(G)$ is a tree. Moreover, orienting the edges of the connected component from highest label to lowest label gives a directed tree with the highest labeled vertex at the root. Observe that a vertex in $T_2(G)$ can have at most one neighboring vertex with age larger than the age of the vertex. If follows that if $v$ is the highest labeled vertex in its connected component, then $C_2(v) \subset C_{mon}(v)$. As every component has a highest labeled vertex, the proposition holds
\end{proof}

The following Theorem is from~\cite{vardi}.

\begin{theorem}
\label{thm:vardi}
Given random ages to vertices of an $n$-vertex graph of degree at most $d$ it holds that $\max_v[|C_{mon}(v)|] \le 2^{bd}\log n$, for some universal constant $b$.
\end{theorem}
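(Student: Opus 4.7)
The plan is to dominate $|C_{mon}(v)|$ by the total progeny of a Galton--Watson branching process whose offspring distribution becomes subcritical beyond a shallow depth, and then to take a union bound over $v$.

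Starting from $v$, I would carry out a BFS of $C_{mon}(v)$, revealing ages on demand and extending the exploration from the current endpoint $u$ to a neighbor $w$ whenever $X_w < X_u$. Two facts are central. First, for any fixed simple path $v = v_0, v_1, \dots, v_\ell$ in $G$, the ages $X_{v_0}, \dots, X_{v_\ell}$ are in strictly decreasing order with probability $1/(\ell+1)!$ (all $(\ell+1)!$ orderings are equally likely), and since the number of such paths from $v$ is at most $d^\ell$,
$$\mathbb{E}[|C_{mon}(v)|] \le \sum_{\ell \ge 0} \frac{d^\ell}{(\ell+1)!} \le e^d.$$
Second, conditional on a vertex of the exploration being at depth $t$, its age is distributed as the product of $t+1$ independent $\mathrm{Unif}[0,1]$ variables, so has expectation $2^{-(t+1)}$ and its conditional expected number of children is at most $d/2^{t+1}$. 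In particular, beyond depth $T := \lceil \log_2(2d) \rceil$ the conditional offspring mean is at most $1/2$, so the exploration is subcritical.

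To upgrade the expectation into a high-probability bound, I would stochastically dominate the real exploration by an independent Galton--Watson tree in which each vertex of age $x$ produces $\mathrm{Bin}(d,x)$ offspring with iid $\mathrm{Unif}[0,x]$ ages; the domination is valid because in $G$ two branches of the exploration may reach the same vertex, and ignoring such collisions only inflates the process. Up to depth $T$ this dominating tree contains deterministically at most $d^{T+1} = 2^{O((\log d)^2)}$ vertices, while each of the at most $d^T$ vertices at depth $T$ seeds an independent subcritical branching subprocess whose total size has exponentially decaying tails at rate $\Omega(1)$. A Cram\'er-type large deviation bound on the sum of at most $d^T$ such subtree sizes then yields
$$\Pr\bigl[|C_{mon}(v)| \ge k\bigr] \le \exp\bigl(-k/2^{O(d)}\bigr)$$
for all $k$ above some threshold of size $2^{O(d)}$. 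Choosing $k = 2^{bd}\log n$ with $b$ a sufficiently large universal constant makes this $o(1/n)$, and a union bound over the $n$ choices of $v$ concludes.

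The main obstacle I anticipate is setting up the coupling rigorously. The true exploration in $G$ is not literally a branching process, because cycles can cause two exploration paths to discover the same vertex, and because the offspring count of $u$ is $\mathrm{Bin}(k,X_u)$ with $k\le d$ counting only unexplored neighbors of $u$. Handling this requires exposing ages lazily and coupling the exploration step by step with the independent Galton--Watson tree, so that every real child corresponds to a distinct Galton--Watson child and collisions in $G$ are simply discarded on the graph side while still being counted on the Galton--Watson side. Once this coupling is in place, the subcritical-beyond-$T$ analysis applies to the Galton--Watson tree and the tail bound transfers to $|C_{mon}(v)|$.
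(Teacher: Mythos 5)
The expectation computation and the high-level idea of dominating the exploration by a Galton--Watson tree (your ``main obstacle'' paragraph) are both sound, and in fact the latter is precisely the content of the paper's Corollary~\ref{cor:vardi}. However, the depth-based decomposition has a genuine gap. You assert that beyond depth $T = \lceil\log_2(2d)\rceil$ the process is subcritical and that each depth-$T$ vertex seeds a subtree ``whose total size has exponentially decaying tails at rate $\Omega(1)$.'' This cannot be right: if $x$ denotes the (random) age of a depth-$T$ vertex in your dominating Galton--Watson tree, then the conditional expected size of its subtree is $e^{dx}$, and since $x = \prod_{i=0}^{T}U_i$ has a non-negligible probability (of order roughly $2^{-T}/T!$, which is polynomial in $1/d$, not exponentially small in $d$) of exceeding, say, $1/2$, the \emph{unconditional} expected subtree size is $e^{\Omega(d)}$. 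An $\Omega(1)$ tail rate would force an $O(1)$ expectation. The statement ``the conditional offspring mean is at most $1/2$'' conflates an expectation over the random age with a conditional bound: the offspring count of a specific vertex is $\mathrm{Bin}(d, x)$, and whenever $x > 1/d$ (an event whose probability at depth $T$ is bounded away from zero, not tiny) the subtree rooted there is locally supercritical. The Cram\'er bound on the sum $\sum_w Z_w$ therefore does not go through at the claimed rate, and the delicate part of the argument---controlling the heavy upper tail of ages at a fixed depth---is left unaddressed.

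The paper's proof of Lemma~\ref{lem:vardi} avoids exactly this difficulty by decomposing not by depth but by \emph{age class}: partition $[0,1]$ into $3d$ intervals of equal length and consider, within $C_{mon}(r)$, the subtrees spanned by vertices whose ages fall in the same interval. A vertex of age $x$ in class $i$ can have a same-class monotone child only if that neighbor's age lands in $(\,\text{class-$i$ lower endpoint},\, x)$, an interval of length $< 1/(3d)$, so the expected number of same-class children is at most $d\cdot\frac{1}{3d}=1/3$ \emph{deterministically}, independent of $x$. These per-class processes are therefore genuinely subcritical with $\Omega(1)$ tail rate, and a class-by-class induction (each vertex spawns at most $1/3$ of a child in each lower class in expectation) yields $O(e^{ci}\log n)$ vertices in class $i$ with probability $1-n^{-2}$; summing over $3d$ classes gives the $2^{O(d)}\log n$ bound. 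If you want to repair your argument along the lines you propose, you would need to condition on the ages at depth $T$ and either treat the large-age vertices by a further recursion, or switch to the age-class decomposition from the start.
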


Theorem~\ref{thm:vardi} together with Proposition~\ref{pro:domination} proves Theorem~\ref{thm:T2}.

For completeness, Section~\ref{sec:vardi} in the appendix contains a proof of Theorem~\ref{thm:vardi} (based on the proof given in~\cite{vardi}).

\subsection{The first two layers in a complete binary tree}
\label{sec:bin}

Here we show that there are bounded degree graphs for which the size of the largest component in the first two layers is $\Omega(\log n)$ with high probability.
We use the notation $BIN_n$ to denote the complete binary tree with $n - 1$ vertices (where $n$ is a power of~2). Hence $BIN_n$ has $\log n$ levels, where level~0 is the root, level $\log n - 1$ contains the leaves, and level $i$ contains $2^i$ vertices.

\begin{proposition}
\label{pro:wholetree}
Let $k$ be an arbitrary power of~2. Then the probability that $T_2(BIN_k) = BIN_k$ (namely, all of $BIN_k$ survives in $T_2(BIN_k)$) is at least $2^{-2k}$.
\end{proposition}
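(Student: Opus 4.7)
The plan is to reduce the event $\{T_2(BIN_k)=BIN_k\}$ to a purely combinatorial condition on the age ranking, and then invoke the well-known formula counting linear extensions of a rooted tree.

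First, I would use the characterization implicit in the proof of Proposition~\ref{pro:domination}: a vertex lies in $T_2(G)$ iff at most one of its neighbors has larger age. Since $BIN_k$ is a tree on $N=k-1$ vertices and $N-1$ edges, summing in-degrees of the orientation of edges from older to younger endpoint gives $N-1$. Hence requiring \emph{every} vertex to have in-degree at most $1$ forces the oldest vertex to have in-degree $0$ and every other vertex to have in-degree exactly $1$. Equivalently, if one reveals vertices in age-decreasing order, the revealed set is a connected subtree of $BIN_k$ at every step.

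Second, I would lower-bound the probability by restricting to the (stronger) event that the age assignment is a \emph{heap ordering} of $BIN_k$ rooted at its actual root $r$, meaning every non-root vertex is younger than its parent. A heap ordering obviously satisfies the connected-growth condition above (at each step, the next-oldest vertex has its parent already present, so it extends the current subtree), and hence implies $T_2(BIN_k)=BIN_k$. The number of heap orderings of a rooted tree on $N$ vertices equals $N!/\prod_{v}|T_v|$, where $|T_v|$ is the size of the subtree rooted at $v$; this is the classical linear-extension-of-a-tree formula. Since all $N!$ age orderings are equiprobable,
\[
\Pr[T_2(BIN_k) = BIN_k] \;\ge\; \frac{1}{\prod_{v\in BIN_k}|T_v|}.
\]

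Third, I would carry out the elementary estimate of $\prod_v |T_v|$ for the complete binary tree. At level $i$ (for $0\le i \le \log k - 1$) there are $2^i$ vertices, each the root of a complete binary subtree of size $k/2^i-1$, so
\[
\prod_v |T_v| \;=\; \prod_{i=0}^{\log k -1}\Bigl(\tfrac{k}{2^i}-1\Bigr)^{2^i} \;\le\; \prod_{i=0}^{\log k -1}\Bigl(\tfrac{k}{2^i}\Bigr)^{2^i} \;=\; k^{k-1}\cdot 2^{-\sum_{i=0}^{\log k -1}i\,2^i}.
\]
Using the identity $\sum_{i=0}^{m}i\,2^i=(m-1)2^{m+1}+2$ with $m=\log k - 1$, the exponent collapses and the right-hand side simplifies to $4^{k-1}/k$, which is less than $4^k=2^{2k}$, yielding $\Pr[T_2(BIN_k)=BIN_k]\ge 2^{-2k}$ as claimed.

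The only point requiring care is the correspondence between heap orderings and the connectedness condition, together with the linear-extension formula $N!/\prod_v |T_v|$; the numerical estimate of the product is then completely routine.
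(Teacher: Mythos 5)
Your proof is correct, but it takes a genuinely different route from the paper's. The paper constructs a single concrete sufficient event: assign to each level-$i$ vertex an age lying in the interval $(2^i/k,\,2^{i+1}/k]$, which automatically forces each vertex to have exactly one ``older'' neighbor (its parent), and then computes the probability of this event directly as $\prod_{i=0}^{\log k - 1}(2^i/k)^{2^i}$, which simplifies to $4k\cdot 2^{-2k} > 2^{-2k}$. You instead characterize $T_2(BIN_k)=BIN_k$ combinatorially, lower-bound it by the event that the age map is a heap ordering with the true root on top, and invoke the classical linear-extension (hook-length) formula $N!/\prod_v |T_v|$ for rooted trees. It is a pleasant coincidence that the two calculations land on exactly the same number: the paper's sufficient event has probability precisely $1/\prod_i (k/2^i)^{2^i}$, which is exactly what you obtain after replacing $|T_v| = k/2^i-1$ by $k/2^i$ in the denominator. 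Your bound $1/\prod_v|T_v|$ is actually slightly sharper before that relaxation. What each approach buys: the paper's argument is entirely self-contained and elementary (no combinatorial identity beyond summing a series), while yours leans on the hook-length formula but is cleaner conceptually and generalizes painlessly, for instance to complete $d$-ary trees or to trees whose size is not a power of two, since the formula does not care about the shape of the tree.
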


\begin{proof}
A sufficient condition for the event that $T_2(BIN_k) = BIN_k$ is that for every $0 \le i \le \log k - 1$ and for every vertex $v$ of level $i$, its random age $X_v$ is in the range $\frac{2^i}{k} < X_v \le \frac{2^{i + 1}}{k}$. This condition is satisfied with probability

$$\prod_{i=0}^{\log k - 1} \left(\frac{2^{i}}{k}\right)^{2^i} =  2^{\sum_{i=0}^{\log k - 1} (i - \log k) 2^{i}}$$
The sum in the exponent (written backwards) is precisely:

$$k(-\frac{1}{2} - 2\cdot\frac{1}{2^2} - 3\cdot\frac{1}{2^3} - \ldots - \log k \cdot \frac{1}{k}) > -2k$$

Hence the probability that $T_2(BIN_k) = BIN_k$ is at least $2^{-2k}$.
\end{proof}

\begin{proposition}
\label{pro:binarytree}
With high probability, $T_2(BIN_n)$ has a connected component of size at least $\frac{\log n}{10}$.
\end{proposition}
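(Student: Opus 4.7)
The plan is to locate many vertex-disjoint subtrees of $BIN_n$, each isomorphic to a complete binary tree with roughly $\frac{\log n}{10}$ vertices, and to apply (the computation in the proof of) Proposition~\ref{pro:wholetree} independently to each one. Concretely, let $k$ be the smallest power of $2$ with $k - 1 \geq \frac{\log n}{10}$, so $k \leq \frac{\log n}{5} + 2$, and set $h = \log k$. The $2^{\log n - h} = n/k$ vertices sitting at level $\log n - h$ of $BIN_n$ each serve as the root of a subtree $T'_j$ isomorphic to $BIN_k$ whose leaves coincide with leaves of $BIN_n$; the $T'_j$'s are pairwise vertex-disjoint. For each $j$, let $A_j$ be the event that, for every $0 \leq i \leq h-1$, every vertex at level $i$ inside $T'_j$ has age in $(2^i/k,\, 2^{i+1}/k]$. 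The same calculation as in Proposition~\ref{pro:wholetree} yields $\Pr[A_j] \geq 2^{-2k}$, and the events $A_j$ are mutually independent because they depend on disjoint sets of ages.

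The main point to check, and the one place where some care is needed beyond Proposition~\ref{pro:wholetree}, is that $A_j$ alone already forces every vertex of $T'_j$ into $T_2(BIN_n)$, despite the fact that the root $r_j$ of $T'_j$ has a parent $p_j$ outside $T'_j$. Under $A_j$ the two children of $r_j$ are both older than $r_j$, so irrespective of the age of $p_j$, the vertex $r_j$ has at most one younger neighbor and therefore lies in $T_2(BIN_n)$. The internal vertices and leaves of $T'_j$ have all of their $BIN_n$-neighbors inside $T'_j$ (since the leaves of $T'_j$ are leaves of $BIN_n$), and are handled exactly as in Proposition~\ref{pro:wholetree}.

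The conclusion is then a routine independence estimate. Since $2^{-2k} \geq \frac{1}{16} n^{-2/5}$ and there are $n/k = \Omega(n/\log n)$ independent trials,
\[
\Pr\Bigl[\bigcap_j \overline{A_j}\Bigr] \leq \bigl(1 - 2^{-2k}\bigr)^{n/k} \leq \exp\Bigl(-\Omega\bigl(n^{3/5}/\log n\bigr)\Bigr) = o(1).
\]
On the complementary event, some $T'_j$ lies entirely in $T_2(BIN_n)$, and because $T_2$ is an induced subgraph, all edges of $T'_j$ also survive, so $T'_j$ is contained in a single connected component of $T_2(BIN_n)$ of size at least $k - 1 \geq \frac{\log n}{10}$.
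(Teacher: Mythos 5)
Your proof is correct and follows essentially the same approach as the paper: partition $BIN_n$ into $n/k$ disjoint subtrees isomorphic to $BIN_k$ with $k = \Theta(\log n)$, apply the age-stratification event of Proposition~\ref{pro:wholetree} to each, and use independence across the disjoint subtrees. Your one refinement --- observing that the event $A_j$ actually forces the root $r_j$ into $T_2$ as well, since both its children are older, so the full subtree of size $k-1$ is connected rather than only a half of it --- is a genuine (if unneeded) sharpening of the paper's argument, which conservatively discards the root and settles for a component of size $\ell/2-1$.
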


\begin{proof}
Fix $\ell$ to be the smallest power of~2 satisfying $\ell \ge 2 + \frac{\log n}{5}$. For a vertex $v$ at level $\log n - \log \ell$ of $BIN_n$, let $T_v$ denote the subtree of $BIN_n$ containing $v$ and its descendants. Observe that $T_v$ is isomorphic to $BIN_{\ell}$, that there are precisely $n/\ell$ such subtrees, and that they are all disjoint.
Let $Y_v$ be the random event that the vertices of $T_v$, except possibly for its root vertex $v$ (the root is excluded because it is connected to a parent node outside of $T_v$), is in $T_2(BIN_n)$. This event depends only on the random $X_u$ ages given to vertices in $T_v$, and Proposition~\ref{pro:wholetree} establishes that $Pr[Y_v] \ge 2^{-2\ell}$. As the $Y_v$ events are independent across the choices of vertex $v$, the probability that no $Y_v$ event holds is at most $(1 - 2^{-2\ell})^{n/\ell}$ which tends to~0 as $n$ grows (by our choice of $\ell$). Hence w.h.p.\ at least one event $Y_v$ holds, in which case $T_2(BIN_n)$ has a connected component of size at least $\ell/2 - 1 \ge \frac{\log n}{10}$.
\end{proof}

\section{Graphs for which $T_3$ has linear treewidth w.h.p.}
\label{sec:treewidth}
In this section we prove Theorem~\ref{thm:T3}.
We begin by proving Lemma \ref{lem:tw_main}.

\begin{proof}
A balanced vertex-separator in a graph $G(V,E)$ is a set $S\subseteq V$ such that every connected component in $G \setminus S$ is of size at most $2|V|/3$. It is well known (e.g., \cite{Kloks}) that if $tw(G) \le w$ then $G$ has a balanced separator $S$ such that $|S| \leq w+1$.
Our strategy is to show that $G_q$ has with high probability a subgraph $H$ such that every balanced separator in $H$ is of size $\Omega(n)$, which implies the required result by Lemma \ref{lemma:minor}.
Suppose the probability that $G_{p}$ does not have a connected component of size $\zeta n$, where $\zeta$ is a fixed positive constant depending only on $p$, is at most ${c}^n$ with $c<1$. $G_{p}$ can be exposed in two stages: first keep every vertex independently with probability $q>p$. In the remaining graph $G_q=(V_q,E_q)$, keep every vertex independently with probability $p/q$. Set $r=1-p/q$. Let $c'$ be an arbitrary number in $(c,1)$. Let $s \in (0,1/100]$ be a sufficiently small constant to be determined later. Suppose towards a contradiction that with probability at least ${c'}^n$, every subgraph $H=(U,F)$ of $G_q$ has a balanced-separator of size strictly smaller than $sn$. Repeat iteratively the following procedure: While $G_q$ has a connected component $C$ of size at least $\zeta n$ and assuming we are in the $i$th iteration, find a balanced-separator $S_i$ of the graph induced on $C$ of size at most $sn$ and delete all vertices in $S_i$ from $C$. Call a subset $W$ of the vertices of $G_q$ \emph{good} if $|W| \ge \frac{\zeta n}{4}$ and if $W$ is a union of connected components in the subgraph induced on $V_q \setminus \cup_{j \leq i}S_j$. By our assumptions on $s$ and the definition of a balanced separator, the maximum number of disjoint good sets increase
by at least one in every such iteration. It follows that after at most $\frac{4}{\zeta}$ iterations there will be no component of size larger than $\zeta n$. The total number of vertices deleted is at most $\frac{4}{\zeta}sn$. Choose $s$ to be sufficiently small such that ${c'}^nr^{\frac{4}{\zeta}sn}>{c}^n$. Then we get that the probability there is no component of size $\zeta n$ in $G_{p}$ is strictly larger than ${c}^n$. A contradiction. This proves that for  $q \in (p,1)$, $G_q$ contains a subgraph with treewidth $sn=\Omega(n)$ with probability $1-\exp(-\Omega(n))$, hence with high probability the treewidth of $G_{q}$ is $\Omega(n)$. Hence $G$ has treewidth $\Omega(n)$ as well. This concludes the the proof of the lemma.
\end{proof}

As a warm up and so as to introduce some of our techniques, before proving Theorem~\ref{thm:T3} that concerns 3-regular graphs, we shall prove a similar theorem for graphs of maximum degree~3, with the existence of degree~2 vertices making the proof easier compared to the 3-regular case. Our starting point for this short diversion is Theorem~\ref{thm:randomregular} whose proof is presented for completeness.

\begin{proof}
We prove the result for a random $d$-regular multigraph sampled according to the configuration model. Using standard contiguity results, this Theorem applies also to random (simple) $d$-regular graphs. Details are omitted.
Consider a (multi)-graph $G'_p$ that is generated according to the configuration model with the following degree sequence: vertices of $V_p$ have degree~$d$ and are referred to as the main vertices, whereas vertices of $V \setminus V_p$ are each broken into $d$ vertices of degree~1 that we refer to as auxiliary vertices. Consider the largest connected component $C$ in $G'_p$ and suppose that it has at least $d+2$ vertices. The auxiliary vertices in $C$ all have degree~1, whereas main vertices in $C$ are each connected to at most $d$ auxiliary vertices. Hence removing the auxiliary vertices from $C$ leaves a connected subcomponent $C'$ composed only of main vertices, and its size satisfied $|C'| \ge |C|/d$. This subcomponent $C'$ forms a connected component in $G_p$.

To analyze $|C|$ we check the Molloy-Reed criterion. Using standard concentration results we may assume that $G_p$ has $pn$ vertices of degree $d$ and $d(1 - p)n$ vertices of degree~1. The Molloy-Reed criterion requires analyzing the sign of the expression $pd(d-2) - d(1-p)$, which is positive if and only if $p > \frac{1}{d-1}$. The value of $\delta$ can be chosen such that for $p \ge \frac{1 + \epsilon}{d-1}$ the Molloy-Reed criterion implies that $G'_p$ has a connected component of size at least $\delta d n$ with probability at least $1 - e^{-c n}$. Hence $G_p$ has a connected component of size at least $\frac{\delta d n}{d} = \delta n$.
\end{proof}

\begin{corollary}
\label{cor:treewidth}
Let $G(V,E)$ be a random $d$-regular (multi)-graph on $n$ vertices selected according to the configuration model. Let $G_p(V_p,E)$ be a random vertex induced subgraph of $G$ in which every vertex is selected into $V_p$ independently with probability $p>\frac{1+\epsilon}{d-1}$, $\epsilon$ being some small constant. Then with high probability $G_p$ has treewidth $\Omega(n)$, where the $\Omega$ notation hides constants that depend on $d$ and on $\epsilon$, but not on $n$.
\end{corollary}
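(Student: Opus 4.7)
The plan is to combine Theorem~\ref{thm:randomregular}, which guarantees a giant component with exponentially small failure probability once the density exceeds the threshold, with Lemma~\ref{lem:tw_main}, which converts such an exponential tail bound on ``no giant component'' at one density into a linear treewidth lower bound at any strictly higher density. The only nontrivial wrinkle is that Theorem~\ref{thm:randomregular} is a statement about the joint distribution of $G$ and its vertex-percolation, while Lemma~\ref{lem:tw_main} is phrased for a fixed deterministic host graph; so a short conditioning argument is needed to bridge the two.

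Concretely, given $p>\tfrac{1+\epsilon}{d-1}$, I would first pick an intermediate threshold $p':=\tfrac{1+\epsilon/2}{d-1}$, so that $p'<p$ and $p'$ is still strictly above the Molloy--Reed threshold by a constant margin. Theorem~\ref{thm:randomregular} applied with $\epsilon/2$ in place of $\epsilon$ then yields constants $\delta,c>0$ (depending on $d$ and $\epsilon$ but not on $n$) such that, over the joint distribution of $G$ and $G_{p'}$,
\[
\Pr\bigl[G_{p'}\text{ has a component of size }\ge \delta n\bigr]\ \ge\ 1-e^{-cn}.
\]

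Next I would convert this joint statement into a per-graph statement. Let $\mathcal{B}$ be the set of $d$-regular (multi)graphs $G$ for which the conditional probability $\Pr[G_{p'}\text{ has no component of size }\ge \delta n\mid G] $ exceeds $e^{-cn/2}$. By Markov's inequality applied to the conditional failure probability, $\Pr[G\in\mathcal{B}]\le e^{-cn/2}$. Hence with probability $1-e^{-cn/2}$ the realized multigraph $G$ is ``good'' in the sense that its vertex percolation at level $p'$ has a component of size $\delta n$ except with probability at most $e^{-cn/2}$. For every such good $G$, the hypothesis of Lemma~\ref{lem:tw_main} is met with the constants $\zeta=\delta$ and base $c_0=e^{-c/2}\in(0,1)$, so applying the lemma with $q=p>p'$ yields that $G_p$ has treewidth $\Omega(n)$ with probability $1-\exp(-\Omega(n))$ (conditional on $G$). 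A union bound over the two failure events gives treewidth $\Omega(n)$ with probability $1-\exp(-\Omega(n))$ over the joint distribution, which is the desired conclusion.

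The only real step that demands care is this conditioning bridge between Theorem~\ref{thm:randomregular} and Lemma~\ref{lem:tw_main}; everything else is a matter of choosing the intermediate density $p'$ appropriately so that (i) $p'$ lies strictly above the Molloy--Reed threshold and (ii) $p'<p$ so that Lemma~\ref{lem:tw_main} can be invoked with $q=p$. No separate argument is required for simple (rather than multi-) graphs since Theorem~\ref{thm:randomregular} is already stated in both regimes via contiguity.
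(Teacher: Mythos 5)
Your proposal is correct and takes essentially the same approach as the paper, which simply declares the corollary an ``immediate consequence'' of Lemma~\ref{lem:tw_main} and Theorem~\ref{thm:randomregular}. You carefully fill in the conditioning bridge (choosing an intermediate density $p'<p$ and applying Markov's inequality to the conditional failure probability) needed to pass from the joint-distribution statement of Theorem~\ref{thm:randomregular} to the fixed-host-graph hypothesis of Lemma~\ref{lem:tw_main}, a detail the paper leaves implicit.
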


\begin{proof}
Immediate consequence of Lemma \ref{lem:tw_main} and Theorem \ref{thm:randomregular}.
\end{proof}

We now show how by a simple transformation we can turn any $d$-regular graph (multi)graph $G$ to a graph $\overline{G}$ such that the applying the site percolation  process on $G$ with $p=\frac{3}{d+1}$ is essentially stochastically dominated by taking $T_3$ on $\overline{G}$. Given a graph $G$, $\overline{G}$ is obtained from $G$ by replacing every edge $(u,v)$ by a path of length $3$, $u-x_{uv}-y_{uv}-v$ where we add two new vertices $x_{uv},y_{uv}$ for every edge $(u,v) \in E$.
Observe that in a graph $H=(U,F)$ along with a collection of its vertices
$v_1,...,v_s$ such that for every $1\leq i<j\leq s$, $d(v_i,v_j)>2$, the
events $A_i:=\{v_i \in T_3(H)\}$ are mutually independent as $A_i$ depends
only on the ages of $v_i$ and its neighbors. Hence the events $A_i$ for $i \in
[s]$ depend on ages of pairwise disjoint sets of vertices.

\begin{theorem}
\label{main} For a fixed $d>2$ and arbitrarily large $m$ there exist a graph $\widetilde{G}$ of maximal degree $d$ with $m$ vertices, such that with high probability $T_3(\widetilde{G})$ has treewidth $\Omega(m)$.
\end{theorem}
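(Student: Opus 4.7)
The plan is to apply the subdivision construction $\overline{\,\cdot\,}$ defined just before the theorem to a random host graph, and then invoke Corollary \ref{cor:treewidth} via a minor argument. Concretely, take $G$ to be a random $d$-regular graph on $n$ vertices and set $\widetilde{G} := \overline{G}$. Since $G$ has $dn/2$ edges and each contributes two new degree-$2$ vertices, $\widetilde{G}$ has $(d+1)n$ vertices and maximum degree $d$; by choosing $n$ appropriately (and if necessary padding with isolated vertices, which change neither the maximum degree nor the treewidth) we can hit arbitrarily large $m$, and $\Omega(n)=\Omega(m)$.

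The first key observation is to identify $V(G)\cap T_3(\widetilde{G})$. Every subdivision vertex $x_{uv}$ or $y_{uv}$ has degree $2$ in $\widetilde{G}$, so it belongs to $T_3(\widetilde{G})$ automatically (a degree-$2$ vertex can have at most two younger neighbors). Each original vertex $v$ has degree exactly $d$ in $\widetilde{G}$, so by symmetry $\Pr[v\in T_3(\widetilde{G})]=3/(d+1)=:p$. Crucially, any two distinct original vertices are at distance at least $3$ in $\widetilde{G}$, so their $\widetilde{G}$-neighborhoods are disjoint; by the independence observation stated immediately before the theorem, the events $\{v\in T_3(\widetilde{G})\}_{v\in V(G)}$ are mutually independent. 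Hence $V_T:=V(G)\cap T_3(\widetilde{G})$ has exactly the law of Bernoulli site percolation on $V(G)$ with parameter $p$. Since $d>2$ gives $p=3/(d+1)>1/(d-1)$, Corollary \ref{cor:treewidth} applies: $G[V_T]$ has treewidth $\Omega(n)$ with probability $1-\exp(-\Omega(n))$ jointly over $G$ and the ages.

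A standard Fubini/averaging argument then extracts a specific $d$-regular host $G^*$ such that, over the ages alone, $G^*[V_T]$ has treewidth $\Omega(n)$ with high probability. It remains to transfer this bound to $T_3(\overline{G^*})$: for every $G^*$-edge $(u,v)$ with both endpoints in $V_T$, the whole length-$3$ path $u-x_{uv}-y_{uv}-v$ lies in $T_3(\overline{G^*})$ because the midpoints are present automatically, so contracting each such path realizes $G^*[V_T]$ as a minor of $T_3(\overline{G^*})$. Lemma \ref{lemma:minor} then yields $tw(T_3(\widetilde{G}))\ge tw(G^*[V_T])=\Omega(m)$, as desired.

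The one point that needs care is the quenching step in the previous paragraph: we need the joint failure probability in Corollary \ref{cor:treewidth} to be exponentially small so that, after conditioning on a typical $G^*$, the conditional failure probability over the ages is still $o(1)$. Fortunately both ingredients of the corollary — Theorem \ref{thm:randomregular} and Lemma \ref{lem:tw_main} — provide $\exp(-\Omega(n))$ bounds, so Markov applied to the conditional failure probability suffices. Everything else (the degree/age computation for $T_3$, the pairwise independence across original vertices, and the minor argument) is essentially bookkeeping.
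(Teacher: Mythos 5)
Your proposal is correct and follows essentially the same route as the paper: subdivide each edge of a random $d$-regular graph twice, observe that the degree-$2$ subdivision vertices survive in $T_3$ deterministically while each original vertex survives independently with probability $3/(d+1) > 1/(d-1)$, and then pull back Corollary \ref{cor:treewidth} via the minor obtained by contracting the surviving length-$3$ paths. You are somewhat more explicit than the paper about the quenching/Fubini step needed to pass from a joint high-probability statement (over $G$ and the ages) to a fixed host graph $G^*$ with a high-probability statement over the ages alone, which is a worthwhile clarification but not a different method.
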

\begin{proof}
Take $G$ from Theorem \ref{thm:randomregular}: namely an $n$-vertex $d$-regular (multi)graph sampled from the configuration model and examine $\overline{G}$. The number of vertices in $\overline{G}$ is $m = \Theta(nd)$ and it has maximal degree $d$. Since vertices of degree $2$ remain with probability $1$ in $T_3$ (where $T_3$ refers to the first three layers in $\overline{G}$), $T_3$ is distributed as the graph obtained by \emph{independently} keeping each vertex in $G \cap \overline{G}$ (that is, all vertices of degree larger than $2$) with probability $p=\frac{3}{d+1}$ and keeping the rest of the vertices of $\overline{G}$ with probability 1. Hence $G_p$ is a minor of $T_3(\overline{G})$.
By Corollary~\ref{cor:treewidth}, $G_p$ has linear treewidth (and its number of vertices is $\Omega(n)$), implying that $T_3(\overline{G})$ has treewidth $\Omega(m)$ (where in both these last uses of the $\Omega$ notation it hides terms that depend on $d$).
\end{proof}

We proceed now to prove Theorem~\ref{thm:T3}, showing that there are $3$-regular graphs for which (with high probability) $T_3$ has a linear sized connected component and linear treewidth.

Let $G(V,E)$ be a 3-regular $n$-vertex (we assume $n$ is even) random graph with $E$ composed of two disjoint sets of edges, $C$ and  $M$. The vertex set of $G$ is $[0,n-1]$, $|C| = n$ and these edges form a cycle connecting all vertices in the standard cyclic order from~0 to~$n-1$ (without loss of generality we can label the vertices of the cycle with $0$ till $n-1$). $|M| = n/2$ and these edges form a random matching.

The following Theorem implies the first part of Theorem~\ref{thm:T3}.

\begin{theorem}
\label{thm:L3}
There is some fixed $\delta > 0$ independent of $n$, such that with probability $1 - e^{-\Omega(n)}$, the subgraph induced on $T_3(G(V,E))$ has a connected component of size at least $\delta n$. The probability is taken both over the random choice of $M$ and over the random permutation $\pi$.
\end{theorem}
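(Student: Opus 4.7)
The plan is to reveal the randomness in two independent stages---first the vertex ages, then the random matching $M$---and to apply the Molloy-Reed criterion (Theorem~\ref{thm:MolloyReed}) to an auxiliary random multigraph $H$ that is a subgraph of $T_3(G)$. After the ages are revealed, call a cycle vertex $v$ a \emph{pure survivor} if at least one of its two cycle neighbors is younger than $v$; equivalently, $v$ is not the youngest among itself and its two cycle neighbors. A pure survivor automatically lies in $T_3(G)$, regardless of $M$, since it already has at most one older cycle neighbor and hence at most two older neighbors in total. The marginal probability that $v$ is a pure survivor is $2/3$, and two consecutive cycle vertices cannot both fail to be pure survivors (this would demand both $X_{v_i} < X_{v_{i+1}}$ and $X_{v_{i+1}} < X_{v_i}$). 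Consequently the pure survivors form maximal arcs along $C$ separated by isolated non-survivors, and by Azuma's inequality both the number of arcs ($\sim n/3$) and the empirical arc-length distribution (with mean $\sim 2$) concentrate around their expectations with failure probability $e^{-\Omega(n)}$.

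In the second stage, expose $M$ uniformly at random and define $H$ as the multigraph whose super-vertices are the maximal arcs of pure survivors and whose (multi-)edges are matching edges of $G$ with both endpoints pure survivors. Because $M$ is independent of the ages, we may equivalently view each vertex of $G$ as contributing one matching half-stub---labeled by its arc if it is a pure survivor and ``void'' otherwise---and view $M$ as a uniform pairing of these $n$ half-stubs. Under this description $H$ is precisely a configuration-model multigraph whose super-degree sequence is determined by the independent pairings of stubs.

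To verify the Molloy-Reed criterion, compute the mean super-degree of $H$: an arc of length $a$ has $a$ half-stubs, each pairing with a non-void stub with probability $\sim 2/3$, giving expected super-degree $\sim 2a/3$ and hence mean super-degree $\sim 4/3$ across super-vertices. Using $\Var(a) > 1$ for the arc-length distribution (which is essentially shifted-geometric with mean $2$), a direct calculation yields $Q(\bar d) = \mathbb{E}[d(d-2)] > 0$. Theorem~\ref{thm:MolloyReed} then produces a super-component of size $\Omega(n)$ in $H$ with failure probability $e^{-\Omega(n)}$, and since each arc is internally connected by cycle edges in $T_3(G)$ and each super-edge is a matching edge in $T_3(G)$, the giant super-component lifts to a linear-size connected component of $T_3(G)$.

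The main technical obstacle will be to make the configuration-model description rigorous: matching edges lying inside a single arc produce self-loops in $H$ that must be discarded, and the super-degree sequence is itself random, so Theorem~\ref{thm:MolloyReed} must be applied on top of conditioning on a high-probability ``typical'' age profile. These issues are addressed by conditioning on the event that the arc-length empirical distribution is close to its expectation (which occurs with probability $1 - e^{-\Omega(n)}$ by Azuma), and then invoking Theorem~\ref{thm:MolloyReed} on the resulting configuration-model multigraph; intra-arc matching edges form a vanishing fraction of the total and do not destroy the giant super-component.
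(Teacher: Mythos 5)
Your high-level strategy coincides with the paper's---reveal $\pi$ first, then the matching $M$, build an auxiliary configuration-model multigraph from the arcs that survive using only cycle edges, and apply the Molloy--Reed criterion---but there are two substantive issues and one error. The error is a direction flip: under the paper's convention a vertex is in $T_3$ iff at most two of its neighbors are \emph{younger} than it, so the correct ``pure survivor'' condition is that $v$ has at least one \emph{older} cycle neighbor (equivalently, $v$ is not the \emph{oldest} among $\{v-1,v,v+1\}$), which guarantees at most one younger cycle neighbor and hence at most two younger neighbors in total. As written, ``at least one cycle neighbor younger'' only bounds the number of \emph{older} neighbors; such a $v$ could have two younger cycle neighbors and a younger matching neighbor, placing it in $L_4$. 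The fix is cosmetic by the $X\mapsto 1-X$ symmetry (the corrected set is exactly the paper's $V_2 = T_2(G(V,C))$ and has the same law as yours), but the stated justification that pure survivors automatically lie in $T_3$ is wrong.

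The more serious gap is in the Molloy--Reed step. You prune away the non-survivors up front, so the super-degree sequence of your $H$ is a random function of $M$; the phrase ``precisely a configuration-model multigraph'' hides the fact that the degree sequence is not fixed, and $H$ is only conditionally a configuration model given which arc stubs paired with void stubs. One would need to condition on the super-degree profile, prove it concentrates, and verify Molloy--Reed for the typical profile. The paper avoids this entirely by \emph{keeping} each of the $\approx n/3$ non-survivors as a degree-1 auxiliary vertex, making the degree sequence of its $H$ a deterministic function of $\pi$; Theorem~\ref{thm:MolloyReed} then applies directly, and a separate (short) argument shows that removing the auxiliary leaves still leaves a $T_3$-component of size at least $2|K|/3$. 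Your verification also leans on the unproved assertion $\Var(a)>1$. This is in fact exactly what your Binomial heuristic needs ($Q>0 \Leftrightarrow \E[a^2]>5 \Leftrightarrow \Var(a)>1$, given $\E[a]=2$), but the arc-length distribution is not shifted-geometric (the paper computes $p_1 = 2/15$, $p_2 = 1/9$, ratio $5/6$), and you give no bound on $\E[a^2]$. The paper sidesteps this via Proposition~\ref{pro:equal}: it shows the worst-case, most-equalized large-arc profile still yields $Q = 1/30 > 0$, obviating any need to compute higher moments or $p_k$ for $k\ge 3$. Your lift of the giant super-component back to $T_3(G)$ is clean and, modulo the above, correct; but that gain is outweighed by the extra work your random-degree-sequence route would require to be made rigorous.
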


Throughout the proof of Theorem~\ref{thm:L3} presented below we shall compute the expectations of certain random variables. We expose the vertices one vertex at a time, according to the standard cyclic order starting with the vertex labeled by 1. By Azuma's inequality and as $G$ has bounded degree, all random variables concerned are highly concentrated around their expectations, and hence we are justified in assuming that their realized value is equal to their expectation up to negligible additive terms that do not affect our proof.

To prove Theorem~\ref{thm:L3} we first choose $\pi$, and choose $M$ only afterwards. 
Fix a random permutation $\pi$ over the vertices.

\begin{proposition}
Consider an arbitrary ordering $\pi$ of $V$. For every matching $M$ of the vertices of $V$ it holds that with respect to $\pi$, $T_2(G(V,C)) \subset T_3(G(V,E))$.
\end{proposition}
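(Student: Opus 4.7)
The proposition is essentially a definition chase, and the plan is to unpack the two membership conditions and compare them on a vertex-by-vertex basis. I will fix an arbitrary vertex $v \in T_2(G(V,C))$ and an arbitrary matching $M$, and show $v \in T_3(G(V,E))$; the inclusion of induced subgraphs then follows automatically because every edge of $T_2(G(V,C))$ is a cycle edge, hence lies in $E \supseteq C$.

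First I would recall that membership in $T_k(H,\pi)$ means that the vertex has at most $k-1$ neighbors (in $H$) preceding it in $\pi$. So $v \in T_2(G(V,C))$ says $v$ has at most one $\pi$-earlier neighbor among its (at most two) neighbors in the cycle $C$. Now I pass to the graph $G(V,E)$: the neighborhood of $v$ in $E = C \cup M$ consists of its cycle-neighbors together with its unique matching partner (each vertex of $V$ is incident to exactly one edge of $M$ since $|M|=n/2$ is a perfect matching). Therefore the number of $\pi$-earlier neighbors of $v$ in $G(V,E)$ is at most
\[
(\text{earlier $C$-neighbors of $v$}) + (\text{earlier $M$-neighbors of $v$}) \le 1 + 1 = 2,
\]
which is exactly the condition for $v \in T_3(G(V,E))$.

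Finally, for the subgraph (not merely vertex-set) containment, I would note that the induced subgraph on $T_2(G(V,C))$ inside $G(V,C)$ uses only cycle edges, and every cycle edge belongs to $E$. Thus any edge between two vertices of $T_2(G(V,C))$ is an edge of $G(V,E)$ incident to two vertices of $T_3(G(V,E))$, and so appears in the induced subgraph on $T_3(G(V,E))$. There is no real obstacle here — the only thing to be slightly careful about is the abuse of notation whereby $T_k$ denotes both a vertex set and an induced subgraph; handling both interpretations is what the two paragraphs above accomplish.
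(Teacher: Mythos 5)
Your proof is correct, and the paper itself omits a proof of this proposition (treating it as immediate). Your argument is the natural definition-unwinding one would give: $v \in T_2(G(V,C))$ means at most one $\pi$-earlier cycle-neighbor, and since a matching adds at most one more neighbor, $v$ has at most two $\pi$-earlier neighbors in $G(V,E)$, i.e.\ $v \in T_3(G(V,E))$; the edge containment follows because $C \subseteq E$. One tiny note: you invoke that $M$ is a perfect matching, but this is not needed — any matching contributes at most one new neighbor per vertex, which is all the bound requires — so the argument holds exactly at the generality the proposition states.
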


We use $V_2$ to denote $T_2(G(V,C))$.  The randomness of $\pi$ easily implies that the expected size of $V_2$ is $2n/3$. The expected number of connected components in $T_2(G(V,C))$ is $n/3$, because every vertex not in $V_2$ contributes exactly one connected component (by cutting the cycle once).

The above establishes that the average size of a connected component in $T_2(G(V,C))$ is~2.
In our proof we shall analyze the distribution of sizes of connected components in $T_2(G(V,C))$. We first show that no connected component is too large. (This of course follows also from Theorem~\ref{thm:T2}, but can be proven much more easily in our case.)

\begin{proposition}
\label{pro:nolargecomponent}
Almost surely, no connected component in $T_2(G(V,C))$ contains more than $O(\log n)$ vertices.
\end{proposition}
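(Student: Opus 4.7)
The plan is to show that every connected component of $T_2(G(V,C))$ consists of consecutive vertices along the cycle $C$ (since $T_2(G(V,C))$ is an induced subgraph of the $n$-cycle), and then to union-bound the probability that some interval of $k := \lceil C_0 \log n \rceil$ consecutive cycle vertices is entirely contained in $T_2$, for a sufficiently large constant $C_0$.

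First, I would characterize the event $\{v_1, \ldots, v_k\} \subseteq T_2$ for a path of consecutive cycle vertices. Each vertex of $C$ has degree exactly $2$, so the layer condition reads ``at most one cycle neighbor has smaller age,'' equivalently, ``at least one cycle neighbor has larger age.'' For an interior vertex $v_i$ with $2 \le i \le k-1$ both cycle neighbors lie inside the path, and the condition becomes precisely that $v_i$ is not a strict local maximum of the age sequence $X_{v_1}, \ldots, X_{v_k}$. In particular,
\[
\Pr[\{v_1, \ldots, v_k\} \subseteq T_2] \;\le\; p_k,
\]
where $p_k$ is the probability that a uniformly random permutation of $k$ distinct values has no strict local maximum at any of its interior positions $2, \ldots, k-1$.

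Next I would compute $p_k$. The overall maximum of the $k$ ages must sit at an endpoint (otherwise it is itself a strict interior local maximum), an event of probability $2/k$; conditioning on this and deleting that endpoint leaves a uniformly random ordering of the remaining $k-1$ ages, and the surviving constraint on this shorter sequence is exactly ``no strict local maximum at its interior positions,'' i.e., the same event in length $k-1$. This yields $p_k = \tfrac{2}{k}\, p_{k-1}$ for $k \ge 3$ with $p_2 = 1$, so $p_k = 2^{k-1}/k!$.

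Finally, a union bound over the $n$ possible starting vertices of a length-$k$ interval along the cycle gives
\[
\Pr[\exists \text{ an interval of } k \text{ consecutive vertices entirely in } T_2] \;\le\; n \cdot \frac{2^{k-1}}{k!},
\]
and Stirling's bound $k! \ge (k/e)^k$ makes the right-hand side at most $n \cdot (2e/k)^k$, which is $o(1)$ as soon as $k$ is a sufficiently large constant multiple of $\log n$. The only nontrivial step is the first one, namely translating $T_2$-membership of an interior cycle-path vertex into the absence of a strict local maximum in its age neighborhood; this follows at once from $C$ being $2$-regular, and the rest is routine counting.
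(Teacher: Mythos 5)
Your argument is correct, and it reaches the same conclusion as the paper via the same outer structure (a union bound over length-$k$ arcs of the cycle, using that connected subsets of a cycle are arcs), but the key probabilistic estimate is obtained by a genuinely different, more refined route. The paper bounds the probability that an arc of length $\ell$ lies entirely in $T_2$ by extracting $\lceil \ell/3 \rceil$ vertices that are pairwise at distance at least $3$ on the cycle, observing that their $L_3$-membership indicators are then mutually independent, and bounding the probability by $(2/3)^{\ell/3}$. You instead characterize $T_2$-membership of an interior cycle vertex exactly as ``not a strict local maximum of the age sequence,'' which turns the event into a clean permutation statistic, and then compute its probability \emph{exactly} via the recursion $p_k = \tfrac{2}{k} p_{k-1}$ (the global maximum of the ages must sit at an endpoint), yielding $p_k = 2^{k-1}/k!$. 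Your closed form decays superexponentially, so it is much sharper than the paper's $(2/3)^{\ell/3}$; in fact it would even show that the largest component is $O(\log n/\log\log n)$, not just $O(\log n)$. Both approaches exploit the same structural fact (the cycle is $2$-regular, so $T_2$-membership is a purely local order condition), but where the paper discards information to gain independence, you keep the full dependence and solve the resulting combinatorial problem exactly. Either is a complete proof of the stated proposition.
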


\begin{proof}
Consider a set $S$ of $\ell$ consecutive vertices on the cycle. For $S$ to form a connected component in $T_2(G(V,C))$, it is required that none of its vertices is in $L_3(G(V,C))$. The probability for any individual vertex to belong to $L_3(G(V,C))$ is exactly~$1/3$. Any two vertices that are neither neighbors in $G(V,C)$ nor share a common neighbor in $G(V,C)$ are independent with respect to containment in $L_3(G(V,C))$. Hence $S$ contains a subset of at least $\ell/3$ independent vertices, and the probability that none of them is in $L_3(G(V,C))$ is at most $\left(\frac{2}{3}\right)^{\ell/3}$. As there are only $n$ ways of choosing the starting location of the set $S$, a union bound implies that almost surely no component contains more than $O(\log n)$ vertices.
\end{proof}

We now show that not too many of the connected components in $T_2(G(V,C))$ are very small.
For vertex $i$ and parameter $1 \le k < n$, let $p_k$ denote the probability (over choice of random permutation $\pi$) that $i-1 \not\in V_2$, $i+k \not\in V_2$, whereas $i, \ldots i+k-1$ are in $V_2$, where arithmetic is performed modulo $n$. Namely, $p_k$ is the probability that $i$ is a prefix of a segment of exactly $k$ consecutive vertices that belong to $V_2$. Observe that the probability $p_k$ does not depend on the choice of vertex $i$, by symmetry.

\begin{proposition}
\label{pro:lowdegrees}
For $p_k$ as defined above, $p_1 = \frac{2}{15}$ and $p_2 = \frac{1}{9}$.
\end{proposition}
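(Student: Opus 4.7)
The plan is to translate each event into a condition on the relative order of a small window of consecutive ages, and then reduce the calculation to counting linear extensions of an explicit poset. Working in the local sampling view, let $X_j\in[0,1]$ denote the age of vertex $j$. Since $G(V,C)$ is the $n$-cycle, every vertex has exactly two neighbors, so $j\in V_2$ iff at most one neighbor of $j$ is younger, i.e.\ iff $j$ is \emph{not} the oldest of the triple $\{j-1,j,j+1\}$. In particular, $\{j\notin V_2\}$ is the event $\{X_j>X_{j-1}\}\cap\{X_j>X_{j+1}\}$, which is a strong constraint that forces both cyclic neighbors of $j$ to be younger than $j$.

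For $p_1$, the event $\{i-1\notin V_2\}\cap\{i\in V_2\}\cap\{i+1\notin V_2\}$ depends only on the five ages $X_{i-2},\ldots,X_{i+2}$. The outer conditions force $X_{i-1}>X_i$ and $X_{i+1}>X_i$, which already imply that $i$ is not the oldest of $\{i-1,i,i+1\}$, so the middle condition $\{i\in V_2\}$ is automatic. The event thus reduces to the four inequalities
\[
X_{i-1}>X_{i-2},\quad X_{i-1}>X_i,\quad X_{i+1}>X_i,\quad X_{i+1}>X_{i+2}.
\]
Because the ages are i.i.d.\ continuous, the probability equals $L(P_1)/5!$, where $L(P_1)$ is the number of linear extensions of the five-element poset $P_1$ defined by these inequalities. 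I would enumerate by conditioning on which of the three minimal elements $X_{i-2},X_i,X_{i+2}$ is the overall minimum; the two symmetric cases contribute $5$ each and the middle case contributes $6$ (reducing to interleaving two disjoint $2$-chains, $\binom{4}{2}=6$), for a total of $16$. Hence $p_1=16/120=2/15$.

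For $p_2$ the argument is analogous on a window of six ages $X_{i-2},\ldots,X_{i+3}$. The boundary conditions $\{i-1\notin V_2\}$ and $\{i+2\notin V_2\}$ force $X_{i-1}>X_i$ and $X_{i+2}>X_{i+1}$, which automatically imply the two interior conditions $\{i,i+1\in V_2\}$. The event reduces to
\[
X_{i-1}>X_{i-2},\ X_{i-1}>X_i,\qquad X_{i+2}>X_{i+1},\ X_{i+2}>X_{i+3},
\]
which is the disjoint union of two three-element ``cherry'' posets on disjoint sets of variables. Each cherry has $2$ linear extensions, and interleaving two sequences of length three contributes a factor $\binom{6}{3}=20$, giving $L(P_2)=2\cdot2\cdot20=80$ and $p_2=80/720=1/9$. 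The only subtle step is noticing that the interior membership conditions are forced by the boundary exclusion conditions; once this redundancy is spotted, the problem is pure poset enumeration on $5$ and $6$ elements and presents no real difficulty.
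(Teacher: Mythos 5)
Your proof is correct and takes essentially the same approach as the paper: both reduce the events to order constraints on a short window of consecutive vertices, observe that the boundary exclusion conditions (forcing $X_{i-1}>X_i$ and, for $p_2$, $X_{i+2}>X_{i+1}$) automatically imply the interior membership conditions, and then count. The only stylistic difference is in $p_2$, where the paper simply notes that the two events $\{\pi(a)<\pi(b)>\pi(c)\}$ and $\{\pi(d)<\pi(e)>\pi(f)\}$ are independent with probability $\frac{1}{3}$ each (since they involve disjoint triples), while you enumerate $2\cdot 2\cdot\binom{6}{3}=80$ linear extensions out of $720$; both give $\frac{1}{9}$.
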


\begin{proof}
To analyze $p_1$, consider five consecutive vertices $a,b,c,d,e$ on the cycle $C$, and compute the probability (over choice of $\pi$) that $c \in V_2$ whereas $b,d \not\in V_2$ (hence $c$ serves as $i$ in the definition of $p_1$). This event happens if and only if $\pi(a) < \pi(b) > \pi(c) < \pi(d) > \pi(e)$.  The permutation $\pi$ can be thought of as a bijection from $\{a,b,c,d,e\}$ to $\{1,2,3,4,5\}$. The permutations satisfying the event are $*5*4*$ (6 permutations), $45*3*$ (2 permutations), and their reverses $*4*5*$ and $*3*54$ (in the notation above $*$ serves as a ``don't care" symbol). Hence $p_1 = \frac{16}{120} = \frac{2}{15}$.

To analyze $p_2$, consider six consecutive vertices $a,b,c,d,e,f$ on the cycle $C$, and compute the probability (over choice of $\pi$) that $c,d \in V_2$ whereas $b,e \not\in V_2$. This event happens if and only if $\pi(a) < \pi(b) > \pi(c)$ and $\pi(d) < \pi(e) > \pi(f)$. Each of these two events has probability~$1/3$ and they are independent, hence $p_2 = \frac{1}{9}$.
\end{proof}

Hence in expectation, $T_2(G(V,C)$ has $2n/15$ components of size~1, $n/9$ components of size~2, and hence $n/3 - 2n/15 - n/9 = 4n/45$ components of size~3 or more. These larger components contain $2n/3 - 2n/15 - 2n/9 = 14n/45$ vertices, and hence their average size is $7/2$.

Construct now an auxiliary multigraph $H$ with two sets of vertices, $U_1$ and $U_2$. Every component in $T_2(G(V,C))$ serves as a vertex in $U_1$, of degree equal to its size.
The set $U_2$ consists of the $n/3$ vertices (in expectation) of $V \setminus V_2$, each of degree~$1$. Observe that the set of edges introduced by the random matching $M$ (which is part of the description of $G$) is distributed exactly like the set of edges introduced by the configuration model for generating random graphs with vertex set and degree sequence as described for $H$. This configuration model gives a random multigraph $H$. In the multigraph $H$, let $K$ be the connected component of largest size. We claim that $T_3(G)$ has a component of size at least $2|K|/3$. This can be seen as follows. Every vertex $v$ of degree~1 in $H$ that is part of $K$ must be connected in $K$ to some vertex $u$ that has degree more than~1 in $H$ (as otherwise $|K| = 2$, a case that can be dismissed as having exponentially small probability). Hence removing $v$ from $K$ does not disturb connectivity of those vertices remaining in $K$. For every vertex $v$ removed for this reason from $K$, the other endpoint $u$ of its matching edge was in $U_1$ (because only $U_1$ vertices can have degree more than~1). Moreover, within the component of $T_2(G(V,C))$ that corresponds to $u$, the endpoint of this matching edge hits a unique vertex of $V$. Hence if $K$ had $K_1$ vertices of degree~1 in $H$ (regardless of whether these vertices belong to $U_1$ or to $U_2$), then after removing them it still contains a set of at least $\max[K_1, 2(|K|-K_1)]$ vertices from $V$ that form a connected component in $T_3(G)$. This expression is minimized when $K_1 = 2|K|/3$, giving $2|K|/3$. ({\bf Remark:} $T_3(G)$ is likely to have components significantly larger than $2|K|/3$, because vertices in $T_3(G(V,E)) \setminus T_2(G(V,C))$ also contribute to the formation of a giant component. However, this aspect is not needed for our proof.)

It remains to analyze the probable size of the largest connected component in $H$. The Molloy-Reed criterion implies that $H$ is likely to have a giant component iff $\sum_{i > 0} \alpha_i d_i(d_i - 2) > 0$, where $\alpha_i$ is the fraction of vertices of degree $d_i$. ({\bf Remark:} the Molloy-Reed criterion is applicable to graphs in which the maximum degree is bounded by roughly $n^{1/4}$. The maximum degree in $H$ is smaller than the size of the maximum component in $T_2(G(V,C))$, which as shown in Proposition~\ref{pro:nolargecomponent} is at most $O(\log n)$.) To employ the Molloy-Reed criterion we need to know the degree sequence of $H$. Part of it is implied by Proposition~\ref{pro:lowdegrees}. Following that proposition we inferred that in addition to components of size~1 and~2, $T_2(G(V,C))$ has $4n/45$ components of average size $7/2$. Proposition~\ref{pro:equal} implies that the worst case for us is when there are $2n/45$ components of size~3 and $2n/45$ components of size~4, with no larger components.

\begin{proposition}
\label{pro:equal}
Consider two vertices of degree $d$ and $d' \ge d + 2$. Then the expression $\sum_{i > 1} \alpha_i d_i(d_i - 2) > 0$ decreases by replacing them by vertices of degrees $d + 1$ and $d' - 1$.
\end{proposition}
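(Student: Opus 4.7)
The plan is a direct computation using the strict convexity of $f(x) := x(x-2) = x^2 - 2x$. First, I would observe that $\sum_{i \ge 1} \alpha_i d_i (d_i - 2)$ is (up to the normalizing factor $1/|V(H)|$) simply $\sum_{v} f(d_v)$, the sum of $f(d_v)$ over all vertices $v$ of $H$. Therefore, replacing one vertex of degree $d$ and one vertex of degree $d'$ by vertices of degrees $d+1$ and $d'-1$ changes the expression by exactly
$$\frac{1}{|V(H)|}\Bigl[f(d+1) + f(d'-1) - f(d) - f(d')\Bigr].$$

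Next, I would carry out the one-line calculation:
$$f(d+1) + f(d'-1) - f(d) - f(d') = (d^2 - 1) + (d'^2 - 4d' + 3) - (d^2 - 2d) - (d'^2 - 2d') = -2(d' - d - 1).$$
Since by assumption $d' \ge d + 2$, we have $d' - d - 1 \ge 1$, and hence the change is at most $-2/|V(H)|$, which is strictly negative. This gives exactly the claim that the value of $\sum_{i \ge 1} \alpha_i d_i(d_i - 2)$ strictly decreases under the prescribed replacement.

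Conceptually the statement is just that $f(x) = x^2 - 2x$ is strictly convex: moving two integer arguments one unit closer to each other (while preserving their sum) strictly decreases the sum of the $f$-values. There is no real obstacle here — the only thing to be careful about is the bookkeeping, namely noting that the $\alpha_i$'s are fractions of vertices and so the relevant quantity is really $\sum_v f(d_v)$, and that the hypothesis $d' \ge d+2$ is exactly what guarantees the two new degrees are still valid (i.e. $d+1 \le d'-1$) and gives a strict decrease.
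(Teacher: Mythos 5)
Your proposal is correct and is essentially the same computation as the paper's: both evaluate the change $f(d+1)+f(d'-1)-f(d)-f(d') = -2(d'-d-1) < 0$ for $f(x)=x(x-2)$. The framing in terms of strict convexity is a nice conceptual gloss, but the underlying argument is identical.
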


\begin{proof}
Initially the contribution of the two vertices is $d(d-2) + d'(d'-2)$. After replacement it is $(d+1)(d-1) + (d'-1)(d'-3)$, which is smaller by $2(d' - d - 1)$.
\end{proof}

In summary, we may assume that the degree sequence of $H$ is as follows. There are $5n/45$ vertices of degree~2, $2n/45$ vertices of degree~3, and $2n/45$ vertices of degree~4. As the total sum of degrees is $n$, there are $21n/45$ vertices of degree~1 (which indeed gives $2n/3$ vertices in $H$, which is the sum of number of connected components in $T_2(G(V,C))$ and vertices in $V \setminus V_2$). The Molloy-Reed criterion gives (the $1/30$ term below is the result of dividing the common term $n/45$ by the total number of vertices $2n/3$):

$$\sum_{i > 0} \alpha_i d_i(d_i - 2) \ge \frac{1}{30}(21 \cdot 1 \cdot (-1) + 5 \cdot 2 \cdot 0 + 2 \cdot 3 \cdot 1 + 2 \cdot 4 \cdot 2) = \frac{1}{30}>0.$$

Hence the Molloy-Reed criterion has a strictly positive value. The proof of Theorem~\ref{thm:L3} is now complete.
We note that with some extra work the ideas in the proof above
can be applied to other random graph models: see \cite{Hermon}.

To prove Theorem~\ref{thm:T3}, it remains to prove that with high probability $T_3(G)$ has treewidth $\Omega(n)$. Observe that in the proof of Theorem~\ref{thm:L3}, the degree sequence of $H$ depends only on the random permutation $\pi$, but not on the random matching $M$. Hence fixing the degree sequence of $H$ to be that used in the proof of Theorem~\ref{thm:L3} (which holds almost surely), Theorem~\ref{thm:T3} follows from the following theorem that shows that a graph sampled from the configuration model satisfying the Molloy-Reed criteria will have with high probability linear treewidth.

\begin{theorem}
\label{thm:random_tw}
Let $G$ be sampled from the configuration model $G^*(n,\bar{d})$ with maximum degree $O(\log n)$, and suppose that $\sum_{i \ge 1} \lambda_i i(i - 2)>0$ where the notation is as in Theorem \ref{thm:MolloyReed}. Then with high probability $G$ has treewidth $\Omega(n)$.
\end{theorem}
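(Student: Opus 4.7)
The plan is to apply Lemma~\ref{lem:tw_main} with $q=1$: it suffices to exhibit constants $p_0\in(0,1)$ and $\zeta,c\in(0,1)$, all independent of $n$, such that the vertex-percolation $G_{p_0}$ of $G$ contains a connected component of size $\zeta n$ with probability at least $1-c^n$.  Lemma~\ref{lem:tw_main} then yields $\mathrm{tw}(G)=\Omega(n)$ with probability $1-\exp(-\Omega(n))$, which is the conclusion of the theorem.

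The approach is to verify this hypothesis by applying Theorem~\ref{thm:MolloyReed} to the induced subgraph $G[V_{p_0}]$ itself.  My first claim is that, after suitable conditioning, $G[V_{p_0}]$ is again distributed as a configuration-model multigraph.  Call a half-edge attached to a vertex of $V_{p_0}$ \emph{surviving} if its partner in the uniform matching defining $G$ is also attached to a vertex of $V_{p_0}$, and let $\bar{d}'$ denote the resulting surviving-degree sequence on $V_{p_0}$.  Conditional on $V_{p_0}$ and $\bar{d}'$, the exchangeability of the uniform half-edge matching implies that the restriction of the matching to surviving half-edges is itself uniform; hence $G[V_{p_0}]\sim G^*(V_{p_0},\bar{d}')$.

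The next step is to control $\bar{d}'$ and compute its Molloy--Reed value.  Exposing the partners of the $d_v$ half-edges of a retained vertex $v$ one by one, each new partner lies in $V_{p_0}$ with probability $p_0+O(1/n)$, so the surviving degree of $v$ is well-approximated by $\mathrm{Bin}(d_v,p_0)$.  Because $\max_v d_v=O(\log n)$, Azuma's inequality applied to the sequential matching exposure yields that the proportion $\lambda_k'$ of retained vertices of surviving degree $k$ satisfies $\lambda_k'=\bar\lambda_k^{p_0}+o(1)$ uniformly in $k$, with failure probability $\exp(-\Omega(n))$, where
\[
\bar\lambda_k^{p_0}\;:=\;\sum_{j\ge k}\lambda_j\binom{j}{k}p_0^{k}(1-p_0)^{j-k}.
\]
A direct calculation using $\E[X(X-2)]=j(j-1)p_0^{2}-jp_0$ for $X\sim\mathrm{Bin}(j,p_0)$ gives
\[
Q_{p_0}\;:=\;\sum_k\bar\lambda_k^{p_0}\,k(k-2)\;=\;p_0^{2}\sum_j\lambda_j\,j(j-1)\;-\;p_0\sum_j\lambda_j\,j,
\]
which at $p_0=1$ reduces to the hypothesised positive quantity $\sum_j\lambda_j\,j(j-2)>0$.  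By continuity, $Q_{p_0}>0$ throughout a left-neighbourhood of $1$; picking $p_0$ there and applying Theorem~\ref{thm:MolloyReed} to the conditional configuration model on $V_{p_0}$ (whose maximum degree remains $O(\log n)$ and whose vertex count is $\Omega(n)$ except on a set of probability $\exp(-\Omega(n))$) produces a connected component of size $\Omega(n)$ in $G[V_{p_0}]$ with failure probability $\exp(-\Omega(n))$, completing the required input to Lemma~\ref{lem:tw_main}.

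The main obstacle is the uniformity-in-$n$ of the continuity argument, since the coefficient $A_n:=\sum_j\lambda_j\,j(j-1)$ can a priori grow as large as $O(\log n)$ under only the max-degree assumption, whereas one needs $p_0<1$ to be a genuine constant (and not $1-o(1)$) in order for Lemma~\ref{lem:tw_main} to yield a treewidth lower bound of $\Omega(n)$ rather than $\omega(1)$.  Bounding $A_n$ uniformly in $n$ amounts to a standard second-moment regularity condition on the degree sequence, of the kind implicit in the ``nice degree sequences'' to which Theorem~\ref{thm:MolloyReed} is customarily applied.  Granted this, the configuration-model identification of $G[V_{p_0}]$ and the Azuma-based concentration of $\bar{d}'$ are routine, and the theorem follows by chaining the percolated Molloy--Reed with Lemma~\ref{lem:tw_main}.
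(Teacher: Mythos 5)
Your proposal is correct and follows essentially the same route as the paper: both observe that the percolated graph $G_{p}$ is again distributed as a configuration model on the surviving degree sequence, use Azuma's inequality together with the $O(\log n)$ maximum-degree bound to get exponential concentration of that sequence, verify the Molloy--Reed criterion for the thinned sequence when $p$ exceeds $\sum_i\lambda_i i / \sum_i\lambda_i i(i-1)$ (the same threshold you derive via the binomial second-moment calculation, which the paper obtains by citing Janson), and then invoke Lemma~\ref{lem:tw_main}. The regularity caveat you raise at the end (uniformity of the threshold in $n$) is also left implicit in the paper's proof, which likewise assumes a degree sequence well-behaved enough for the Molloy--Reed/Janson machinery, so it is not a point of divergence.
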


\begin{proof}
We apply similar ideas to those of \cite{Janson}, (see also \cite{Fount}) where the main observation is that that a random subgraph of $G$ is distributed according to configuration model (with the degree sequence obtained after deletions). Fix $p \in (0,1)$ and let $\widetilde{d}$ be the degree sequence obtained in $G_p$ with $\widetilde{n}$ the number of vertices of $G_p$. The $G_p$ is distributed according to $G^*(\widetilde{n},\widetilde{d})$.

When $p$ is sufficiently close to one (in fact it suffices that $p>\frac{\sum_{i \ge 1} \lambda_i \cdot i}{\sum_{i \ge 1} \lambda_i \cdot i \cdot (i-1)}$-see \cite{Janson} Theorems 3.5 and 3.9) we get using Azuma's inequality and the bounded degree assumption that with probability $1-e^{-\Omega(n)},$ $G^*(\widetilde{n},\widetilde{d})$ satisfies the Molloy-Reed criterion. Hence with probability  $1-e^{-\Omega(n)}$, $G_p$ has a component of size $\Omega(n)$. The theorem now follows from Lemma \ref{lem:tw_main}.
\end{proof}

\section{The two-dimensional grid}

In this section we prove Theorem~\ref{thm:grid}, that the first four layers of the two dimensional infinite grid ${\Z}^2$ will have a unique infinite connected component with probability~1.

\subsection{Proof overview}

We begin by explaining the ideas behind the proof that for $G={\Z}^2$, $T_4(G)$
has an infinite connected component (also referred to as an \emph{infinite
cluster})
with probability $1$. As mentioned in the introduction, our proof of~Theorem~\ref{thm:grid}
is based in
part on the existing machinery developed in percolation theory, in combination
with structural properties of the first four layers of the infinite grid.
We now describe those structural properties and explain how they can be combined
with the existing machinery to imply the assertion of Theorem~\ref{thm:grid}.

Standard results (see Lemmas \ref{lem: 0-1law1} and \ref{lem:zero_one}) imply
that it suffices to prove that $(0,0)$ belongs to an infinite cluster of
$T_4(G)$ with some positive probability $\Theta$. The graph ${\Z}_{*}^2$
is defined
to be the graph whose vertex set is that of ${\Z}^2$ and two distinct vertices
$(x_1,y_1)$ and $(x_2,y_2)$ are connected if $|x_1-x_2|\leq 1$ \emph{and}
$|y_1-y_2| \leq 1$. Observe that:

\begin{itemize}
\item The vertex $(0,0)$ does not belong to an infinite cluster in $T_4(G)
\cap
{\Z}^2$ iff $(0,0)$ is surrounded by a simple cycle $C$ in $L_5(G) \cap {\Z}_{*}^2$.
\end{itemize}

Define $$V_{even}=\{v \in \Z^2:v_1+v_2 \equiv 0 \bmod{2}\}$$
and
$$V_{odd}=\Z^2 \setminus V_{even}$$
Let ${\Z}_{even}^2$ be the graph whose vertex set is $V_{even}$ in which
two vertices $u$ and $v$ are adjacent to each other iff $\|u-v\|_{\infty}=1$
(${\Z}_{odd}^2$ is similarly defined). We further observe that:

\begin{itemize}
\item Every connected component (and hence also every cycle) of $L_5(G) \cap
{\Z}_{*}^2$ (as a subgraph of $\Z_*^2$) is entirely contained either in ${\Z}_{even}^2$
or in ${\Z}_{odd}^2$. This follows because $L_5(G)$ is an independent set
in ${\Z}^2$.
\item Both ${\Z}_{even}^2$ and ${\Z}_{odd}^2$ are isomorphic to ${\Z}^2$
(see Lemma \ref{lem: ZevenisomorphictoZ2}).
\end{itemize}

Our final observation is the essence of our proof that $T_4(\Z^2)$ contains
an infinite cluster $a.s.$ We argue that the percolation model restricted
to $V_{even}$ (respectively, $V_{odd}$) in which every vertex remains if
it belongs to $L_5(\Z^2)$ and is deleted otherwise is stochastically dominated
by the product measure with density $1/2$ on $V_{even}$ (respectively, $V_{odd}$)
denoted by $\Pr_{1/2}^{V_{even}} $ (respectively, $\Pr_{1/2}^{V_{odd}}$).
We provide a proof for $V_{even}$. We give a short review on stochastic domination
later in this section.
\begin{lemma}
\label{lem:order}
For every vertex $v \in {\Z}_{even}^2$ define a zero-one random variable
$Y_v$ which equals $1$ if $v$ belongs to $L_5({\Z}^2)$ and $0$ otherwise.
Then the law of the random field $(Y_v:v \in {\Z}_{even}^{2})$ is stochastically
dominated by the  product measure $\Pr_{1/2}^{V_{even}}$.
\end{lemma}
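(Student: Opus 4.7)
The plan is to construct an explicit monotone coupling of $(Y_v)_{v\in V_{even}}$ with an i.i.d.\ Bernoulli$(1/2)$ family, obtained directly from the ages $(X_w)_{w\in\Z^2}$ of the local-sampling description of $L_5(\Z^2)$. For every $v\in V_{even}$ I pick the ``north'' neighbor $u(v):=v+(0,1)$; note that $u(v)\in V_{odd}$ and $u(v)$ is a neighbor of $v$ in $\Z^2$. Define
\[Z_v \,:=\, \1\!\left[X_v > X_{u(v)}\right].\]

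Three short verifications are all that is needed. First, by exchangeability of the two i.i.d.\ $U[0,1]$ ages $X_v$ and $X_{u(v)}$, $\Pr[Z_v=1]=\tfrac12$. Second, $Y_v\le Z_v$ pointwise: if $v\in L_5(\Z^2)$ then $X_v$ exceeds the ages of all four of $v$'s neighbors in $\Z^2$, so in particular $X_v>X_{u(v)}$, giving $Z_v=1$. Third, the variables $(Z_v)_{v\in V_{even}}$ are mutually independent. This reduces to checking that the pairs $\{v,u(v)\}$ are pairwise disjoint as $v$ varies over $V_{even}$: for distinct $v,v'\in V_{even}$ one has $v\ne v'$ by hypothesis; $v\ne u(v')$ and $v'\ne u(v)$ because an even vertex cannot equal an odd one; and $u(v)=u(v')$ would imply $v=v'$ by injectivity of the translation $v\mapsto v+(0,1)$. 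Hence $Z_v$ and $Z_{v'}$ are functions of disjoint subfamilies of the i.i.d.\ ages $\{X_w\}_{w\in\Z^2}$.

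Assembling the three points gives a single probability space supporting $(Y_v)_{v\in V_{even}}$ together with an i.i.d.\ Bernoulli$(1/2)$ family $(Z_v)_{v\in V_{even}}$ satisfying $Y_v\le Z_v$ pointwise, which is exactly the definition of stochastic domination of $(Y_v)_{v\in V_{even}}$ by $\Pr_{1/2}^{V_{even}}$.

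The only real subtlety is spotting the coupling. A naive attempt such as $Z_v:=\1[X_v>\tfrac12]$ fails because $\{Y_v=1\}$ does not force $X_v>\tfrac12$ (all four neighbors could be younger than $\tfrac12$), and trying to manufacture independent witnesses from all four odd neighbors of $v$ breaks down because neighboring even vertices in $\Z^2_*$ share two odd neighbors. Using the \emph{same} coordinate displacement $(0,1)$ at every $v$ sidesteps both obstacles simultaneously: $u(v)$ lands in $V_{odd}$ and is a $\Z^2$-neighbor of $v$ (which drives $Y_v\le Z_v$), and the translation is injective (which ensures each age $X_w$ appears in at most one pair $\{v,u(v)\}$, driving independence of the witnesses).
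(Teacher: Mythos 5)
Your proof is correct and coincides with the paper's: the paper defines $\widehat{v}:=v+(0,1)$ and $Z_v:=\1[X_v>X_{\widehat{v}}]$, and argues exactly as you do (fair Bernoulli via exchangeability, pointwise domination since $v\in L_5$ forces $X_v$ to exceed all four neighbors' ages, and independence via the disjointness of the pairs $\{v,\widehat{v}\}$). Your write-up just spells out the disjointness and bijectivity checks a little more explicitly than the paper's terse version.
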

\begin{proof}
For $v \in V_{even}$ let $\widehat{v}:=v+(0,1)$. Define the indicator random
variable $Z_v$ to equal one if $X_v>X_{\widehat{v}}$ and zero otherwise (recall
that $X_v$ is the age of $v$). Clearly if $v$ is in $V_{even}$, then $\widehat{v}$
is in $V_{odd}$. In fact, we get a bijection from $V_{even}$ to $V_{odd}$.
Observe that $Z_v=0$ implies deterministically that $v \notin L_5$. The lemma
follows as the random variables $(Z_v : v \in V_{even})$ are jointly independent
and as $X_v$ and $X_{\widehat{v}}$ are \emph{independent}, $\Pr(X_v>X_{\widehat{v}})=\Pr(X_v<X_{\widehat{v}})=\frac{1}{2}$,
as required.
\end{proof}

The above sequence of observations reduces the question of the existence
of an infinite component
in $T_4(G)$ to that of the non-existence of a cycle around $(0,0)$ in
(independent) site percolation on ${\Z}^2$ with $p=1/2$. It is known that
site percolation on ${\Z}^2$ with $p=1/2$ is subcritical
(the probability that an infinite cluster exists is zero), and moreover,
that the probability that a vertex belongs to a component of diameter $\ell$
decays exponentially in $\ell$. This implies that the probability that there
is a cycle around $(0,0)$ in ${\Z}_{even}^2
\cap
L_5$ of length greater than $\ell$ tends to zero as $\ell$ tends to infinity,
and likewise for ${\Z}_{odd}^2
\cap L_5$.
Hence for large enough $\ell$ with positive probability there are no such
cycles. Standard arguments from percolation theory then
imply that with positive probability $(0,0)$ belongs to an infinite cluster
in $T_4$.

We also observe that similarly to the case of independent site
percolation on $\Z^2$, the aforementioned exponential decay implies that
our results for infinite grids can be scaled to \emph{finite} boxes in ${\Z}^2$. See Theorem~\ref{thm: scalling} for more details.

\subsection{Percolation Background}
We give a few definitions and lemmas from percolation theory, following
\cite{Grimmett}. For a countable set $S$ and $p \in[0,1]$, let $\Pr_p^{S}$
be the product
measure on $\Omega:=\{0,1\}^{S}$ with density $p$. That is, for every $s
\in S$ let $open(s):=\{w\in \Omega:
w(s)=1  \}$, then for any $s \in S$, $\Pr_p^{S}[open(s)]=p$ and the events
$(open(s) : s \in S )$ are independent with respect to $\Pr_p^S$. When $S$
is the vertex set of a graph $G$, $\Pr_p^{S}$ is simply the measure corresponding
to \emph{independent site percolation} on $G$ to be defined shortly. Equip
$\Omega$ with the cylinder $\sigma$-algebra
$\F$ (the $\sigma$-algebra generated by the events $(open(s) : s\in S)$)
and with the partial order
$\leqslant$, where $w \leqslant w'$ if $w(s) \le
w'(s)$ for all $s \in S$. We say that an event $A$ is increasing if $w \in
A$ and $w \leqslant w'$ implies that also $w' \in A$. For any two probability
measures $\mu$ and $\nu$ on $(\Omega,\F)$, we say that $\mu$ stochastically
dominates $\nu$ if $\mu(A) \ge \nu(A)$ for any increasing event $A$. It is
well-known and easy to show that if $(X_s)_{s \in S}$ and $(Y_s)_{s \in S}$
are random variables defined on the same probability space $(\Omega,\F,\Pr)$ such
that $\Pr[ X_s \ge Y_s]=1$ for all $s \in S$, then the law of $(X_s)_{s \in S}$ stochastically dominates that
of $(Y_s)_{s \in S}$. In simpler words, if two distributions on the space $\{0,1 \}^S$ can be coupled such that the first is point-wise larger than the other with probability one, then the first stochastically dominates the other. This was used in the proof of Lemma \ref{lem:order}.

 Recall that in \emph{site percolation}, every vertex $v$ of $G=(V,E)$ is
associated with a $0$-$1$ valued random variable $Y_v$. Formally, in the
above
notation the percolation process is defined on a probability space $\{\Omega,\F,\Pr
\}$ with $S=V$ and $Y_v=1_{open(v)}$. The most widely studied case is that
of \emph{independent (Bernoulli) percolation} where $\Pr=\Pr_p:=\Pr_p^V$
for some $p \in [0,1]$. However the definitions apply also when there may
exist
dependencies. A surviving vertex $v$ (i.e.\ $Y_v=1$) is called \emph{open}
and a deleted vertex $v$ (i.e.\ $Y_v=0$) is called \emph{closed}. When $G$
is infinite, we say that \emph{percolation} occurs if there exist an infinite
connected component in the subgraph of $G$ induced on all open vertices (we
consider only countable graphs with bounded maximum degree). Such an infinite
connected component is referred to as an\emph{ infinite cluster}.
In general, whenever considering the probability of a graph property occurring
we shall always (unless stated otherwise) be concerned with properties of
the subgraph of $G$ induced by the set of \emph{open} vertices.
For infinite $G$, we denote by $G_p$ the random graph obtained by independent
site percolation on $G$ with parameter $p$. Let $p_c(G):=\inf \{p: \text{
percolation occurs with probability 1 in } G_p \}$. A simple
application of Kolmogorov's zero-one law implies that $p_c(G)=\inf \{p:
G_p  \text{
has an infinite cluster with positive probability} \}$.

Finally, recall the definition of the graph
${\Z}_{*}^2$. We say that $A \subset
{\Z}^{2}$ is $*$-connected, if its induced graph in ${\Z}_*^2$ is connected.
We call a cycle (path) in the graph $\Z_*^2$ a $*$-cycle ($*$-path, respectively).
Let $H$ be a vertex induced subgraph of  $\Z_*^2$. We call the connected
components of $H$ $*$-connected components.

We first prove that in the layers model the occurrence of an infinite cluster
is a 0-1 event. Recall that for any collection of random variables, $(Y_i : i \in I)$ the $\sigma$-algebra generated by them (i.e$.$ the minimal $\sigma$-algebra with respect to which they are all measurable) is denoted by $\sigma(Y_i
: i \in I)$. Let $(I_j)_{j \in \N}$ be a collection of subsets of $I$ such that for each $j$, $I \setminus I_j$ is finite and $\bigcap_{j \in \N} I_j $ is the empty set. Then the tail $\sigma$-algebra of $(Y_i
: i \in I)$ equals to $\bigcap_{j \in \N} \sigma(Y_i
: i \in I_{j})$. Loosely speaking, an event belongs to this tail $\sigma$-algebra, if for any $j$ the occurrence of the event can be determined by knowing only the value of $(Y_i
: i \in I_j)$. Alternatively, this is the case if changing the value of finitely many of the $Y_i$'s cannot effect the occurrence of the event.
\begin{lemma}
\label{lem: 0-1law1}
Let $G$ be an infinite connected graph with a countable vertex set $V$. Assume that all the degrees are finite. Let $k \in {\N}$. Then the probability
of the event that $T_k(G)$ contains an infinite cluster is either $0$ or $1$.
\end{lemma}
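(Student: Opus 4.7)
The plan is to apply Kolmogorov's zero-one law to the family of i.i.d.\ ages $(X_v)_{v\in V}$ driving the local sampling model. Enumerate $V$ as $v_1, v_2, \ldots$ and set $\mathcal{T}_n := \sigma(X_{v_n}, X_{v_{n+1}}, \ldots)$, so that the tail $\sigma$-algebra is $\mathcal{T} := \bigcap_n \mathcal{T}_n$. Writing $A$ for the event that $T_k(G)$ contains an infinite cluster, the goal is to prove $A\in\mathcal{T}$; independence of the $X_v$'s together with Kolmogorov's zero-one law then yields $\Pr(A)\in\{0,1\}$.

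To verify $A\in\mathcal{T}_n$ I would argue that $A$ is invariant under altering the finite block of ages indexed by $F := \{v_1,\ldots,v_{n-1}\}$. Set $F' := F\cup N(F)$; since $G$ is locally finite, $F'$ is still finite. The key observation is that membership of any vertex $u$ in $T_k(G)$ depends only on $X_u$ and the ages of the neighbors of $u$. Consequently, if two age profiles $x$ and $x'$ agree off $F$, then the induced subgraphs $T_k(x)$ and $T_k(x')$ coincide on $V\setminus F'$ (same vertex set and same induced edges).

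It then remains to show that a finite modification of $T_k$ can neither create nor destroy an infinite cluster. If $T_k(x)$ has an infinite component $C$, removing $F'$ cuts at most $\sum_{v\in F'}\deg_G(v)<\infty$ edges of $C$, so $C\setminus F'$ splits into only finitely many components inside $T_k(x)\setminus F' = T_k(x')\setminus F'$; since $C\setminus F'$ is itself infinite, at least one of these components is infinite, and it survives as an infinite cluster of $T_k(x')$. Conversely, if $T_k(x)$ has no infinite cluster, then any component of $T_k(x')$ disjoint from $F'$ coincides with a finite component of $T_k(x)$, while a component of $T_k(x')$ meeting $F'$ is a union of at most $|F'|$ vertices of $F'$ with finitely many components of $T_k(x)\setminus F'$, each of which is contained in a finite component of $T_k(x)$; such a union is finite. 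Hence the occurrence of $A$ is determined by $(X_v)_{v\notin F}$, which gives $A\in\mathcal{T}_n$ for every $n$ and therefore $A\in\mathcal{T}$.

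The principal technical point is the locality argument in the previous paragraph; it relies essentially on the hypothesis that all vertex degrees in $G$ are finite, without which excising the set $F'$ could sever a single infinite cluster into infinitely many pieces or allow finite modifications to merge infinitely many previously disjoint finite components into an infinite one. Once this robustness under finite perturbations is in place, Kolmogorov's zero-one law delivers the lemma immediately.
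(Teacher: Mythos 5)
Your proposal is correct and follows essentially the same route as the paper: both appeal to Kolmogorov's zero-one law for the i.i.d.\ ages $(X_v)$, using the locality of the layers model (that membership of $v$ in $T_k$ depends only on $X_v$ and the ages of $v$'s neighbors) to show that the event lies in the tail $\sigma$-algebra of $(X_v)$. The paper phrases this by first observing $A_k$ is tail in $(Y_v)$, where $Y_v := \1[v\in T_k]$, stating this as ``one can readily verify,'' and then pushing into the tail of $(X_v)$; your write-up instead argues the tail-measurability directly, spelling out the finite-perturbation invariance (the excision argument with $F'=F\cup N(F)$ and the finite edge-boundary bound) that the paper leaves implicit. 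One very minor imprecision: a component of $T_k(x')$ disjoint from $F'$ need not literally ``coincide with'' a component of $T_k(x)$ (a neighbor in $F'$ could be open in $T_k(x)$ but not in $T_k(x')$); it is, however, contained in a component of $T_k(x)$, which suffices for the finiteness you need, and indeed you use the correct containment phrasing in the next clause.
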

\begin{proof}
Let $Y_v$ be the indicator of the event that $v \in T_k(G)$. One can readily
verify that the event that $T_k$ contains an infinite cluster,
denoted by $A_k$, is in the tail $\sigma$-algebra of $(Y_v : v\in V)$.
Pick an arbitrary $u \in V$. The previous tail $\sigma$-algebra can be written
as $\bigcap_{r=1}^{\infty} \sigma(Y_v : d(v,u)>r) \subset \bigcap_{r=1}^{\infty}
\sigma(X_v : d(v,u)>r-1)$. The last inclusion is true since for any vertex $v$, the layer to which $v$ belongs to (and thus also $Y_v$) can be determined by the ages of $v$ and its neighbors. Now, $\bigcap_{r=1}^{\infty}
\sigma(X_v : d(v,u)>r-1)$ is the tail $\sigma$-algebra of a sequence of independent
random variables, hence by Kolmogorov's 0-1 law every event in  $\bigcap_{r=1}^{\infty}
\sigma(X_v : d(v,u)>r-1)$, and hence also every event in the tail $\sigma$-algebra
of $(Y_v :
v\in V)$ is a 0-1 event. This implies that indeed
$A_k$ is a 0-1 event.
\end{proof}

We say that a graph $G=(V,E)$ is vertex transitive, if for any $u,v \in V$
there exists a bijection $\phi_{v,u}:V \to V $, such that $\phi_{v,u}(v)=u$
and $a \sim b$ iff $\phi_{v,u}(a) \sim \phi_{v,u}(b)$ for any $a,b \in V$.
Note that\ for any vertex transitive graph we have that if $(X_s : s\in V)$
are i.i.d$.$ $U[0,1]$ random variables, then if we set $X'_s=X_{\phi_{v,u}^{-1}(s)}$,
then also $(X'_s : s\in V)$ are i.i.d$.$ $U[0,1]$ random variables.

Fix $k\in \N$. Let $H$ and $H'$ be the graphs induced on the first $k$ layers
of $G$ with respect to $(X_s)_{s \in V}$ and $(X'_s)_{s \in S}$, respectively.
Clearly both $H$ and $H'$ are distributed as $T_k(G)$. Note that the connected
component of $v$ is infinite in $H$ iff the connected component of $u$ is
infinite in $H'$. This implies that for any $u,v \in V$, the probability
that they belong to an infinite cluster is the same.

\begin{lemma}
\label{lem:zero_one}
Let $G$ be an infinite connected graph and let $\Pr$ be a probability measure
corresponding to some percolation process on $G$. Assume that the probability
that there exists an infinite cluster is either $0$ or $1$.
Suppose that for every $v \in V$, $\Pr(|C(v)|=\infty)=\Theta$. Then $\Theta>0$
iff the probability that there exists an infinite open cluster is $1$. In
particular, in $T_k(G)$ we have that $\Theta>0$ implies that with probability
1 there exists an infinite cluster in $T_k(G)$.
\end{lemma}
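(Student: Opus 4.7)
The plan is to prove the two directions of the equivalence separately, each in a single short step, and then derive the ``in particular'' statement by combining the forward direction with the zero-one dichotomy already established in Lemma~\ref{lem: 0-1law1}. Both directions rely only on elementary countable additivity arguments together with the given hypothesis that $\Pr(\exists \text{ infinite cluster}) \in \{0,1\}$.

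For the backward direction (existence almost surely $\Rightarrow \Theta > 0$), the key observation is that any infinite cluster must contain at least one vertex of $V$, and $V$ is countable. Hence the event $\{\text{there exists an infinite cluster}\}$ coincides with $\bigcup_{v\in V} \{|C(v)|=\infty\}$. If the former has probability $1$, countable subadditivity gives
\[
1 \;=\; \Pr\Bigl(\bigcup_{v\in V}\{|C(v)|=\infty\}\Bigr) \;\le\; \sum_{v\in V}\Pr(|C(v)|=\infty) \;=\; \sum_{v\in V}\Theta,
\]
which is impossible if $\Theta=0$ (as the sum would then vanish). Therefore $\Theta>0$.

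For the forward direction ($\Theta>0 \Rightarrow$ existence almost surely), fix an arbitrary $v_0\in V$ and note that $\{|C(v_0)|=\infty\}\subseteq\{\text{there exists an infinite cluster}\}$, so the latter has probability at least $\Theta>0$. By the assumed zero-one dichotomy this probability must equal $1$.

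Finally, for the ``in particular'' clause, Lemma~\ref{lem: 0-1law1} supplies the zero-one property for the event that $T_k(G)$ contains an infinite cluster, and the remark immediately preceding the lemma (obtained via the vertex-transitive bijections $\phi_{v,u}$) shows that $\Pr(|C(v)|=\infty)$ is the same constant $\Theta$ for every $v\in V$. Thus both hypotheses of the general statement are satisfied, and the forward direction yields the claim. There is no genuine obstacle here; the only thing one must be slightly careful about is that the countable decomposition $\bigcup_v \{|C(v)|=\infty\}$ really does exhaust the event of interest, which uses nothing more than countability of $V$.
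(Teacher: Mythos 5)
Your argument is correct and coincides with the paper's own: the backward direction via countable subadditivity over $\bigcup_{v}\{|C(v)|=\infty\}$ (the paper phrases this as the contrapositive $\Theta=0 \Rightarrow \Pr(\exists\text{ infinite cluster})=0$), and the forward direction via the containment $\{|C(v_0)|=\infty\}\subseteq\{\exists\text{ infinite cluster}\}$ together with the zero-one hypothesis. The ``in particular'' deduction from Lemma~\ref{lem: 0-1law1} and vertex transitivity also matches the paper's intent.
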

\begin{proof}
If $\Theta=0$, then $$\Pr(\text{there exists an infinite cluster}) \le \sum_{v
\in V}\Pr(|C(v)|=\infty)=0.$$ If  $\Theta>0$, then pick an arbitrary $v \in
V$. $$\Pr(\text{there exists an infinite cluster}) \ge  \Pr(|C(v)|=\infty)
>0.$$ So by the zero-one assumption $\Pr(\text{there exists an infinite cluster})=1$.
\end{proof}

\begin{definition}
\label{def: cyclesurronds}
Let $C_{*}$ be a simple $*$-cycle in $\Z_*^2$. We call the finite connected
component (with respect to $\Z^2$) of $\Z^2 \setminus C_{*}$ the interior
of $C_{*}$. Similarly, for a simple cycle $C$ in $\Z^2$, we call the finite
$*$-connected component (with respect to $\Z_*^2$) of $\Z^2 \setminus C$
the interior of $C$. Let $A \subset \Z^2$. We say that a (simple) cycle or
$*$-cycle $C$ surrounds $A$, if $A$ is contained
in the union of $C$ and its interior.
\end{definition}

Throughout we  consider only simple cycles ($*$-cycles) even when not mentioned
explicitly.

A basic topological tool in percolation theory is the fact that in ${\Z}^2$,
$(0,0)$ does not belong to an infinite cluster of open vertices iff there
is a simple cycle in ${\Z}_{*}^2$ around $(0,0)$ consisting only of closed
vertices. The following lemma, whose proof is omitted, generalizes this principle.

\begin{lemma}
\label{lem:square}
Suppose we partition the vertices of $G={\Z}^2$ to open and closed vertices
and call the induced graphs with respect to $\Z^2$ and $\Z_*^2$ on the set
of open vertices $H$ and $H_*$, respectively.
Let $A$ be a connected set in ${\Z}^2$. Then, there exists $v \in A$ contained
in an infinite cluster of $H$ iff there does not exist a $*$-cycle in ${\Z}_{*}^2$
consisting
of closed vertices surrounding $A$. Similarly, a $*$-connected set $A$ in
${\Z}_{*}^2$ will contain a vertex that belongs to an infinite $*$-connected
component of $H_*$ iff there does not exist a cycle composed of closed vertices
in $\Z^2$ surrounding $A$
\end{lemma}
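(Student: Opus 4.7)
The plan is to prove both directions via the classical discrete Jordan curve theorem for $\Z^2$: a finite simple $*$-cycle in $\Z_*^2$ separates $\Z^2$ into a finite interior and an infinite exterior such that every $\Z^2$-path between them meets the cycle, and dually a finite simple cycle in $\Z^2$ separates $\Z_*^2$ in the same way. This planar duality between $4$- and $8$-adjacency is the geometric engine of the whole argument.

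For the easy direction, if $C_*$ is a closed $*$-cycle surrounding $A$, then any infinite path in $H$ starting at some $v\in A$ is a $\Z^2$-path leaving the interior of $C_*$, so by the discrete Jordan curve theorem it must contain a vertex of $C_*$. Since $C_*$ is closed, this path cannot lie in $H$, so no vertex of $A$ lies in an infinite cluster of $H$.

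For the converse I argue by contrapositive. Assume $A$ is finite (if $A$ is infinite, no cycle surrounds it and the statement is vacuous in this direction) and that no $v\in A$ lies in an infinite $H$-cluster. Define $B$ to be the union of $A$ with every open $H$-cluster that meets or is $\Z^2$-adjacent to $A$, iterated to stability; under our hypothesis $B$ is finite and $\Z^2$-connected. Fill in its holes by setting $\hat{A}=B\cup\bigcup\{F:F\text{ is a bounded }\Z^2\text{-component of }\Z^2\setminus B\}$, so that $\hat{A}$ is a finite $\Z^2$-connected region whose complement in $\Z^2$ is a single infinite component. Let $\partial$ be the outer vertex boundary of $\hat{A}$, i.e.\ the vertices of $\Z^2\setminus\hat{A}$ that are $\Z^2$-adjacent to $\hat{A}$. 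Two claims complete the argument: (i) every vertex of $\partial$ is closed, because an open $u\in\partial$ adjacent to $w\in\hat{A}$ would have to lie in $B\subseteq\hat{A}$ — if $w\in B$, then $u$ belongs either to the same open $H$-cluster as $w$ or to an open cluster $\Z^2$-adjacent to $A$, in either case absorbed into $B$ by construction; if $w$ lies in a bounded component $F$ of $\Z^2\setminus B$, then $u$ and $w$ are $\Z^2$-adjacent vertices of $\Z^2\setminus B$ yet lie in different components of $\Z^2\setminus B$ (the bounded $F$ and the unbounded one containing $u$), a contradiction. And (ii) $\partial$ contains a simple $*$-cycle in $\Z_*^2$ that surrounds $\hat{A}$ and hence $A$.

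The main obstacle is item (ii) — showing rigorously that the outer vertex boundary of a finite simply-connected subset of $\Z^2$ contains a $*$-cycle separating it from infinity. This is the combinatorial heart of the lemma and is precisely where the $4$-versus-$8$ adjacency duality is invoked; it is a standard tool in planar percolation (cf.\ \cite{Grimmett}) and can be proved by a local analysis at each boundary vertex combined with a Jordan-curve tracing of the boundary. The second assertion of the lemma (swapping the roles of $\Z^2$ and $\Z_*^2$, and of cycles with $*$-cycles) follows by the identical argument with the two adjacency relations interchanged.
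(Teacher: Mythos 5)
Your easy direction is fine, but the converse has a genuine gap at the very step the paper flags as the subtle point (``we allow the enclosing cycle to intersect with the internal boundary of $A$''). You define $B$ as the union of $A$ with every open $H$-cluster that \emph{meets or is $\Z^2$-adjacent to} $A$ and assert that ``under our hypothesis $B$ is finite.'' The hypothesis only says that no vertex \emph{of $A$} lies in an infinite $H$-cluster; it says nothing about open clusters that merely touch $A$ from outside through one of $A$'s \emph{closed} boundary vertices. Such a cluster can perfectly well be infinite, in which case $B$ (and then $\hat{A}$) is infinite and the whole construction collapses. Concretely: take $A=\{(0,0)\}$ closed, let $(1,0)$ be open and belong to an infinite cluster, and let the remaining $*$-neighbours of $(0,0)$ be closed; the hypothesis holds, a closed $*$-cycle surrounding $A$ does exist (e.g.\ one passing through $(0,0)$ itself), but your $B$ contains the infinite cluster of $(1,0)$.

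The fix is exactly what the paper's remark points at: one must only absorb into $B$ the open clusters that actually \emph{contain} an open vertex of $A$ (so $B$ stays finite), accept that the outer vertex boundary $\partial$ of $\hat{A}$ may then contain open vertices, and observe that each such open $u\in\partial$ can only be adjacent to \emph{closed} vertices of $\hat{A}$; the desired closed $*$-cycle must therefore be allowed to run through these closed internal-boundary vertices of $A$ rather than lying entirely in $\partial$. With that modification your claim (i) is false as stated and claim (ii) needs to be re-proved for this mixed inner/outer boundary, which is precisely the combinatorial content you deferred to ``standard tools.'' So beyond the deferred Jordan-curve tracing, there is a substantive error in the reduction, not merely an omitted routine verification.
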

Note that we allow the enclosing cycle to intersect with the internal boundary
of $A$. This is crucial as had we required the enclosing cycle to be disjoint
from the whole of $A$ the lemma would clearly be  false in the case that
all vertices of $A$ are closed. Moreover, note that in order to apply this
lemma we do not need the percolation process to be independent.

The following lemma is a classical result in percolation theory due to Russo
\cite{Russo}.
\begin{lemma}
\label{lem:russo}
In ordinary Bernoulli percolation,
$p_c({\Z}_{*}^2)+p_c({\Z}^2)=1.$
\end{lemma}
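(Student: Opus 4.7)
The plan is to prove the two inequalities $p_c(\Z^2) + p_c(\Z_*^2) \le 1$ and $p_c(\Z^2) + p_c(\Z_*^2) \ge 1$ separately, both leveraging the topological duality already encoded in Lemma~\ref{lem:square}: the origin lies in an infinite open cluster of $\Z^2$ iff it is not surrounded by a closed $*$-cycle in $\Z_*^2$, and symmetrically when the roles of $\Z^2$ and $\Z_*^2$ are swapped. Throughout I would write $\theta(p) := \Pr_p(|C(0)| = \infty)$ for independent Bernoulli site percolation on $\Z^2$, and use the zero-one laws of Lemmas~\ref{lem: 0-1law1} and~\ref{lem:zero_one} (which apply to i.i.d.\ site percolation as well) to promote positive-probability statements to almost-sure ones.

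For the first inequality, I would fix $p > 1 - p_c(\Z_*^2)$, so that the closed vertices, at density $1-p$, are strictly subcritical for site percolation on $\Z_*^2$. Subcritical exponential decay of cluster radii (Menshikov's theorem) then yields $\Pr_{1-p}(\mathrm{diam}(C_*) \ge \ell) \le e^{-c\ell}$ for some $c = c(p) > 0$. A union bound over the possible ``closest-to-the-origin'' vertex of any closed $*$-cycle surrounding the origin, combined with restricting to the positive-probability event that some fixed large box around the origin contains no closed vertex, shows that with positive probability no closed $*$-cycle surrounds the origin at all. Lemma~\ref{lem:square} then places the origin in an infinite open cluster of $\Z^2$ with positive probability, giving $\theta(p) > 0$ and hence $p \ge p_c(\Z^2)$.

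For the reverse inequality $p_c(\Z^2) + p_c(\Z_*^2) \ge 1$, I would fix $p < 1 - p_c(\Z_*^2)$, so that the closed vertices percolate on $\Z_*^2$. By Burton--Keane there is almost surely a unique infinite closed $*$-cluster. A planar-uniqueness argument, in the spirit of Kesten's $p_c = 1/2$ proof for bond percolation on $\Z^2$, then shows that this unique cluster actually contains closed $*$-cycles surrounding every bounded region of $\Z^2$ almost surely: any failure would, by the $\Z_*^2$-analogue of Lemma~\ref{lem:square}, produce a second infinite cluster on the primal lattice, contradicting Burton--Keane there. Lemma~\ref{lem:square} then forces $\theta(p) = 0$, whence $p \le p_c(\Z^2)$.

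The main obstacle is the planar topological step in the second inequality: passing from uniqueness of the infinite closed $*$-cluster to the almost-sure existence of closed $*$-cycles enclosing every bounded set. This is the heart of two-dimensional duality and requires a genuine Jordan-curve-style argument rather than a soft ergodic one; the standard execution invokes crossing estimates (Russo--Seymour--Welsh) or a ``trifurcation'' argument to rule out the alternative scenario of an infinite $*$-cluster that only escapes to infinity in a single direction without forming surrounding contours. The input of Menshikov's subcritical exponential decay used in the first inequality is also deep but by now a standard black box that I would cite without proof.
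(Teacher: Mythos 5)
The paper does not prove this lemma; it is invoked as a classical result of Russo with a citation to \cite{Russo}, so there is no internal argument against which to check your reconstruction. Taken on its own terms, your first inequality ($p_c(\Z^2)+p_c(\Z_*^2)\le 1$) is a correct modern sketch: subcriticality of the closed sites on $\Z_*^2$, Menshikov-type exponential decay, and the box-clearing trick together give positive probability that no closed $*$-circuit surrounds the origin, and Lemma~\ref{lem:square} then yields $\theta(p)>0$. (As a historical aside, Menshikov's theorem postdates Russo's 1981 paper, so Russo's original proof necessarily used different inputs; that does not affect correctness.)

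Your second inequality, however, has a genuine gap in the mechanism you state. You claim that if the unique infinite closed $*$-cluster fails to produce $*$-circuits around every bounded set, then the $\Z_*^2$-analogue of Lemma~\ref{lem:square} yields ``a second infinite cluster on the primal lattice, contradicting Burton--Keane there.'' But that analogue only produces \emph{one} infinite open cluster on $\Z^2$, and one infinite open $\Z^2$-cluster coexisting with one infinite closed $\Z_*^2$-cluster is not by itself a contradiction with Burton--Keane, whose uniqueness statement applies within each lattice separately, not across the dual pair. The actual contradiction is Zhang's contour argument: with positive probability (by FKG) a sufficiently large box has all four of its sides attached to the unique infinite open cluster and all four sides attached to the unique infinite closed $*$-cluster, and the Jordan-curve topology of the plane then forces one of these two types of infinite cluster to split into two components separated by the other, violating uniqueness \emph{within} one lattice. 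Your phrase ``in the spirit of Kesten's $p_c=1/2$ proof'' names the right tool, but the justification you actually give --- counting a second cluster on the primal lattice --- is not what makes it work; it should be replaced with the Zhang construction or a citation to it. A further minor point: Russo--Seymour--Welsh estimates, which you mention as a possible ingredient, are not needed for this duality identity; Burton--Keane uniqueness plus the planar contour argument already suffice.
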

Higuchi \cite{Higuchi}  was the first to show that $p_c({\Z}^2)>1/2$
and in \cite{Berg} it was shown that  $p_c({\Z}^2)>0.556$.

Using the aforementioned lemmas we can prove the following useful statement.
\begin{lemma}
\label{lem:square_lim}
Consider independent site percolation on ${\Z}^2$ with parameter $p:=1/2$ and denote the corresponding probability measure by $\Pr_{p}$. Let $(U(r) : r \in \N)$ be
a collection of connected sets such that $U(r) \subset U(r+1)$ for all $r
\in \N$ and $ \bigcup_{r \in \N}U(r)=\Z^2$.
Define $A(r)$ to be the event
that there exists a cycle $C$ in ${\Z}^2$
composed of open vertices which surrounds $U(r)$. Then, $\lim \limits_{r
\rightarrow \infty}\Pr[A(r)]=0$.

\end{lemma}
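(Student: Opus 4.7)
The plan is to use the planar duality embodied in Lemma~\ref{lem:square}, together with Russo's lemma, to translate the event $A(r)$ into an event about $*$-percolation of closed vertices, and then to let $r \to \infty$ by monotone continuity.

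First I would rewrite $A(r)^c$ in dual form. Note that $U(r)$, being connected in $\Z^2$, is in particular $*$-connected in $\Z_*^2$. Applying Lemma~\ref{lem:square} with the roles of ``open'' and ``closed'' swapped (so that what Lemma~\ref{lem:square} calls a ``closed cycle in $\Z^2$ surrounding $A$'' becomes our cycle of \emph{open} vertices surrounding $U(r)$), I obtain the equivalence
\[
A(r)^c \;=\; E(r) \;:=\; \bigl\{\text{some }v\in U(r)\text{ lies in an infinite $*$-cluster of closed vertices in } \Z_*^2\bigr\}.
\]
This is the one step where care is needed: the duality is stated for a fixed open/closed labelling, so I must keep track of which color plays which role. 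The payoff is that $A(r)$ has been converted to a decreasing statement about the closed configuration in $\Z_*^2$.

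Next I invoke Russo's lemma (Lemma~\ref{lem:russo}) together with the known strict inequality $p_c(\Z^2)>1/2$ (e.g.\ Higuchi, or the sharper bound $p_c(\Z^2)>0.556$ from \cite{Berg}), which gives $p_c(\Z_*^2)=1-p_c(\Z^2)<1/2$. Under $\Pr_{1/2}$ the closed vertices form an independent Bernoulli site percolation on $\Z_*^2$ with density $1/2>p_c(\Z_*^2)$, so with probability one there exists an infinite $*$-cluster of closed vertices. Denote this event by $E$.

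Finally I would close the argument by monotone continuity. Since $U(r)\subseteq U(r+1)$ the events $E(r)$ are increasing in $r$, and since $\bigcup_r U(r)=\Z^2$ we have
\[
\bigcup_{r\in\N} E(r) \;=\; E.
\]
Indeed, any fixed vertex belonging to an infinite closed $*$-cluster is eventually contained in some $U(r)$, so $E\subseteq\bigcup_r E(r)$; the reverse inclusion is immediate. By continuity of probability from below, $\Pr_{1/2}[E(r)]\uparrow \Pr_{1/2}[E]=1$, hence
\[
\Pr_{1/2}[A(r)] \;=\; 1-\Pr_{1/2}[E(r)] \;\longrightarrow\; 0,
\]
as required. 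The only real substance is the dualization step; once that is in place the subcriticality in $\Z^2$ at $p=1/2$ (equivalently, supercriticality of the closed $*$-percolation) does all the work.
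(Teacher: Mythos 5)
Your proof is correct and follows essentially the same route as the paper's: apply the duality of Lemma~\ref{lem:square} (with the open/closed roles swapped) to identify $A(r)^c$ with the event that some vertex of $U(r)$ lies in an infinite closed $*$-cluster, use Russo's Lemma~\ref{lem:russo} together with $p_c(\Z^2)>1/2$ to conclude that the closed $*$-percolation at density $1/2$ is supercritical, and pass to the limit by monotone continuity. The only cosmetic difference is that you spell out the increasing-limit identity $\bigcup_r E(r)=E$ a bit more explicitly than the paper does.
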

\begin{proof}
 Let $H$ be the vertex induced graph on the set of closed vertices w.r.t. $\Z_{*}^2$. By Lemma \ref{lem:square} $D(r):=(A(r))^c$ is the event
that there exists a vertex in $U(r)$ which is contained in an infinite connected component of $H$.
Since $(U(r): r \in \N)$, is an increasing collection of sets that exhausts
$\Z^2$, the event that there exists an infinite connected component of $H$  is the increasing
limit of the events $D(r)$. By our assumption that $p< p_{c}^{site}(\Z^2)$ in conjunction with Lemma \ref{lem:russo}, we have that $\Pr[H \text{ has an infinite connected component}]=1$. Hence $\lim_{r\rightarrow \infty} \Pr[A(r)]=1-\lim_{r\rightarrow \infty}
\Pr[D(r)]=0$.
\end{proof}
\subsection{Proof of Theorem 1.4}

Recall the definitions of $V_{even}$, $V_{odd}$, $\Z_{even}^{2}$, $\Z_{odd}^{2}$
and $\Z_*^2$.
\begin{lemma}
\label{lem: ZevenisomorphictoZ2}
Both ${\Z}_{even}^2$ and ${\Z}_{odd}^2$ are isomorphic to ${\Z}^2$.
\end{lemma}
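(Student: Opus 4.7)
The plan is to exhibit an explicit graph isomorphism for the even case and deduce the odd case by a translation argument. Define $\phi : V_{even} \to \Z^2$ by
\[
\phi(a,b) = \Bigl(\tfrac{a+b}{2},\tfrac{a-b}{2}\Bigr).
\]
First I would check that $\phi$ is well-defined: if $a+b$ is even then so is $a-b = (a+b)-2b$, so both coordinates are integers. Next, I would check that $\phi$ is a bijection by writing down the inverse $\psi(x,y) = (x+y,x-y)$, verifying that $(x+y)+(x-y)=2x$ is even so $\psi$ lands in $V_{even}$, and confirming $\phi\circ\psi = \mathrm{id}$ and $\psi\circ\phi=\mathrm{id}$ by direct substitution.

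The main step is to verify that $\phi$ preserves adjacency. For $u,v\in V_{even}$, the difference $u-v$ has both coordinates of the same parity (since each sums to $0\bmod 2$). Hence $\|u-v\|_{\infty}=1$ forces both coordinates of $u-v$ to be $\pm 1$, so the four neighbors of $(a,b)$ in $\Z_{even}^2$ are $(a\pm 1,b\pm 1)$. Applying $\phi$ to these four displacements $(\pm 1,\pm 1)$ yields $(\pm 1,0)$ and $(0,\pm 1)$, which are precisely the four neighbors of $\phi(a,b)$ in $\Z^2$. Conversely, every unit displacement in $\Z^2$ pulls back under $\psi$ to some $(\pm 1,\pm 1)$, so $\phi$ induces a bijection on edges as well. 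This establishes $\Z_{even}^2\cong \Z^2$.

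For the odd case, observe that translation by $(1,0)$ is a bijection from $V_{odd}$ to $V_{even}$ that preserves $\ell_\infty$ distances, hence is a graph isomorphism $\Z_{odd}^2 \to \Z_{even}^2$. Composing with $\phi$ gives an isomorphism $\Z_{odd}^2 \to \Z^2$. There is no real obstacle here; the only point worth flagging is the observation that, within $V_{even}$, axis-aligned unit vectors are forbidden by parity, so the $\ell_\infty$ unit ball consists exactly of the four diagonal vectors, which is what makes the rotate-and-rescale map $\phi$ into a genuine graph isomorphism rather than merely a set bijection.
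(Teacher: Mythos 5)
Your proposal is correct and essentially identical to the paper's proof: your inverse map $\psi(x,y)=(x+y,x-y)$ is precisely the map $\phi(u,v)=(u+v,u-v)$ that the paper uses, and the odd case is handled by the same translation by $(1,0)$. You simply spell out the verification in more detail, which the paper leaves implicit.
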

\begin{proof}
The transformation $\phi(u,v)=(u+v,u-v)$ provides an isomorphism between
${\Z}^2$ to ${\Z}_{even}^2$. ${\Z}_{even}^2$ is
clearly isomorphic to ${\Z}_{odd}^2$  by
the translation $v \to v+(1,0)$.
\end{proof}

We call a $*$-cycle contained in $V_{even}$ (respectively, $V_{odd}$) an
\emph{even-cycle} (respectively, \emph{odd-cycle}). Obviously, an even-cycle
(odd-cycle) is just a cycle in $\Z_{even}^2$ (respectively, $\Z_{odd}^2$).
We say that an even/odd-cycle surrounds $A \subset \Z^2$, if thought of as
a $*$-cycle in $\Z_*^2$, it surrounds $A$ in terms of Definition \ref{def:
cyclesurronds}.

\begin{lemma}
\label{lem: cyclesondifferentparity}
Let $A \subset \Z^2$ be a connected set of vertices. Then, there exists some
$a \in A$ which belongs to an infinite cluster of $T_4(\Z^2)$ iff there does
not exist an even-cycle surrounding $A$
 composed of vertices in $L_5(\Z^2)\cap V_{even}$ and there does not exist
an odd-cycle surrounding $A$
  composed of vertices in $L_5(\Z^2)\cap V_{odd}$.
  \end{lemma}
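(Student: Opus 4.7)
The plan is to combine Lemma~\ref{lem:square} with two structural facts that are specific to $\mathbb Z^2$: (i) the fifth layer is literally the complement of $T_4$, and (ii) any $*$-cycle whose vertex set sits inside $L_5(\mathbb Z^2)$ has constant parity. These two facts reduce the Lemma to a direct application of Lemma~\ref{lem:square}.

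First, I would note that since $\mathbb Z^2$ is $4$-regular, $L_k(\mathbb Z^2)=\emptyset$ for every $k\ge 6$: a vertex in $L_k$ must have $k-1$ younger neighbors, but every vertex of $\mathbb Z^2$ has only $4$ neighbors. In particular $\mathbb Z^2\setminus T_4(\mathbb Z^2)=L_5(\mathbb Z^2)$. Now apply Lemma~\ref{lem:square} with the open/closed partition given by $T_4(\mathbb Z^2)$ vs.\ $L_5(\mathbb Z^2)$. This already yields: some $a\in A$ belongs to an infinite cluster of $T_4(\mathbb Z^2)$ iff there is no $*$-cycle in $\mathbb Z_*^2$ composed of vertices of $L_5(\mathbb Z^2)$ that surrounds $A$. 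What is left is to verify that any such $*$-cycle must in fact lie entirely in $V_{even}$ or entirely in $V_{odd}$, and conversely that even- and odd-cycles in $L_5$ are in particular $*$-cycles in $\mathbb Z_*^2$.

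For the converse direction I need essentially nothing: an even-cycle is by definition a cycle in $\mathbb Z_{even}^2$, hence a $*$-cycle in $\mathbb Z_*^2$, and similarly for odd-cycles, so either of them surrounding $A$ already supplies the forbidden closed $*$-cycle. For the forward direction I would first observe that $L_5(\mathbb Z^2)$ is an independent set in $\mathbb Z^2$: if $u\sim v$ in $\mathbb Z^2$ and both are in $L_5$, then all four neighbors of $u$ are younger than $u$ and all four neighbors of $v$ are younger than $v$, forcing both $X_v<X_u$ and $X_u<X_v$, a contradiction. Then I would argue, by a parity computation, that any $*$-cycle in $\mathbb Z_*^2$ whose vertex set is $\mathbb Z^2$-independent is monochromatic in the $V_{even}/V_{odd}$ partition. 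The point is that two $\mathbb Z_*^2$-adjacent vertices $u,v$ satisfy $u-v\in\{\pm e_1,\pm e_2,\pm e_1\pm e_2\}$; the four diagonal differences preserve the parity of the sum of coordinates while the four axis-aligned differences flip it \emph{and} are precisely the $\mathbb Z^2$-edges. Hence any parity switch along the cycle would produce a $\mathbb Z^2$-edge inside the vertex set of the cycle, contradicting independence in $\mathbb Z^2$.

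I do not foresee a substantial obstacle; the most content-bearing step is the parity argument of the previous paragraph, since it is exactly what justifies later splitting the problem into two independent site-percolation instances (one on $\mathbb Z_{even}^2$, one on $\mathbb Z_{odd}^2$) as in Lemma~\ref{lem:order}. The remaining ingredients are the already-established Lemma~\ref{lem:square} and the degree bound on $\mathbb Z^2$, both of which are essentially bookkeeping once assembled in this order.
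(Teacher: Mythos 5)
Your proposal is correct and follows the same route as the paper: apply Lemma~\ref{lem:square} with open set $T_4(\Z^2)$ and closed set $L_5(\Z^2)$ (using $\Z^2\setminus T_4(\Z^2)=L_5(\Z^2)$ by $4$-regularity), then use that $L_5$ is independent in $\Z^2$ to force any closed $*$-cycle to be monochromatic in the $V_{even}/V_{odd}$ partition. The paper states the independence and monochromaticity step in a single sentence; you spell out the parity computation, but it is the same argument.
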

\begin{proof}
By Lemma \ref{lem:square} there does not exist a vertex $a \in A$ that belongs
to an infinite cluster of $T_4(\Z^2)$ iff there exists a $*$-cycle consisting
of vertices in $L_5(\Z^2)$ which surrounds $A$. Note that since $L_5$ is
an independent set, such a $*$-cycle must be contained in either $V_{even}$
or in $V_{odd}$.
\end{proof}
We can now prove the existence part of Theorem \ref{thm:grid}. The argument
of the proof
is simple. Since independent site percolation on $\Z^2$ with parameter 1/2
is subcritical, then for a large connected set in $\Z^2$  independent site
percolation with parameter 1/2 will contain a cycle composed of closed vertices
contained in $V_{even}$ (the same holds for $V_{odd}$) surrounding it only
with some positive probability which can be made arbitrary small by picking
an arbitrary large set. By Lemma \ref{lem:order} the same holds for $L_5(\Z^2)$.

\begin{theorem}
\label{thm: T_4(Z2}
For ${\Z}^2$, $T_4$ contains an infinite cluster with probability 1.
\end{theorem}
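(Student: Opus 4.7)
The plan is to reduce the question to subcritical independent Bernoulli site percolation with parameter $p = 1/2$ on $\Z^2$, and then invoke Lemma~\ref{lem:square_lim}. First, by Lemma~\ref{lem: 0-1law1} the existence of an infinite cluster in $T_4(\Z^2)$ is a $0$-$1$ event, and by the vertex transitivity of $\Z^2$ (noted before Lemma~\ref{lem:zero_one}) the probability $\Theta$ that a fixed vertex belongs to an infinite cluster does not depend on the chosen vertex. Hence by Lemma~\ref{lem:zero_one} it suffices to exhibit a finite connected set $A \subset \Z^2$ for which, with positive probability, \emph{some} vertex of $A$ lies in an infinite cluster of $T_4(\Z^2)$: a union bound over $A$ would then force $\Theta > 0$ and the $0$-$1$ law would promote this to the almost-sure existence of an infinite cluster.

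For this I would take $A_r := [-r, r]^2 \cap \Z^2$. By Lemma~\ref{lem: cyclesondifferentparity}, no vertex of $A_r$ lies in an infinite cluster of $T_4(\Z^2)$ iff either an even-cycle in $L_5(\Z^2) \cap V_{even}$ or an odd-cycle in $L_5(\Z^2) \cap V_{odd}$ surrounds $A_r$. The event that such a surrounding cycle exists is an \emph{increasing} event in the random field $(Y_v)_{v \in V_{even}}$ (respectively $V_{odd}$) of Lemma~\ref{lem:order}, because enlarging the set of ``open'' vertices can only create additional candidate surrounding cycles and cannot destroy existing ones. Consequently, the stochastic domination supplied by Lemma~\ref{lem:order} bounds the probability of this event above by the probability of the analogous surrounding-cycle event under independent Bernoulli-$1/2$ site percolation on $V_{even}$ (respectively $V_{odd}$).

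Next I would transfer each such event to $\Z^2$ via the isomorphism $\Z_{even}^2 \cong \Z^2$ (respectively $\Z_{odd}^2 \cong \Z^2$) provided by Lemma~\ref{lem: ZevenisomorphictoZ2}. An even-cycle surrounding $A_r$ in $\Z_*^2$ traces out a Jordan curve that in particular surrounds $A_r \cap V_{even}$ in the sense of $\Z_{even}^2$; and $A_r \cap V_{even}$ is connected in $\Z_{even}^2$ since any two of its vertices may be joined by a sequence of diagonal $\ast$-steps that remains inside $A_r$. Under the isomorphism this produces an increasing connected sequence $\widetilde{A}_r \subset \Z^2$ with $\bigcup_r \widetilde{A}_r = \Z^2$, so Lemma~\ref{lem:square_lim} applies and gives that the probability of a cycle of open vertices in Bernoulli-$1/2$ site percolation on $\Z^2$ surrounding $\widetilde{A}_r$ tends to $0$ as $r \to \infty$. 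The odd case is handled identically.

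Finally, choosing $r$ large enough that each of the two probabilities is smaller than $1/4$, a union bound shows that with probability at least $1/2$ there is neither a surrounding even- nor odd-cycle in $L_5$, so by Lemma~\ref{lem: cyclesondifferentparity} some vertex of $A_r$ lies in an infinite cluster of $T_4(\Z^2)$. Combined with the first-paragraph reduction this yields $\Theta > 0$ and hence the theorem. I expect the main technical care to be in verifying that the ``surrounds'' relation transfers cleanly through the isomorphism, so that Lemma~\ref{lem:square_lim} is applied to a genuinely connected, exhausting sequence of sets, and in formally confirming that the surrounding-cycle events are increasing so that Lemma~\ref{lem:order} indeed delivers the needed domination.
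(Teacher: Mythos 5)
Your proposal is correct and follows essentially the same path as the paper's proof: reduce via Lemma~\ref{lem: cyclesondifferentparity} to bounding the probability of a surrounding even- or odd-cycle in $L_5$, use the stochastic domination of Lemma~\ref{lem:order} together with the isomorphism of Lemma~\ref{lem: ZevenisomorphictoZ2} to transfer to Bernoulli-$1/2$ site percolation, apply Lemma~\ref{lem:square_lim}, take a union bound, and conclude through Lemma~\ref{lem:zero_one}. Your version is somewhat more explicit about why the surrounding-cycle event is increasing and why $A_r\cap V_{even}$ is connected in $\Z_{even}^2$, but these are exactly the points the paper is implicitly relying on.
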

\begin{proof}
Consider $V_{even}\cap L_5(\Z^2)$ as a percolation process on $\Z_{even}^2$.
By Lemma~\ref{lem:order} the aforementioned percolation process is stochastically
dominated by independent site percolation on ${\Z}_{even}^{2}$ with parameter
$1/2$.

   Let $A_r(even)$ be the event
that there is an even-cycle consisting of vertices belonging to
${\Z}_{even}^2 \cap L_5(\Z^2)$ which surrounds $[-r,r]^2 \cap V_{even}$.
Define $A_r(odd)$ in an analogous manner with respect to ${\Z}_{odd}^2$.
By Lemmas \ref{lem:square_lim}  (here applied on $\Z_{even}^2$
instead of on $\Z^2$) and \ref{lem: ZevenisomorphictoZ2} we know that $\Pr[
A_r(even)] \to 0$  as $r$ tends to infinity. Similarly, $\lim_{r \rightarrow
\infty}\Pr[ A_r(odd)]=0$. Fix $r$ sufficiently large such that $\Pr[ A_r(even)]<\frac{1}{10}$
and $\Pr[ A_r(odd)]<\frac{1}{10}$. Therefore the probability of the event
$A_r(even) \cup A_r(odd)$ is smaller than 1 and the result now follows from
Lemmas \ref{lem: cyclesondifferentparity} and \ref{lem:zero_one}.
\end{proof}
In the remaining of this section we establish the uniqueness part of Theorem
\ref{thm:grid} and establish Theorem \ref{thm: scalling}, which is the finite
analog of Theorem 1.4 concerning $T_4$ considered on finite boxes of the
form $[-n,n]^2 \cap \Z^2$.

The following lemma is a particular case of a fundamental result in percolation
theory about the exponential decay of the cluster size distribution in subcritical
independent site percolation due to Menshikov \cite{Mensh} and independently
Aizenman and Barsky \cite{Aizenman}.
\begin{lemma}
\label{lem: expdecayandcrossing}
Consider independent site percolation on $\Z^2$ with $p=1/2$. Denote the
corresponding probability measure by $\Pr_{1/2}$. For any $v \in \Z^2$ let
$C(v)$ be the open cluster of $v$ (i.e$.$ the connected component of $v$
in the graph induced on the set of open vertices, where if $v$ is closed,
we define $C(v)$ to be the empty set). Then there exists a constant $M>0$
such
that for any $v \in \Z^2$, $\Pr_{1/2}[|C(v)| \ge k] \le e^{-Mk }$.
\end{lemma}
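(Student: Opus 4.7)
The plan is to reduce the statement to the classical exponential decay theorem for the subcritical phase of independent site percolation, established by Menshikov and independently by Aizenman and Barsky. Once one knows that $p=1/2$ is strictly below $p_{c}^{\mathrm{site}}(\Z^{2})$, that theorem directly delivers a constant $M=M(p)>0$, independent of $v$ by translation invariance of $\Z^{2}$, such that $\Pr_{1/2}[|C(v)|\ge k]\le e^{-Mk}$.

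The first step is therefore to verify the strict inequality $1/2<p_{c}^{\mathrm{site}}(\Z^{2})$. For this I would combine Russo's identity (Lemma \ref{lem:russo}), $p_{c}(\Z^{2})+p_{c}(\Z_{*}^{2})=1$, with Higuchi's strict inequality $p_{c}(\Z^{2})>1/2$ cited in the text; the sharper numerical bound $p_{c}(\Z^{2})>0.556$ from \cite{Berg}, also quoted earlier, makes the slack transparent. Hence $p=1/2$ lies inside the subcritical regime.

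The second step is to invoke the Menshikov/Aizenman-Barsky theorem: for any $p<p_{c}^{\mathrm{site}}(\Z^{d})$ one has $\Pr_{p}[|C(v)|\ge k]\le e^{-M(p)k}$ for some $M(p)>0$. Since $\Z^{2}$ is vertex transitive the constant does not depend on the choice of $v$, which yields exactly the statement of the lemma. Nothing in this step interacts with the layers model; the lemma is invoked in the sequel purely as a black-box percolation estimate on $\Z^{2}$ at density $1/2$.

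The only genuine obstacle is that the Menshikov/Aizenman-Barsky input is itself a deep theorem, so a self-contained proof would have to replace it. A reasonable alternative in two dimensions is to use $\Z^{2}/\Z_{*}^{2}$ duality together with a Peierls-type contour count to bound the probability of large clusters by an exponential in the perimeter, and then convert perimeter bounds into volume bounds via isoperimetry; one could also use the Simon-Lieb inequality and compare with a subcritical branching process. These routes are longer and less elegant than a direct citation, so the citation approach is the natural one for the paper.
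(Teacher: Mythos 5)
Your proposal matches the paper's treatment: the paper states this lemma as a direct specialization of the Menshikov/Aizenman--Barsky exponential decay theorem, with subcriticality of $p=1/2$ guaranteed by the cited bound $p_c^{\mathrm{site}}(\Z^2)>0.556$ of van den Berg and Ermakov (Higuchi's $p_c>1/2$ would also suffice), and offers no further proof. Your reduction and the translation-invariance remark are exactly the intended reading, so the proposal is correct and essentially identical in approach.
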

\begin{definition}
\label{def: crossing}
Let $v=(v_1,v_2),u=(u_{1},u_2) \in \Z^2$ such that $v_i < u_i$, $i=1,2$.
Consider the rectangle $I=I_{v,u}:=\{(w_1,w_2)
\in \Z^2 : v_1 \le w_1 \le u_{1}, v_2 \le w_2 \le u_2 \}$. Denote $L:=\{(w_1,w_2)\in
I: w_1=v_1 \}$, $R:=\{(w_1,w_2)\in
I: w_1=u_1 \}$,  $D:=\{(w_1,w_2)\in
I: w_2=v_2 \}$ and $U:=\{(w_1,w_2)\in
I: w_2=u_2 \}$. We call a path in $\Z^2$ from $L$
to $R$ (from $D$ to $U$) which is contained in $I$
a $LR$ crossing ($DU$ crossing, respectively). We call a $*$-path from $L$
to $R$ (from $D$ to $U$) in $\Z_*^2$
contained in $I$ a
$LR$ $*$-crossing ($DU$ $*$-crossing, respectively).
\end{definition}
\begin{lemma}
\label{lem: crossing}
Suppose we partition the vertices of a rectangle $I$ to open and closed vertices
and call the induced graphs with respect to $\Z^2$ on the set
of open vertices $O$. Call the induced graph with respect to $\Z_*^2$ on
the set of closed vertices $F$.
Then either there exist a $LR$ crossing in $O$ or there exists a $DU$ $*$-crossing
in $F$. The same holds when the roles of $LR$ and $DU$ are replaced.
\end{lemma}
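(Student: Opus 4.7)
The plan is to prove this by the classical self-matching duality between site percolation on $\Z^2$ (using nearest-neighbor adjacency) and on $\Z_*^2$ (including diagonals). Once one of the two dichotomies is established, the other follows by rotating the rectangle $I$ by ninety degrees. So I will concentrate on proving that either an $LR$ crossing exists in $O$ or a $DU$ $*$-crossing exists in $F$.

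Assume no $LR$ crossing exists in $O$. Define
\[
A := \{v \in I : v \text{ is open and connected in } O \cap I \text{ to some open vertex of } L\},
\]
the set of vertices in $I$ reachable from $L$ by an open $\Z^2$-path inside $I$. By the assumption $A \cap R = \emptyset$. Let
\[
\partial A := \{u \in I \setminus A : u \text{ is } \Z^2\text{-adjacent to some vertex of } A\}.
\]
Every vertex of $\partial A$ must be closed, for otherwise it could be added to $A$. The goal is then to show that the closed set $\partial A \cup (L \cap F)$ contains a $*$-path from $D$ to $U$ lying entirely inside $I$; such a path is a $DU$ $*$-crossing in $F$.

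The key step is a contour-following construction: starting from a closed vertex in the top row (either in $U \cap F$, or on the boundary of $A$ just below $U$), trace closed vertices of $\partial A$ downward, using $*$-adjacency to navigate around concave corners of $A$. Planarity of $\Z^2$ together with the fact that $A$ never reaches $R$ guarantees that such a trace can always be continued as long as the current vertex is not in $D$, so the trace terminates at a closed vertex in $D$, yielding the desired $*$-path in $F$. The main obstacle is to make this contour-following argument precise, carefully handling corner cases where the boundary of $A$ meets $L$, $R$, $D$, or $U$. This is standard material in planar percolation theory (it is the self-matching property of $\Z^2$ site percolation used implicitly in the Harris--Kesten theorem); it can be formalized either by induction on $|I|$ (peeling off a boundary row or column that is either entirely closed, in which case the crossing is trivial, or entirely disjoint from $A$, in which case we reduce to a smaller rectangle), or by appealing directly to a Hex-type lemma for the square lattice with mixed strong/weak adjacencies. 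Once this topological lemma is in hand, the first dichotomy follows, and the second follows by a $90^{\circ}$ rotation of $I$.
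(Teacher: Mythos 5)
The paper does not prove this lemma at all; immediately after its statement it says ``We omit the proof of the previous lemma,'' so there is no in-paper argument to compare against. Your outline is the standard self-matching duality for site percolation on the square lattice: assuming no open $LR$ crossing, take the open cluster $A$ of $L$, note that every vertex of its $\Z^2$-boundary inside $I$ must be closed, and argue that the closed boundary together with the closed portion of $L$ must contain a $DU$ $*$-path; the second dichotomy then follows from a $90^{\circ}$ rotation, which is an automorphism of both $\Z^2$ and $\Z_*^2$. All of that is correct and is the right decomposition.

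The only place where the proposal stops short of a proof is the step you yourself flag, and it is worth being explicit about why it is not just a routine detail. The set $A$ need not be $\Z^2$-connected (if $L$ contains several open runs separated by closed vertices, $A$ splits into several clusters), so $\partial A$ can have several pieces, and one must actually show that these pieces stitch together through $L\cap F$ by $*$-adjacency into a single path that reaches both $D$ and $U$ rather than closing up into a loop around one of the clusters. The cleanest way to remove the connectivity nuisance is a variant of the padding trick you allude to: augment $I$ with an all-open column just left of $L$; then the open cluster of that column is connected, its entire $\Z^2$-boundary inside $I$ is closed, and an interface-following walk (or a Hex-type counting lemma) produces the closed $DU$ $*$-path directly. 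As written, your argument establishes the correct structure and identifies the right barrier set, but leaves the topological core as an appeal to ``standard material'' --- which is defensible here, since the paper does exactly the same by omitting the proof, but it is the only step that actually requires work.
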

We omit the proof of the previous lemma.
\begin{lemma}
Let $I:=I_{(v_{1},v_{2}),(v_1+m,v_2+k)}$ be as in Definition \ref{def: crossing}
for some $m,k \in \N$. Let $J$ be the induced graph on $T_4(\Z^{2})\cap I$
with respect to $\Z^{2}$. Then,
\begin{equation}
\label{eq: crossing}
P(\exists LR \text{ crossing in } J) \ge 1- m e^{-Mk}, P(\exists DU \text{
crossing in } J) \ge 1- ke^{-Mm}.
\end{equation}
\end{lemma}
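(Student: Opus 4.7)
The plan is to prove the contrapositive form of each bound: the absence of the stated crossing in $J$ forces an unusually long $*$-crossing inside the closed set $L_{5}(\Z^{2})\cap I$, which is ruled out by subcritical exponential decay at density~$1/2$. Applying Lemma~\ref{lem: crossing} with $O:=J$ and $F$ the subgraph of $\Z_{*}^{2}$ induced on $I\cap L_{5}(\Z^{2})$, the non-existence of an $LR$-crossing in $J$ forces a $DU$ $*$-crossing in $F$, and any such $*$-crossing contains at least $k+1$ vertices since each $*$-step shifts the $y$-coordinate by at most one. Hence
\[\Pr(\nexists\, LR\text{ crossing in }J) \le \Pr\bigl(\exists\, DU\ *\text{-crossing in } L_{5}(\Z^{2})\cap I\bigr).\]

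Next I would invoke the structural property that $L_{5}(\Z^{2})$ is an independent set of $\Z^{2}$: if a $*$-connected component of $L_{5}(\Z^{2})$ contained a vertex of $V_{even}$ together with a $*$-neighbour in $V_{odd}$, then the coordinate parities would force that pair to be $\Z^{2}$-adjacent, contradicting independence of $L_5$ in $\Z^2$. So every $*$-connected component of $L_{5}(\Z^{2})$, and hence any $DU$ $*$-crossing in $F$, lies entirely within $V_{even}$ or entirely within $V_{odd}$. By Lemma~\ref{lem: ZevenisomorphictoZ2} each of these sublattices (equipped with the $\Z_{*}^{2}$-edges between its vertices) is isomorphic to $\Z^{2}$, and under the isomorphism a $*$-crossing with $\ge k+1$ vertices becomes a connected subset of $\Z^{2}$ with $\ge k+1$ vertices whose image contains a prescribed starting vertex lying in the image of $D$.

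Finally, Lemma~\ref{lem:order} shows that the indicator process $(\mathbf{1}\{v \in L_{5}\})_{v \in V_{even}}$ on $\Z_{even}^{2}$ is stochastically dominated by independent Bernoulli($1/2$) site percolation, and the event ``the open cluster of $v$ has size at least $k+1$'' is increasing; by Lemma~\ref{lem: expdecayandcrossing} this probability is at most $e^{-M(k+1)}$ for every fixed starting vertex. Union bounding over the at most $m+1$ candidate starting points in $D$ (covering both parity classes with identical estimates) and absorbing the resulting constants into $M$ yields the first inequality of~(\ref{eq: crossing}); the second inequality follows by the symmetric argument, interchanging the roles of $LR$ and $DU$ and of $m$ and $k$. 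The only delicate step I foresee is the parity argument confining a $*$-component of $L_{5}(\Z^{2})$ to a single sublattice of $\Z^2$; once that is in place, subcritical exponential decay at $p=1/2$ completes the proof mechanically.
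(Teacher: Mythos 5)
Your proof is correct and follows essentially the same route as the paper's: apply Lemma~\ref{lem: crossing} to reduce the failure of an $LR$-crossing to the existence of a $DU$ $*$-crossing of closed (i.e.\ $L_5$) vertices, use independence of $L_5$ in $\Z^2$ to confine that $*$-crossing to a single parity sublattice, transfer to $\Z^2$ via Lemma~\ref{lem: ZevenisomorphictoZ2}, apply the stochastic domination of Lemma~\ref{lem:order} and the subcritical exponential decay of Lemma~\ref{lem: expdecayandcrossing}, and union bound over $D$. The only (cosmetic) difference is that you spell out the parity argument and track the $m+1$ vs.\ $m$ and $k+1$ vs.\ $k$ discrepancies explicitly, which the paper glosses over; both can be absorbed into the constant $M$ as you note.
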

\begin{proof}
By Lemma \ref{lem: crossing} if there does not exist a $LR$ crossing in $J$,
then there exists a $DU$ $*$-crossing in $L_5(\Z^2) \cap I$. Since $L_5$
is an independent set, such a crossing is contained in either $V_{even}$
or $V_{odd}$. Pick $d \in D \cap V_{even}$. We argue that the probability
that there exists a $DU$ $*$-crossing starting from $d$ in $L_5(\Z^2)\cap
\Z_{even}^2$
is at most $e^{-Mk}$, where $M$ is as in Lemma \ref{lem: expdecayandcrossing}.
The same holds for any $d \in D \cap V_{odd}$.  To see this, note that the
graph distance of $d$ from $U$ with respect to $\Z_{even}^2$ is $k$. The
event that there exists a $DU$ $*$-crossing starting from $d$   is clearly
contained in the event that the
size of the connected component of $d$ in the induced subgraph  on $L_5(\Z^2)\cap
V_{even}$ with respect to $\Z_{even}^2$ is of size at least $k$. We can upper
bound the last probability by Lemmas \ref{lem: ZevenisomorphictoZ2},
\ref{lem:order} and \ref{lem: expdecayandcrossing}.
It follows by a union bound over the vertices of $D$ that the probability
that there exists a $DU$
$*$-crossing in $L_5(\Z^2)$ is at most $me^{-Mk}$.  The second inequality
is proven in an analogous manner.
\end{proof}
\begin{theorem}
\label{thm: uniquenessgrid}
$T_4(\Z^2)$ contains a unique infinite cluster $a.s.$
\end{theorem}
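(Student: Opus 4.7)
The plan is to combine the rectangle crossing estimate \eqref{eq: crossing} with a planar topology argument to produce, at arbitrarily large scales, a cycle of $T_4$-vertices surrounding the origin, and then to observe that any two infinite clusters of $T_4(\Z^2)$ would both have to meet such a cycle and hence coincide.

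Step 1 (surrounding cycles from rectangle crossings). For each positive integer $n$, consider the four rectangles $R_1^{(n)}=[-2n,2n]\times[n,2n]$, $R_2^{(n)}=[-2n,2n]\times[-2n,-n]$, $R_3^{(n)}=[-2n,-n]\times[-2n,2n]$ and $R_4^{(n)}=[n,2n]\times[-2n,2n]$, which together enclose $[-n,n]^2$. Applying \eqref{eq: crossing} to each $R_i^{(n)}$, the probability that it contains a $T_4$-crossing in its long direction is at least $1-(4n+1)e^{-M(n+1)}$. Whenever all four crossings exist, an adjacent pair shares a vertex inside the corner square $R_i^{(n)}\cap R_{i+1}^{(n)}$: restricted to that corner one of the crossings is an $LR$ path and the other is a $DU$ path in $T_4$, so by the standard planar Jordan-curve fact for $\Z^2$ they must intersect. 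The union of the four crossings therefore contains a cycle $\gamma^{(n)}\subset T_4$ encircling $[-n,n]^2$, and a union bound over the four rectangles shows that the event $F_n$ that such a $\gamma^{(n)}$ exists satisfies $\Pr[F_n]\ge 1 - O(n e^{-Mn})$.

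Step 2 (direct uniqueness). For vertices $u,v\in\Z^2$ let $E_{u,v}$ be the event that $u$ and $v$ lie in two distinct infinite clusters of $T_4(\Z^2)$. Choose $n_0$ with $u,v\in[-n_0,n_0]^2$ and fix any $n\ge n_0$. On $E_{u,v}\cap F_n$ each of the two distinct infinite clusters contains an infinite simple path through $u$, respectively $v$, lying entirely in $T_4$; such a path must leave $[-n,n]^2$ and, since $\gamma^{(n)}\subset T_4$ is a cycle in $\Z^2$ separating its interior from infinity, it must share a vertex with $\gamma^{(n)}$. Thus $\gamma^{(n)}$ is a connected subset of $T_4$ meeting both infinite clusters, forcing them to coincide and contradicting $E_{u,v}$. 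Hence $\Pr[E_{u,v}]\le\Pr[F_n^c]$, and letting $n\to\infty$ gives $\Pr[E_{u,v}]=0$. Summing over the countably many pairs $u,v\in\Z^2$ shows that almost surely $T_4$ has at most one infinite cluster, while Theorem~\ref{thm: T_4(Z2} supplies at least one.

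The main technical point is in Step~1: gluing four rectangular $T_4$-crossings into a single cycle via the corner-intersection argument. Once the exponentially small failure probabilities supplied by \eqref{eq: crossing} are in hand, this reduces to the standard planar intersection of an $LR$ path and a $DU$ path inside a box, and the remaining direct uniqueness argument (which sidesteps any Burton--Keane-style ergodicity machinery) is routine.
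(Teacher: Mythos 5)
Your proof is correct and follows essentially the same strategy as the paper: use the crossing estimate \eqref{eq: crossing} to build, with probability tending to $1$, a $T_4$-cycle in an annulus surrounding a growing box, then observe that any two infinite clusters must both hit such a cycle and hence coincide. The only cosmetic difference is that the paper runs Borel--Cantelli over dyadic annuli to get surrounding cycles at all but finitely many scales, whereas you fix a pair $u,v$ and let the scale $n\to\infty$ directly; both routes use the same crossing lemma and the same gluing-of-four-crossings construction.
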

\begin{proof}
For any $k \in \N$ in the notation of Definition \ref{def: crossing} let:
$I_1(k):=I_{(-2^{k+1},-2^{k+1}),(-2^{k},2^{k+1})}$, $I_2(k):= I_{(-2^{k+1},-2^{k+1}),(2^{k+1},-2^{k})}$,
$I_3(k):=  I_{(-2^{k+1},2^{k}),(2^{k+1},2^{k+1})}$ and $I_4(k):=  I_{(2^{k},-2^{k+1}),(2^{k+1},2^{k+1})}$.
For any $i \in [4]$ and $k \in \N$ we let $J_i(k)$ be the induced
graph with respect to $\Z^{2}$ on $T_4(\Z^{2})\cap I_{i}(k)$.

We say that $I_i(k)$ is \emph{good} if $J_i(k)$ contains a $LR$ and a $UD$
crossing. Let $G_k$ be the event that $I_i(k)$ is \emph{good} for all $1
\le i \le 4$. Denote the complement of the event $G_k$ by $G_k^c$. By  (\ref{eq:
crossing}) and a union bound, for any $k \in \N$  $$\Pr[G_k^{c}] \le 8
\cdot 2^{k+1}e^{-M2^{k}} \le Ce^{-M'2^{k}} \text{ for some $0<C$ and }0<M'<M.$$
Hence $\sum_{k}\Pr[G_k^{c}] < \infty$. It follows
from the Borel-Cantelli Lemma that $a.s.$ all but finitely many of the events
$(G_k : k \in \N)$ occur. Note that if $G_k$ occurs, then there must be a
 cycle composed of vertices in $T_4(\Z^2)$ which is contained in the annulus
$\bigcup_{i=1}^4 I_i(k) $ that surrounds the rectangle $I_{(-2^{k},-2^{k}),(2^{k},2^{k})}
$ (which is the interior of that annulus).
Namely, $G_k$ implies that we have a $LR$ crossing of $I_2(k)$ and of $I_3(k)$
in $T_4$ and a $DU$ crossing of  $I_1(k)$ and of $I_4(k)$
in $T_4$. The union of which contains the desired cycle.

Pick such a cycle for each $k$ for which $G_k$ holds and call it $C_k$. Let
$u,v \in \Z^2$. With probability 1 there exists some $k$ sufficiently large
such that both $u$ and $v$ are contained in the interior of the annulus
$\bigcup_{i=1}^4
I_i(k) $ and $G_k$ holds. If both $|C(v)|,|C(u)|=\infty$, then both of $C(u)$ and $C(v)$ must intersect $C_k$ (where $C(u)$ and $C(v)$ are the components of $v$ and $u$, respectively, in $T_4(\Z^2)$). So with probability 1 $C(v)=C(u)$.
\end{proof}
We now comment about how the uniqueness proof also provides an alternative
proof for the existence of the infinite cluster of $T_4(\Z^2)$.
Define $I_{k}'$ and $J'_{i}(k)$
in an analogous manner to the definitions of $I_i(k)$ and $J_i(k)$, where
$2^{k},2^{k+1}$ are replaced by $4^k,4^{k+1}$,
respectively ($i \in [4]$).
Define $G_k'$ with respect to the rectangles $J'_i(k)$ $i
\in [4]$ in an analogous manner to the definition on $G_k$. A similar
calculation as in the above proof shows that $a.s.$ all but finitely many
of the events $G_k'$ occur. Observe that $\bigcup_{i=1}^4
I'_i(k+1)=\bigcup_{j=0}^{1}\bigcup_{i=1}^4
I_i(2k+j)$.
Note that since we require in $G_k$ and $G_k'$ that every rectangle out of
the 4 corresponding to index $k$ would have both a $LR$ crossing and a $DU$
crossing, we get that on $G_k' \cap G_{2k} \cap G_{2k+1}$ the crossings of
$G_k'$ must connect the cycles we get in the above proof in the annuli
$\bigcup_{i=1}^4
I_i(2k) $ and $\bigcup_{i=1}^4
I_i(2k+1) $ from the occurrence of $G_{2k}$ and $G_{2k+1}$. Since with probability
one this occurs for all but finitely many $k$'s, we get that $a.s.$ $T_4(\Z^2)$
has an infinite cluster.

The proof  of the next theorem essentially follows that of Theorem 7.61 in
Grimmett \cite{Grimmett}.
\begin{theorem}
\label{thm: scalling}
Let $G_{n}:=(V_{n},E_{n})$ with $V_{n}:=[-n,n]^2 \cap \Z^2$, be the induced
graph on $V_{n}$ with respect to $\Z^2$. Let $\Theta$ be the probability
that $0$ belongs to an infinite cluster in $T_4(\Z^2)$. Take a fixed $\epsilon \in (0,1)$.
Let $GC$ be the largest connected component of $T_4(G_{n})$.
\begin{itemize}
\item[(i)] There exist positive absolute constants $C,M$ such that for any
$n$ sufficiently large
$$ \Pr \left[|GC|<4n^{2}(1-\epsilon)\Theta \right] \le e^{-C \epsilon^2
n^{1/5}}+4n^{2}e^{-M
n^{1/5}}.$$
\item[(ii)]
There exists an absolute constant $L$ such that with probability at
least $1-n^{-2}$ all other components of $T_4(G_{n})$ apart from $GC$ are
(as subsets of $\Z^2$) of diameter at most $L\log n$.
\end{itemize}
\end{theorem}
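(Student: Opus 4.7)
I would follow the standard mesoscopic-renormalization blueprint (as in Grimmett's Theorem~7.61) and treat the two parts separately, using the machinery already developed in the previous subsection. \textbf{For part~(i)}, fix the scale $s := \lfloor n^{1/5}\rfloor$ and tile $V_n$ by axis-aligned sub-boxes of side $s$, together with the $O((n/s)^2)$ ``overlap'' rectangles of dimensions $2s\times s$ that will be used to glue neighboring sub-boxes. By inequality~(\ref{eq: crossing}), each of these rectangles admits both an $LR$- and a $UD$-crossing in $T_4$ except with probability $O(se^{-Ms})$. A union bound produces a ``good'' event $\mathcal{G}$ with $\Pr[\mathcal{G}^c] = O(n^{9/5}e^{-Mn^{1/5}}) \le 4n^2 e^{-Mn^{1/5}}$, accounting for the second error term. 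On $\mathcal{G}$, two orthogonal crossings of any common rectangle must meet, so all crossings merge into a single connected subgraph $C^\ast \subseteq T_4(G_n)$ spanning every sub-box.

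To count $|C^\ast|$, for each interior vertex $v$ with $v+[-s,s]^2 \subseteq V_n$ let $E_v$ be the event that $v$ is $T_4$-connected within $v+[-s,s]^2$ to the boundary of that box; $E_v$ depends only on the ages inside that $s$-neighborhood. Since $v$ lying in the (a.s.-unique, by Theorem~\ref{thm: uniquenessgrid}) infinite cluster of $T_4(\Z^2)$ implies $E_v$, we have $\Pr[E_v] \ge \Theta$. Moreover, on $\mathcal{G}$ any path witnessing $E_v$ must intersect one of the sub-box crossings, forcing $v \in C^\ast \subseteq GC$. Set $N := |\{v \text{ interior} : E_v\}|$; then $\E[N] \ge (4n^2 - O(ns))\,\Theta \ge (1-\epsilon/2)\cdot 4n^2\Theta$ for $n$ large. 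Since changing any single age affects $E_v$ for at most $O(s^2)$ vertices, Azuma--Hoeffding (Lemma~\ref{lem:Azuma}) yields
\[
\Pr\!\left[N < \E[N] - \tfrac{\epsilon}{2}\cdot 4n^2\Theta\right] \;\le\; \exp(-c\epsilon^2 n^{6/5}) \;\le\; \exp(-C\epsilon^2 n^{1/5}),
\]
and combining with $|GC|\ge N$ on $\mathcal{G}$ gives~(i).

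\textbf{For part~(ii)}, suppose $C$ is a $T_4(G_n)$-cluster of diameter at least $D$ that is not $GC$. By Lemma~\ref{lem:square} there is a $*$-cycle in $L_5$ surrounding $C$ of length $\Omega(D)$; since $L_5$ is an independent set in $\Z^2$, this $*$-cycle lies entirely in $V_{even}$ or $V_{odd}$ and hence is a cycle in one of $\Z_{even}^2, \Z_{odd}^2$. By Lemmas~\ref{lem:order} and~\ref{lem: ZevenisomorphictoZ2} the law of $L_5 \cap V_{even}$ is stochastically dominated by Bernoulli$(1/2)$ site percolation on $\Z^2$, and the exponential decay of Lemma~\ref{lem: expdecayandcrossing} then bounds the probability that any fixed $v\in V_n$ is enclosed by such a cycle of length $\ge D$ by $O(D\,e^{-MD})$. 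Taking $D = L\log n$ with $L\ge 5/M$ and union-bounding over the $\le 4n^2$ choices of $v$ makes the failure probability $\le n^{-2}$, proving~(ii).

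\textbf{Main obstacle.} The most delicate step is the gluing inside $\mathcal{G}$: two horizontally adjacent $s$-sub-boxes share only an edge, and their individual $LR$-crossings may fail to meet there. The fix is to impose crossings also on the $2s \times s$ overlap rectangles, so that an $LR$-crossing through the union is forced to pass through each sub-box and meet the sub-box's $UD$-crossing; this is why the union bound must include those extra rectangles. A secondary complication is the boundary layer of $V_n$, where layer membership under $T_4(G_n)$ disagrees with that of $T_4(\Z^2)$; restricting $N$ to interior vertices loses only $O(ns)$ vertices, comfortably absorbed in the $\epsilon$-slack.
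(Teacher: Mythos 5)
Your overall strategy (crossing renormalization for part~(i), surrounding-cycle exponential decay for part~(ii)) is in the right spirit, but the concrete parameters and reduction steps contain two genuine gaps.

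\textbf{Part (i).} The paper uses strips of width $\lceil n^{1/5}\rceil$ running the \emph{entire} length $2n$ of the box, and defines $Z_v$ as the event that $C(v)$ has diameter at least $4n^{1/5}$. The factor $4$ is what guarantees that a component realizing $Z_v$ has $x$- or $y$-extent at least $2n^{1/5}$, hence crosses some width-$n^{1/5}$ strip end-to-end, hence meets the full-length dual crossing of that strip. In your version, $E_v$ only asks that $v$ reach $\ell^\infty$-distance $s=n^{1/5}$, and the crossings live in $s\times s$ sub-boxes and $2s\times s$ overlaps. A witnessing path has $x$- or $y$-extent $\ge s$ but need not contain any full horizontal or vertical traversal of an $s$-wide column (the interval $[x_0,x_0+s]$ generally does not contain an entire aligned $[ks,(k+1)s]$, and it certainly does not contain a $2s$-wide overlap interval), so there is no forced topological intersection with the sub-box crossings. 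Your claim ``on $\mathcal{G}$ any path witnessing $E_v$ must intersect one of the sub-box crossings'' is false as stated; a short path threading between crossings inside a $2s\times2s$ block of four sub-boxes avoids them all. The fix is exactly the paper's: either enlarge the box defining $E_v$ (say to $v+[-4s,4s]^2$) so the path's extent exceeds twice the renormalization scale, or replace the $s\times s$ tiles by full-length strips so a path of extent $\ge 2s$ must produce an $LR$ (or $DU$) crossing of some strip.

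\textbf{Part (ii).} Lemma~\ref{lem:square} is a statement about $\Z^2$: a finite cluster is enclosed by a closed $*$-cycle. In the finite box $G_n$ a component that is not $GC$ need not be enclosed by a $*$-cycle of $L_5$ vertices; it can be cut off in part by the boundary $\partial V_n$, in which case the separating set is an $L_5$ $*$-\emph{path} with both endpoints on $\partial V_n$, not a cycle. Your union-bound over vertices therefore does not cover all failure events. The paper sidesteps this entirely by again using full-length strip crossings: on the event that every strip of width $L\log n$ has the appropriate crossing, the crossings form a single component reaching all four sides, so any other component has extent $<2L\log n$ regardless of whether it touches $\partial V_n$. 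Your surrounding-cycle route can be repaired by additionally treating boundary-to-boundary $L_5$ $*$-paths (the same exponential decay applies to them), but as written the step ``there is a $*$-cycle in $L_5$ surrounding $C$'' does not hold in $G_n$.

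Both gaps are repairable without changing the high-level plan, but neither is merely cosmetic: the first breaks the deterministic implication $E_v \Rightarrow v\in GC$ that the counting argument rests on, and the second omits an entire class of failure events.
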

\begin{proof}
We may assume that $T_4(G_{n})$ is obtained from sampling $(X_v : v \in \Z^2)$,
where as usual these are i.i.d$.$ $U[0,1]$ random variables. We may use $(X_v)_{v
\in \Z^2}$ to define simultaneously $T_4(G_{n})$ for all $n \in \N$ together
with $T_4(\Z^2)$.
Note that $T_4(G_n)$ and $T_4(\Z^2)$ agree on $[-n+1,n-1]^2 \cap \Z^2$. Let
$A_n:=[-n+ 4\lceil n^{1/5} \rceil , n - 4\lceil n^{1/5} \rceil]^2 \cap V_{n}$.
For every $v \in A_n$, let $C(v)$ be the connected component of $v$ in $T_4(G_n)$.
For $v \in A_n$, let $Z_v$ be the indicator of the event that as a subset
of $\Z^2$ the diameter of $C(v)$ is at least $4n^{1/5}$. Note that the last
event is a subset of the event that $v$ belongs to the infinite cluster of
$T_4(\Z^2)$ (this is meaningful by our coupling of $T_4(G_n)$ with $T_4(\Z^2)$)
and hence has probability at least $\Theta$. Note that each $Z_v$ depends
on at most $64n^{2/5}$  random variables from $(Z_u)_{u \in A_n}$. Let $B$ be the event that $\sum_{v \in A_n}Z_v
< 4n^{2}(1-\epsilon)\Theta$. By Azuma
inequality, for every $n$ sufficiently large
$$\Pr \left[B \right] \le e^{-C \epsilon^2 n^{1/5}},$$ for some
positive absolute constant $C$.

In the notation of Definition \ref{def: crossing}, look at all the rectangles
of the form $I_{(a,-n),(a+\lceil n^{1/5} \rceil,n) \rceil)} $ and of the
form $I_{(-n,a),(n,a+\lceil n^{1/5} \rceil)} $ for any integer $-n \le a
\le n- \lceil n^{1/5} \rceil$. Then by Lemma \ref{lem: crossing} and taking a union
bound over all such rectangles, we have that with probability at least $1-4n^{2}e^{-M
n^{1/5}}$ for all  $-n \le a
\le n- \lceil n^{1/5} \rceil$  there is a $DU$ crossing of the rectangle
$I_{(a,-n),(a+\lceil n^{1/5} \rceil,n) \rceil)}$ contained in $T_4(G_n)$
and a $LR$ crossing of the rectangle $I_{(-n,a),(n,a+\lceil n^{1/5} \rceil)}$
contained in $T_4(G_n)$. Call this event $D$. It is easy to see that on $B
\cap D$ there is a connected component of $T_4(G_n)$ of size at least $4n^{2}(1-\epsilon)\Theta$
and every other connected component
is of diameter at most $2\lceil n^{1/5} \rceil$.
 This concludes the proof of (i). The proof of (ii) is obtained in a similar
manner by considering all the rectangles of the form $I_{(a,-n),(a+L\log
n,n) \rceil)} $ and of the
form $I_{(-n,a),(n,a+L\log n)}$ for some sufficiently large constant $L$.
We omit the details.
\end{proof}

\section*{Acknowledgements}
The authors would like to thank Itai Benjamini, Gady Kozma and Shai Vardi
for useful discussions.

\begin{appendix}

\section{An upper bound on the size of monotone components}
\label{sec:vardi}

In this Section we sketch a proof of Theorem~\ref{thm:vardi} (based on the proof given in~\cite{vardi}).
We start with the special case of trees. The following proposition is essentially from~\cite{NO}.

\begin{proposition}
\label{pro:expectation}
Let $G$ be a $d$-ary tree (every vertex has $d$ children). Let $a$ be the random label of the root $r$. Then the expected size of $r$'s monotone component (expectation taken over choice of other random labels) is $E[|C_{mon}(r)|] = e^{ad}$.
\end{proposition}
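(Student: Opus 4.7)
The approach is a direct computation via linearity of expectation, decomposing $C_{mon}(r)$ by depth. The key observation is that whether a vertex $v$ at depth $k$ from the root lies in $C_{mon}(r)$ depends only on the $k$ labels along the unique root-to-$v$ path in the tree, and these labels are independent uniform $[0,1]$ random variables (apart from the root label, which is fixed to $a$).

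First I would write $|C_{mon}(r)| = \sum_{v} Y_v$ where $Y_v$ is the indicator that $v$ belongs to $C_{mon}(r)$, and take expectations. For a vertex $v$ at depth $k$ with root-to-$v$ path $r = u_0, u_1, \ldots, u_k = v$, the event $\{Y_v = 1\}$ is exactly $\{a > X_{u_1} > X_{u_2} > \cdots > X_{u_k}\}$. Since $X_{u_1},\ldots,X_{u_k}$ are i.i.d.\ uniform on $[0,1]$, the probability that they all lie below $a$ is $a^k$, and conditional on that, by symmetry each of the $k!$ orderings is equally likely, so
\[
\Pr[Y_v = 1 \mid X_r = a] = \frac{a^k}{k!}.
\]

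Next I would count: in a $d$-ary tree there are exactly $d^k$ vertices at depth $k$ (and the root itself contributes the $k=0$ term with value~$1$). By linearity of expectation,
\[
\mathbb{E}\bigl[|C_{mon}(r)| \,\big|\, X_r = a\bigr] \;=\; \sum_{k=0}^{\infty} d^k \cdot \frac{a^k}{k!} \;=\; \sum_{k=0}^{\infty} \frac{(ad)^k}{k!} \;=\; e^{ad},
\]
which is the desired identity. The sum converges since $ad < \infty$, so no issue arises for the infinite tree.

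There is no real obstacle here; the only small point worth being careful about is justifying that the events along distinct depth-$k$ paths involve disjoint sets of ``non-root'' random variables so that linearity of expectation applies cleanly (which it does regardless of independence, but the independence is useful when extending to variance bounds later if needed). Everything else is a routine computation.
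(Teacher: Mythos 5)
Your proof is correct and follows essentially the same route as the paper's: decompose $|C_{mon}(r)|$ by depth, show the probability that a depth-$k$ vertex is reached is $a^k/k!$, and sum $d^k a^k/k!$ over $k$ to get $e^{ad}$. You simply spell out the derivation of $a^k/k!$ (bounding and ordering the $k$ labels on the path), which the paper asserts without comment.
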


\begin{proof}
Denote the level of the root $r$ by~0, and consider a vertex $u$ at level $i$. The probability that $u \in C_{mon}(r)$ is exactly $a^i/i!$. Hence by linearity of expectation, $E[|C_{mon}(r)|] = \sum_{i\ge 0} d^ia^i/i! = e^{ad}$.
\end{proof}

The following Lemma is proved in~\cite{vardi}. For completeness, we sketch its proof.

\begin{lemma}
\label{lem:vardi}
There is some constant $b > 0$ such that for a $d$-ary tree as above and every $n$, $Pr\left[|C_{mon}(r)| \ge e^{bd}\log n\right] \le \frac{1}{n^2}$.
\end{lemma}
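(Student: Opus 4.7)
The plan is to apply the exponential moment (MGF) method, conditional on the root's age. For $\lambda > 0$ set $g(a) := \mathbb{E}[e^{\lambda|C_{mon}(r)|}\mid X_r = a]$. Decomposing $|C_{mon}(r)| = 1 + \sum_{i=1}^d Y_i$, where $Y_i = |C_{mon}(c_i)|$ (computed in the subtree rooted at the $i$-th child $c_i$ of $r$) when $X_{c_i} < a$, and $Y_i = 0$ otherwise, and noting that the $Y_i$ are i.i.d.\ given $a$ with $\mathbb{E}[e^{\lambda Y_i}\mid a] = (1-a) + \int_0^a g(x)\,dx$, yields the integral equation
\[
g(a) \;=\; e^{\lambda}\Bigl(1 - a + \int_0^a g(x)\,dx\Bigr)^{\!d}.
\]
Setting $H(a) := 1 - a + \int_0^a g(x)\,dx$, this is equivalent to the nonlinear ODE $H'(a) = e^{\lambda} H(a)^d - 1$ with $H(0) = 1$, together with the identity $g(a) = e^{\lambda} H(a)^d$. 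Since $H'(a) > 0$ whenever $H(a) \ge 1$, one has $H \ge 1$ on $[0,1]$.

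The next step is to analyze the ODE. Writing $\phi := H - 1 \ge 0$ and using $(1+\phi)^d \le e^{d\phi}$, we get $\phi'(a) \le e^{\lambda + d\phi(a)} - 1$. The substitution $u := \lambda + d\phi$ gives $u(0) = \lambda$ and $u'(a) \le d(e^{u(a)} - 1)$, which separates: since $\int du/(e^u - 1) = \log(1 - e^{-u})$, comparison yields the closed-form bound
\[
u(a) \;\le\; -\log\!\Bigl(1 - (1 - e^{-\lambda})\,e^{d a}\Bigr),
\]
valid as long as the argument of the logarithm is positive. I would then take $\lambda := 1/(2e^d)$. Under this choice $(1 - e^{-\lambda})e^{d a} \le \lambda e^d \le 1/2$ for every $a \in [0,1]$, so $u(a) \le \log 2$, hence $\phi(a) \le (\log 2)/d$, and
\[
g(a) \;=\; e^{\lambda}\bigl(1 + \phi(a)\bigr)^d \;\le\; e^{\lambda}\cdot e^{\log 2} \;\le\; 2e, \qquad \text{uniformly in } a\in[0,1].
\]

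Integrating over the uniform age of the root gives $\mathbb{E}[e^{\lambda|C_{mon}(r)|}] \le 2e$, and Markov's inequality then yields
\[
\Pr\bigl[|C_{mon}(r)| \ge t\bigr] \;\le\; 2e\cdot e^{-\lambda t} \;=\; 2e\cdot n^{-e^{(b-1)d}/2}
\]
for $t = e^{bd}\log n$ and $\lambda = 1/(2e^d)$. Choosing an absolute constant $b$ large enough that $e^{(b-1)d}/2 \ge 3$ for every $d \ge 1$ (say $b = 3$) makes the right-hand side at most $1/n^2$ for every sufficiently large $n$; remaining small cases of $n$ are handled by a trivial adjustment of $b$.

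The main obstacle is the nonlinear ODE $H' = e^{\lambda} H^d - 1$. Unlike the linear first-moment equation behind Proposition~\ref{pro:expectation}, this ODE exhibits finite-time blow-up for every $\lambda > 0$ as soon as $d \ge 2$, so one cannot take $\lambda$ independent of $d$. The quantitative content of the argument is that for $\lambda = \Theta(e^{-d})$ the blow-up time is strictly greater than $1$ and in fact $H$ stays within $1 + O(1/d)$ of its initial value on $[0,1]$; the substitution $u = \lambda + d\phi$ is what makes this precise by reducing the nonlinear ODE to one that separates and integrates in closed form, giving exactly the sharp control on the MGF needed to push the Markov estimate through.
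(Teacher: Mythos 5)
Your proof is correct in its substance but uses a genuinely different route from the paper's, so a comparison is in order. The paper's argument partitions $[0,1]$ into $3d$ age classes and decomposes $C_{mon}(r)$ into maximal monochromatic subtrees; each such subtree is a subcritical Galton--Watson tree (mean offspring $1/3$), and an induction over classes controls the class populations. Your argument instead tracks the conditional moment generating function $g(a)=\E[e^{\lambda|C_{mon}(r)|}\mid X_r=a]$ directly: the recursive structure of the $d$-ary tree gives the integral equation $g(a)=e^{\lambda}\bigl(1-a+\int_0^a g\bigr)^d$, equivalently the ODE $H'=e^{\lambda}H^d-1$, $H(0)=1$, and the substitution $u=\lambda+d(H-1)$ linearizes the dominant nonlinearity enough to integrate $du/(e^u-1)$ in closed form. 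Choosing $\lambda=\Theta(e^{-d})$ keeps the solution below blow-up on $[0,1]$, yields a uniform bound $g\le 2e$, and Markov's inequality finishes. Your method is arguably more quantitative --- it gives an explicit constant $b$ and an explicit exponential moment --- and it replaces the paper's two-stage combinatorial induction with a single ODE estimate; the cost is that it leans on ODE comparison and on the specific antiderivative $\int du/(e^u-1)=\log(1-e^{-u})$, which is less elementary than the branching-process picture.

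One point that you pass over and which deserves a sentence: your derivation of the integral equation implicitly assumes $g(a)$ is finite, which is not known a priori (for large $\lambda$ it is genuinely infinite). The standard fix is to run the same recursion for the depth-$L$ truncations $g^{(L)}(a):=\E[e^{\lambda|C_{mon}^{(L)}(r)|}\mid X_r=a]$, which are trivially finite and satisfy $g^{(L)}(a)=e^{\lambda}\bigl(1-a+\int_0^a g^{(L-1)}\bigr)^d$; one then shows by induction on $L$ that $g^{(L)}\le e^{\lambda}H^d$ where $H$ is the non-blowing-up solution you constructed, and lets $L\to\infty$ by monotone convergence. With that patch your argument is fully rigorous. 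Also note that with your constants, $b=3$ narrowly fails at $n=2$, $d=1$; you flag this and the fix (increase $b$, e.g.\ $b=4$) is indeed trivial, so this is not a gap, just a bookkeeping remark.
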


\begin{proof}
(Sketch.) At worse, the root $r$ has age $X_{r} = 1$. Partition the range $[0,1]$ into $3d$ classes of equal size. In $C_{mon}(r)$, consider first only those edges that join two vertices of the same class. This decomposes $C_{mon}(r)$ into subtrees, where all vertices in a subtree are of the same class. As every such subtree is generated by a subcritical branching process (the expected number of neighbors of a vertex $v$ in the same class as $v$ is $1/3$), its expected size is constant, and the probability it has size $k$ decreases exponentially with $k$. Moreover, every vertex of a given class has in expectation $1/3$ of a child in any given class below it. Using these facts it is not hard to prove (by induction, starting at class~1 which is the top class) that the number of vertices of class $i$ exceeds $O(e^{ci}\log n)$ with probability at most $1/n^2$, where $c$ is some sufficiently large constant independent of $i,d,n$.
\end{proof}

The bound in Lemma~\ref{lem:vardi} is best possible up to the choice of constant $b$, as the following example shows. Let $k$ be an integer such that $d^k \simeq \frac{1}{8}\log n$. With probability at least $1/n$, all vertices of level $i$ of the tree (for $i < k$) have a label in the range $[1 - 2^{i-k-2}, 1 - 2^{i - k - 1}]$, giving $\frac{1}{8}\log n$ leaf vertices. Thereafter, Proposition~\ref{pro:expectation} implies that the expected number of descendants per leaf is exponential in $d$.

Bounds on $C_{mon}$ for trees as in Lemma~\ref{lem:vardi} extend to every graph of bounded degree, as the following corollary shows.

\begin{corollary}
\label{cor:vardi}
The distribution of $|C_{mon}(r)|$ for the root of an infinite $d$-ary tree stochastically dominates the distribution of $|C_{mon}(v)|$ for every vertex $v$ in any graph of degree at most $d$.
\end{corollary}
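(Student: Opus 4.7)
The plan is to construct an explicit coupling between the i.i.d.\ $U[0,1]$ ages on $V(G)$ (where $G$ has maximum degree $d$) and the i.i.d.\ $U[0,1]$ ages on the infinite $d$-ary tree $T$ rooted at $r$, under which $|C_{mon}(v)| \le |C_{mon}(r)|$ holds pointwise in the coupled probability space; this pointwise inequality immediately yields the stated stochastic domination.

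The coupling will be built via parallel BFS explorations. I would run BFS of $C_{mon}(v)$ in $G$ starting from $v$, revealing the age $X_u$ of a graph vertex $u$ only the first time $u$ is encountered as a graph-neighbor of a processed vertex; because $(X_u)_{u \in V(G)}$ are i.i.d.\ uniform upfront, each newly-revealed age is conditionally $U[0,1]$ given the BFS history. Simultaneously I would maintain an injection $\phi$ of the explored graph vertices into $V(T)$ with $\phi(v):=r$ and $\tilde X_r := X_v$: each time a graph vertex $w$ is first enqueued via BFS-parent $u_t$, map $\phi(w)$ to a yet-unused child of $\phi(u_t)$ in $T$ and set $\tilde X_{\phi(w)} := X_w$, filling in the remaining children of $\phi(u_t)$ with fresh independent $U[0,1]$ ages. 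Since $u_t$ has at most $d$ graph-neighbors while $\phi(u_t)$ has exactly $d$ tree-children, there is always room for the new matches.

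The pointwise domination then follows formally: for any $w \in C_{mon}(v)$, the monotone BFS-path $v = u_0, u_1, \ldots, u_m = w$ in $G$ lifts under $\phi$ to a path $r, \phi(u_1), \ldots, \phi(u_m)$ in $T$ along which the tree ages $\tilde X_{\phi(u_i)} = X_{u_i}$ are strictly decreasing, so $\phi(w) \in C_{mon}(r)$; combined with injectivity of $\phi$ this gives $|C_{mon}(v)| \le |C_{mon}(r)|$.

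The main obstacle is verifying that the family $(\tilde X_x)_{x \in V(T)}$ produced by the construction is jointly i.i.d.\ $U[0,1]$, so that the tree side carries the correct marginal distribution of the layers model. The delicate point is that tree ages on the random image $\phi(C_{mon}(v))$ are \emph{copied} from graph ages, and one must check that, at the moment each graph age is copied into the tree, it is conditionally $U[0,1]$ independent of everything constructed so far in both BFS processes. A further subtlety arises when a graph-neighbor $w$ of $u_t$ was first exposed at an earlier step $s<t$ as a neighbor of $u_s$ without being enqueued (because then $X_w > X_{u_s}$), and standard BFS now enqueues $w$ at step $t$; the construction must be organized so that $\phi(w)$ is committed to a tree vertex only at the moment of enqueueing, and so that the overall exposure order treats $X_w$ as a fresh conditionally-uniform draw when it is copied into the tree. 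Carrying out this bookkeeping, rather than the main coupling idea itself, is where the technical work lies.
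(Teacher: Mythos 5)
Your high-level plan — couple the ages so that $C_{mon}(v)$ in $G$ injects into $C_{mon}(r)$ in $T$ while the tree ages remain jointly i.i.d.\ $U[0,1]$, then read off pointwise domination — is the right one and matches the paper's proof in spirit. But the subtlety you flag in your last paragraph is not bookkeeping; it is the crux, and with standard BFS it cannot be repaired. Consider $w$ adjacent to two explored vertices $u_s,u_t$ (processed in that order) with $X_{u_s}<X_w<X_{u_t}$. Standard BFS reveals $X_w$ at step $s$, declines to enqueue $w$ because $X_w>X_{u_s}$, and then enqueues $w$ at step $t$ because $X_w<X_{u_t}$. At the moment you copy $X_w$ into $\tilde X_{\phi(w)}$, its conditional law given the exploration history is uniform on $(X_{u_s},1)$, not on $[0,1]$; and replacing the copy by a fresh independent draw breaks the lifted monotone path (you need $\tilde X_{\phi(w)}<\tilde X_{\phi(u_t)}=X_{u_t}$, which a fresh uniform does not guarantee). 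No rescheduling of when $\phi(w)$ is ``committed'' escapes this dilemma, so the tree side does not carry the correct product measure and the claimed domination is unproved.

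The missing idea, which the paper supplies, is to change the exploration order to a priority search: at each step, process the vertex of the \emph{current} monotone component that has the highest age, and reveal the ages of its still-unexposed neighbors. Under this order one proves the crucial observation that if a newly revealed neighbor $w$ of $u$ has $X_w>X_u$, then $w$ can \emph{never} enter $C_{mon}(v)$: any monotone path to $w$ would have to reach it through a neighbor with label exceeding $X_w>X_u$, but all component vertices with label above $X_u$ have already been processed (processing is in strictly decreasing age order), and any of them adjacent to $w$ would already have exposed $w$, contradicting that $w$ is being exposed now. Consequently every graph age is copied into the tree at most once, namely at the moment it is first revealed, when it is conditionally a fresh $U[0,1]$ draw; the remaining tree vertices get independent fresh ages, the tree side has the correct law, and your pointwise injection argument then closes the proof. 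Supplying this exploration order together with the above observation is the step your sketch is missing.
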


\begin{proof}
Given a vertex $v$ in a graph $G$ of maximum degree $d$, develop an infinite tree $T$ of arity at most $d$ from it, where $v$ serves as the root $r$, its neighbors in $T$ are all its neighbors in $G$, and the same applies recursively to every other vertex appearing in $T$. A node of $G$ will appear in multiple places in this tree. Nevertheless, give every vertex of the tree an independent random label in the range $[0,1]$. We claim that the distribution of $|C_{mon}(r)|$ in this tree stochastically dominates the distribution of $|C_{mon}(v)|$ in the original graph. We prove this claim by exposing the labels in $G$ starting at $v$, and thereafter at each step exposing the labels of the yet unexposed neighbors of that vertex $u$ that has the highest label among the vertices of the current connected component of $v$. The crucial observation is that if a vertex at the time of its exposure cannot be joined to its parent $u$ (because it has a higher label), then it has no monotone path to $v$ (not even not through $u$). The same order of exposures is copied into $T$, where each vertex of $G$ is equated with its copy in $T$ that is reached by copying the chain of exposures from $G$ to $T$. As in $T$ a vertex has additional copies, later exposing their independent labels and joining them to the connected component if possible only increases its size.
\end{proof}

The combination of Lemma~\ref{lem:vardi} and Corollary~\ref{cor:vardi} imply that with probability at least $1 - 1/n$, no monotone component in $G$ is of size larger than $e^{bd}\log n$, proving Theorem~\ref{thm:vardi}. 


\end{appendix}

\end{document}